\newtheorem{theorem}{Theorem}[section]
\newtheorem{lemma}[theorem]{Lemma}
\newtheorem{prop}[theorem]{Proposition}
\newtheorem{cor}[theorem]{Corollary}
\newcommand{\Clust}{\mathrm{\bf Clust}}
\theoremstyle{definition}
\newtheorem{definition}[theorem]{Definition}
\renewcommand{\setminus}{\smallsetminus}
\renewcommand{\le}{\leqslant}
\renewcommand{\ge}{\geqslant}
\renewcommand{\leq}{\leqslant}
\renewcommand{\geq}{\geqslant}
\renewcommand{\subset}{\subseteq}
\renewcommand{\supset}{\supseteq}
\renewcommand{\epsilon}{\varepsilon}
\newcommand{\abs}[1]{\left|#1\right|}                   
\newcommand{\absf}[1]{|#1|}                             
\newcommand{\vnorm}[1]{\left|\left|#1\right|\right|}    
\newcommand{\vnormf}[1]{||#1||}                         
\newcommand{\Z}{\mathbb{Z}}                             
\newcommand{\N}{\mathbb{N}}
\newcommand{\E}{\mathbb{E}}
\newcommand{\R}{\mathbb{R}}
\newcommand{\figoneawidth}{.4\textwidth}                
\newcommand{\figonebwidth}{.41\textwidth}
\newcommand{\figtwowidth}{.4\textwidth}
\newcommand{\e}{\varepsilon}
\renewcommand{\epsilon}{\varepsilon}
\renewcommand{\colonequals}{=}
\begin{document}

\title{Solution of the propeller conjecture in $\R^3$}
\thanks{S.~H. and A.~J. were supported by NSF grant CCF-0832795 and NSF Graduate
Research Fellowship  DGE-0813964.  A.~N. was supported
by NSF grant CCF-0832795, BSF grant 2006009, and the
Packard Foundation. Part of this work was carried out when S.~H. and A.~N. were visiting the Quantitative Geometry program at MSRI}
\author{Steven Heilman}
\address{Courant Institute, New York University, New York NY 10012}
\email{heilman@cims.nyu.edu}
\author{Aukosh Jagannath}
\address{Courant Institute, New York University, New York NY 10012}
\email{asj260@nyu.edu}
\author{Assaf Naor}
\address{Courant Institute, New York University, New York NY 10012}
\email{naor@cims.nyu.edu}

\maketitle
\begin{abstract}
It is shown that every measurable partition $\{A_1,\ldots, A_k\}$ of $\R^3$ satisfies
\begin{equation}\label{eq:abs}
\sum_{i=1}^k\left\|\int_{A_i} xe^{-\frac12\|x\|_2^2}dx\right\|_2^2\le 9\pi^2.
\end{equation}
Let $\{P_1,P_2,P_3\}$ be the partition of $\R^2$ into $120^\circ$
 sectors centered at the origin. The bound~\eqref{eq:abs} is sharp, with equality holding if   $A_i=P_i\times \R$ for $i\in \{1,2,3\}$ and $A_i=\emptyset$ for $i\in \{4,\ldots,k\}$ (up to measure zero corrections, orthogonal transformations and renumbering of the sets $\{A_1,\ldots,A_k\}$). This settles positively the $3$-dimensional Propeller Conjecture of Khot and Naor (FOCS 2008). The proof of~\eqref{eq:abs} reduces the problem to a finite set of numerical
inequalities which are then verified with full rigor in a computer-assisted fashion. The main consequence (and motivation) of~\eqref{eq:abs} is complexity-theoretic: the Unique Games hardness threshold of the Kernel Clustering problem with $4\times 4$ centered and spherical hypothesis matrix equals $\frac{2\pi}{3}$.
\end{abstract}

\begin{figure}[h]\label{fig:manor}
\begin{center}\includegraphics[scale=2.7]{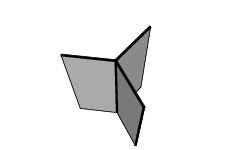}
\end{center}
 \caption{{{\em  The partition of $\R^3$ that maximizes the sum of squared lengths of Gaussian moments is a ``propeller": three planar $120^\circ$
 sectors multiplied by an orthogonal line, with the rest of the
   partition elements being empty. }}} \label{fig:midpoints}
\end{figure}

\newpage
\section{Introduction}\label{secintro}

For $x=(x_1,\ldots,x_m),y=(y_1,\ldots,y_m)\in \R^m$ let $\langle x,y\rangle =\sum_{i=1}^m x_i y_i$ denote their standard scalar product, and let $\|x\|_2=\sqrt{\langle x,x\rangle}$ denote the corresponding Euclidean norm. The cross product of $x=(x_1,x_2,x_3),y=(y_1,y_2,y_3)\in\R^3$ is denoted $x\times y=(x_2y_3-x_3y_2,x_3y_1-x_1y_3,x_1y_2-x_2y_1)$. The Gaussian measure on $\R^m$, i.e., the measure whose density is $x\mapsto (2\pi)^{-m/2}e^{-\|x\|_2^2/2}$, is denoted $\gamma_m$.

The following theorem is our main result, asserting that among all measurable partitions of $\R^3$, the ``propeller partition'' as depicted in Figure~\ref{fig:midpoints} maximizes the sum of the squared lengths of the Gaussian moments associated to the members of the partition.

\begin{theorem}[Main theorem; geometric formulation]\label{thm:main}
Let $\{A_1,\ldots,A_k\}$ be a partition of $\R^3$ into Lebesgue
measurable sets. For $i\in \{1,\ldots,k\}$ let $\zeta_i=\int_{A_i}
xd\gamma_3(x)\in \R^3$ be the Gaussian moment of the set $A_i$. Then
\begin{equation}\label{eq:z_i thm} \sum_{i=1}^k \|\zeta_i\|_2^2\le
\frac{9}{8\pi}.
\end{equation}
Let $\{P_1,P_2,P_3\}$ be the partition of $\R^2$ into $120^\circ$
 sectors centered at the origin. The bound~\eqref{eq:z_i thm} cannot be improved, with equality holding if   $A_i=P_i\times \R$ for $i\in \{1,2,3\}$ and $A_i=\emptyset$ for $i\in \{4,\ldots,k\}$ (up to measure zero corrections, orthogonal transformations, and renumbering of the sets $\{A_1,\ldots,A_k\}$).
\end{theorem}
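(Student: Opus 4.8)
The plan is to reduce this infinite-dimensional partitioning problem to a finite-dimensional one, and then to a finite set of numerical inequalities. First I would apply a ``propeller-type'' symmetrization/compactness argument: replace the moments $\zeta_i = \int_{A_i} x\, d\gamma_3(x)$ by the observation that only finitely many of them can be nonzero in an optimal configuration, and that on the boundary between two parts $A_i, A_j$ the optimality condition forces the separating surface to be a hyperplane through the origin perpendicular to $\zeta_i - \zeta_j$. This is the standard first-variation calculation: perturbing the boundary between $A_i$ and $A_j$ shows that the level set $\{x : \langle x, \zeta_i - \zeta_j\rangle = \langle x, \zeta_i - \zeta_j\rangle\}$ must separate them, so an extremal partition is a ``simplicial cone'' partition determined by finitely many unit vectors (a weighted Voronoi-type / radial partition). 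Combined with an upper bound on the number of nonempty parts (here at most $k \le$ some small constant, morally $4$ because we are in $\R^3$), this collapses the problem to optimizing a function of finitely many real parameters.

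Next I would exploit the product structure suggested by the conjectured extremizer $P_i \times \R$. The key structural lemma to establish is that an optimal partition of $\R^3$ must be of the form (2-dimensional partition)$\,\times\,\R$, i.e., the third coordinate is irrelevant; this follows because integrating out one Gaussian coordinate can only help, via a convexity/Jensen argument applied to $\|\cdot\|_2^2$ on the moment vectors. Having reduced to $\R^2$, the moments live in $\R^2$ and the partition is into finitely many sectors $\{P_1, \ldots, P_m\}$ centered at the origin with angles $\theta_1 + \cdots + \theta_m = 2\pi$. The moment of a sector of half-angle $\theta_j/2$ opening in direction $\phi_j$ has squared length proportional to $\sin^2(\theta_j/2)$, so the quantity to maximize becomes $\frac{1}{2\pi}\sum_{j=1}^m \sin^2(\theta_j/2)$ subject to $\sum \theta_j = 2\pi$ and (crucially) the constraint that the sector moments actually realize a valid partition, which imposes that $\sum_j \sin(\theta_j/2)\, e^{i\phi_j}$-type relations hold --- equivalently, no direction is over-counted.

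The hard part, and where the ``computer-assisted'' language in the abstract enters, is that the reduction to sectors in $\R^2$ does not immediately give a clean one-variable optimization: the interplay between the angles $\theta_j$, the directions $\phi_j$, and the requirement that the configuration be genuinely extremal (not just critical) produces a system of polynomial/trigonometric inequalities in several variables that must be checked to have their maximum exactly at the symmetric three-sector configuration $\theta_1 = \theta_2 = \theta_3 = \frac{2\pi}{3}$, giving $\frac{1}{2\pi} \cdot 3 \sin^2(\pi/3) = \frac{1}{2\pi}\cdot\frac{9}{4} = \frac{9}{8\pi}$. I expect the main obstacle to be ruling out all the ``near-miss'' critical configurations --- e.g.\ two equal large sectors plus small ones, or asymmetric four-sector partitions --- which requires carefully bounding the objective on a compact region of parameter space; the bound $\frac{9}{8\pi}$ is tight, so there is no slack, and one must partition the parameter space into finitely many boxes and verify a strict inequality on each box with rigorous (interval-arithmetic) numerics, handling the symmetric case separately by an exact argument.

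For the uniqueness/equality clause, once the objective is maximized only at $\theta_1 = \theta_2 = \theta_3 = \frac{2\pi}{3}$ among sector partitions of $\R^2$, I would trace back through the two reductions: the product-reduction step is an equality in Jensen only when the third coordinate is already split trivially (up to measure zero), so the extremizer in $\R^3$ is forced to be $P_i \times \R$; the orthogonal-transformation and renumbering ambiguities are exactly the symmetries of the problem, and ``up to measure zero corrections'' absorbs the freedom in choosing representatives of the $A_i$. Thus equality in~\eqref{eq:z_i thm} holds precisely for the stated propeller partitions.
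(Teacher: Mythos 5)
Your outline gets several high-level features right: the reduction to at most four parts, the first-variation argument showing that an extremal partition is a radial (conical) partition with boundaries perpendicular to $\zeta_i-\zeta_j$, and the idea of finishing via a compact parameter space and rigorous numerics. These ingredients do appear (mostly imported from [KN08]) in the paper's proof.

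However, the central step in your plan — the claim that an optimal partition of $\R^3$ must have the form (planar partition)$\times\R$, argued ``because integrating out one Gaussian coordinate can only help, via a convexity/Jensen argument'' — is a serious gap, and in fact it is essentially the entire content of the theorem. Jensen applied to $\|\cdot\|_2^2$ gives $\|\int f\|_2^2 \le \int\|f\|_2^2$, which bounds $\|\zeta_i\|_2^2$ \emph{above} by an average of $t$-slice quantities, and those $t$-slices are different 2D partitions for different $t$, plus a nontrivial third-coordinate contribution $t\gamma_2(B_i^{(t)})$ to the moment. This does not produce a competitor product partition whose objective dominates, and there is no reason a naive averaging argument should: the regular simplex partition of $\R^3$ into four congruent cones (from the vertices of an inscribed tetrahedron, i.e.\ $\theta_{ij}=\arccos(-1/3)$ for all $i\neq j$) is a genuine critical point of the objective that is \emph{not} a product, and ruling it out is exactly what makes the 3D case hard. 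Indeed the paper explicitly emphasizes that the degeneracy of the maximizer to a product (with one of the four parts empty) is the ``surprising'' and nonintuitive phenomenon being verified; had there been a one-line symmetrization reducing $\R^3$ to $\R^2$, the full conjecture in every dimension would follow immediately by induction, and the authors would not have needed a 40-page argument for the $m=3$ case. The paper's actual route never reduces to $\R^2$: it works with partitions of $S^2$ into four spherical triangles, uses Minchin's formula for the moment of a spherical triangle, proves the identity $F(P)=\sum_{i<j}\theta_{ij}^2$ together with several constraints (Lemmas~\ref{prop7}, \ref{lemma4}, \ref{thm1}, \ref{lemma7}, \ref{rk5}), parametrizes the search by $(\theta_{12},\theta_{13},\theta_{23})$, establishes modulus-of-continuity bounds for the system $H=0$, and then traverses a net of that three-dimensional region. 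The product structure of the extremizer is the \emph{conclusion}, not an intermediate lemma.

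A smaller issue: your $\R^2$ analysis and the value $\frac{1}{2\pi}\cdot 3\sin^2(\pi/3)=\frac{9}{8\pi}$ are correct, but this was already proved in [KN08]; citing it would not constitute new work. Also, ``the constraint that the sector moments actually realize a valid partition'' is vacuous in $\R^2$ — any disjoint sectors summing to $2\pi$ form a partition — so that part of your plan is a red herring; the real constraints that enter (the analogue of the paper's \eqref{three1}) come from criticality, not from realizability.
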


The Propeller Conjecture was posed by Khot and Naor in~\cite{KN08}
as part of their investigation of the computational complexity of
the Kernel Clustering problem from Machine Learning. Specifically,
they conjectured the validity of the bound~\eqref{eq:z_i thm} for
measurable partitions of $\R^m$ for all $m>2$. They proved this
conjecture for $m=2$, and for $m=1$ they showed that the right hand side of~\eqref{eq:z_i thm} can be improved
to the sharp bound $\frac{1}{\pi}$. It is also shown in~\cite{KN08}
that proving~\eqref{eq:z_i thm} for $k=4$ implies the same
conclusion for all $k\in \N$. Correspondingly, for partitions of
$\R^m$ it suffices to prove this statement for $k=m+1$.

Before explaining the complexity-theoretic consequence of Theorem~\ref{thm:main}, which was the motivation of~\cite{KN08} for posing the Propeller Conjecture, we state two equivalent formulations of it: the first probabilistic (a sharp estimate for the expected maximum of a Gaussian vector) and the second analytic (a sharp Grothendieck inequality). The equivalence of these results to Theorem~\ref{thm:main} was established in~\cite{KN08}; for Theorem~\ref{thm:prob} below see~\cite[Lem.~3.7]{KN08} and  for Theorem~\ref{thm:gro} below see~\cite[Thm.~1.1]{KN08}.

\begin{theorem}[Main theorem; probabilistic formulation]\label{thm:prob} Let $(g_1,g_2,g_3,g_4)\in \R^4$ be a mean zero Gaussian vector (with arbitrary covariance matrix). then
\begin{equation}\label{eq:max gaussian}
\left|\E\left[\max_{i\in \{1,2,3,4\}}g_i\right]\right|\le
\frac{3}{2\sqrt{2\pi}}\sqrt{\sum_{i=1}^4\E\left[g_i^2\right]}.
\end{equation}
The bound~\eqref{eq:max gaussian} cannot be improved, with equality holding when the covariance matrix of $(g_1,g_2,g_3,g_4)$ equals
$$
\frac12\begin{pmatrix}
  2 & -1&-1&0 \\
   -1 & 2&-1&0\\
   -1&-1&2&0\\
   0&0&0&0
   \end{pmatrix}.
   $$
\end{theorem}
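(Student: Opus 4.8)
The plan is to deduce Theorem~\ref{thm:prob} from the geometric formulation, Theorem~\ref{thm:main}; the passage between the two statements is exactly~\cite[Lem.~3.7]{KN08}, and I would organize it as follows. Any mean zero Gaussian vector $(g_1,g_2,g_3,g_4)$ can be realized as $(\langle v_1,G\rangle,\langle v_2,G\rangle,\langle v_3,G\rangle,\langle v_4,G\rangle)$, where $G$ is a standard Gaussian vector in $\R^4$ and $v_1,\ldots,v_4\in\R^4$ are vectors whose Gram matrix is the covariance matrix of $(g_1,\ldots,g_4)$; thus $\E[g_i^2]=\|v_i\|_2^2$ and
\[
\E\bigl[\max_{1\le i\le 4}g_i\bigr]=\int_{\R^4}\max_{1\le i\le 4}\langle v_i,x\rangle\,d\gamma_4(x).
\]
I would then consider the partition $\{A_1,A_2,A_3,A_4\}$ of $\R^4$ given by $A_i=\{x\in\R^4:\ \langle v_i,x\rangle\ge\langle v_j,x\rangle\text{ for all }j\}$, the pairwise overlaps being Lebesgue-null and redistributed arbitrarily. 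On $A_i$ the maximum above equals $\langle v_i,x\rangle$, so with $\zeta_i=\int_{A_i}x\,d\gamma_4(x)$ (the Gaussian moments, as in Theorem~\ref{thm:main}) we get $\E\bigl[\max_{1\le i\le4}g_i\bigr]=\sum_{i=1}^4\langle v_i,\zeta_i\rangle$.

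Next I would apply the Cauchy--Schwarz inequality twice:
\begin{align*}
\sum_{i=1}^4\langle v_i,\zeta_i\rangle&\le\sum_{i=1}^4\|v_i\|_2\|\zeta_i\|_2\le\biggl(\sum_{i=1}^4\|v_i\|_2^2\biggr)^{1/2}\biggl(\sum_{i=1}^4\|\zeta_i\|_2^2\biggr)^{1/2}\\
&=\biggl(\sum_{i=1}^4\E[g_i^2]\biggr)^{1/2}\biggl(\sum_{i=1}^4\|\zeta_i\|_2^2\biggr)^{1/2},
\end{align*}
so that it remains to show $\sum_{i=1}^4\|\zeta_i\|_2^2\le\frac{9}{8\pi}$. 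Here is the one point requiring care: a priori the partition $\{A_i\}$ lives in $\R^4$, whereas Theorem~\ref{thm:main} is stated in $\R^3$. However, $A_i$ is cut out by the halfspaces $\{\langle v_i-v_j,x\rangle\ge0\}$, hence depends only on the orthogonal projection of $x$ onto $W:=\mathrm{span}\{v_2-v_1,v_3-v_1,v_4-v_1\}$, and $\dim W\le 3$. Writing $x=y+z$ with $y\in W$, $z\in W^\perp$, and using that $\gamma_4$ is a product measure on $W\oplus W^\perp$ whose mean over $W^\perp$ vanishes, Fubini's theorem gives $\zeta_i=\int_{\tilde A_i}y\,d\gamma_W(y)\in W$, where $\gamma_W$ denotes the standard Gaussian on $W$ and $\{\tilde A_1,\ldots,\tilde A_4\}$ is the partition of $W$ induced by $\{A_i\}$. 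Since $\dim W\le 3$, embedding $W$ isometrically into $\R^3$ (which changes neither the partition nor its moments) and applying Theorem~\ref{thm:main} with $k=4$ yields $\sum_{i=1}^4\|\zeta_i\|_2^2\le\frac{9}{8\pi}$. Combining this with the displays above, using $\sqrt{9/(8\pi)}=3/(2\sqrt{2\pi})$, and noting that $\E\bigl[\max_{1\le i\le4}g_i\bigr]\ge\E[g_1]=0$ so that the absolute value on the left-hand side of~\eqref{eq:max gaussian} is harmless, proves~\eqref{eq:max gaussian}.

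For the sharpness statement, the given covariance matrix is the Gram matrix of the vectors $v_1,v_2,v_3$ taken to be three unit vectors in a plane with pairwise angles $120^\circ$ (so $\langle v_i,v_j\rangle=-\frac12$ for $i\ne j$) together with $v_4=0$; then $\sum_{i=1}^4\E[g_i^2]=3$. Because $v_1+v_2+v_3=0$, the function $x\mapsto\max_i\langle v_i,x\rangle$ is everywhere nonnegative, so $A_4=\emptyset$ and $\{A_1,A_2,A_3\}$ is precisely the planar propeller $\{P_1,P_2,P_3\}$ of $120^\circ$ sectors. By symmetry $\zeta_i$ is a positive multiple of $v_i$ for $i\in\{1,2,3\}$ and $\|\zeta_1\|_2=\|\zeta_2\|_2=\|\zeta_3\|_2$, so both Cauchy--Schwarz steps become equalities; and $\sum_{i=1}^4\|\zeta_i\|_2^2=\frac{9}{8\pi}$ is the equality case of Theorem~\ref{thm:main}. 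Hence both sides of~\eqref{eq:max gaussian} equal $\frac{3}{2\sqrt{2\pi}}\sqrt{3}$, which is the claimed sharpness.

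Given Theorem~\ref{thm:main} I do not expect a serious obstacle in this argument: it consists of two applications of Cauchy--Schwarz and an explicit evaluation of the propeller example. The only step that genuinely requires attention is the dimension reduction in the second paragraph --- verifying that the maximizing cells $A_i$ depend on at most a $3$-dimensional subspace, so that Theorem~\ref{thm:main} in its stated form ($\R^3$) applies; this reduction, and the full equivalence with the geometric and analytic formulations, is worked out in~\cite[Lem.~3.7]{KN08}.
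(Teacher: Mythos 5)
Your argument is correct and is exactly the route the paper takes: the paper does not prove Theorem~\ref{thm:prob} in-line but rather derives it from Theorem~\ref{thm:main} by citing \cite[Lem.~3.7]{KN08}, and your write-up is a faithful reconstruction of that reduction (Gram-matrix realization, max-partition, two applications of Cauchy--Schwarz, Fubini-based drop to the at-most-three-dimensional span of the differences $v_i-v_1$, and the explicit propeller covariance for sharpness). No gap; the dimension-reduction step you flag as the delicate point is handled correctly.
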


The following analytic formulation of Theorem~\ref{thm:main} is in terms of a sharp Grothendieck inequality. Grothendieck~\cite{Gro53} pioneered in 1953 the use of inequalities of this type in functional analysis, and ever since then such inequalities have permeated many mathematical disciplines. See~\cite{KN08,KN10,KN11} for an explanation of how the result below relates to a natural extension of the Grothendieck inequality and to approximation algorithms.

\begin{theorem}[Main theorem; analytic formulation]\label{thm:gro}
Let $(a_{ij})$  be an $n\times n$ positive semidefinite matrix with $\sum_{i=1}^{n}\sum_{j=1}^{n}a_{ij}=0$. For
every  $\{v_1,v_2,v_3,v_4\}\subseteq S^3$ with
$\sum_{i=1}^4v_i=0$ we have
\begin{equation}\label{eq:our gro}
\max_{x_1,\ldots,x_n\in S^{n-1}} \sum_{i=1}^n\sum_{j=1}^n a_{ij}
\langle x_i,x_j\rangle \le \frac{2\pi}{3}\max_{y_1,\ldots,y_n\in
\{v_1,v_2,v_3,v_4\}} \sum_{i=1}^n\sum_{j=1}^n a_{ij} \langle
y_i,y_j\rangle.
\end{equation}
The bound~\eqref{eq:our gro} cannot be improved, with asymptotic equality holding  when $\{v_1,v_2,v_3,v_4\}$ are the rows of the following matrix.
$$
\frac{1}{2\sqrt{3}}\begin{pmatrix}
  3 & -1&-1&-1 \\
   -1 & 3&-1&-1\\
   -1&-1&3&-1\\
   -1&-1&-1&3
   \end{pmatrix}.
   $$
See~\cite[Sec.~3.1]{KN10} for a description of a family of $n\times n$ matrices $(a_{ij})$  for which equality in~\eqref{eq:our gro} is asymptotically attained (as $n\to \infty$).
\end{theorem}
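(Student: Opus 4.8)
The plan is to derive Theorem~\ref{thm:gro} from Theorem~\ref{thm:main} (equivalently, from Theorem~\ref{thm:prob}), following the route of~\cite{KN08}. Two things must be shown: the inequality~\eqref{eq:our gro}, and its optimality. I expect the inequality to be the soft part --- a Gaussian rounding argument that invokes only the \emph{explicit} propeller partition and its moments, not the maximality asserted by Theorem~\ref{thm:main} --- while optimality is the real content and is, in effect, equivalent to Theorem~\ref{thm:main}.

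For the inequality~\eqref{eq:our gro}, fix positive semidefinite $(a_{ij})$ with $\sum_{i,j}a_{ij}=0$, fix $\{v_1,v_2,v_3,v_4\}\subseteq S^3$ with $\sum_{a=1}^4v_a=0$, and let $x_1,\dots,x_n\in S^{n-1}$ be arbitrary. Let $Z$ be a $3\times n$ matrix of i.i.d.\ standard Gaussians, set $g_i=Zx_i\in\R^3$ (so each $g_i\sim\gamma_3$ and the pair $(g_i,g_j)$ has coordinatewise correlation $\langle x_i,x_j\rangle$), let $A_a=P_a\times\R$ for $a\in\{1,2,3\}$ and $A_4=\emptyset$ be the propeller partition with Gaussian moments $\zeta_a=\int_{A_a}xd\gamma_3(x)$, and round by putting $y_i:=v_a$ when $g_i\in A_a$. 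Then the combinatorial maximum on the right-hand side of~\eqref{eq:our gro} is at least $\E\sum_{i,j}a_{ij}\langle y_i,y_j\rangle=\sum_{i,j}a_{ij}F(\langle x_i,x_j\rangle)$, where $F(\rho)=\sum_{a,b=1}^4\langle v_a,v_b\rangle\Pr[g_1\in A_a,\,g_2\in A_b]$ for $(g_1,g_2)$ of coordinatewise correlation $\rho$. By Mehler's formula $F$ expands as $F(\rho)=\sum_{k\ge0}c_k\rho^k$, in which every $c_k$ is a pairing of the positive semidefinite Gram matrix $(\langle v_a,v_b\rangle)_{a,b}$ with a Gram matrix of degree-$k$ Hermite projections of $\mathbf 1_{A_1},\dots,\mathbf 1_{A_4}$ --- hence $c_k\ge0$ --- and in which $c_1=\sum_{a,b=1}^4\langle v_a,v_b\rangle\langle\zeta_a,\zeta_b\rangle$, since the degree-one Hermite coefficients of $\mathbf 1_{A_a}$ are the coordinates of $\zeta_a$. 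Therefore $\rho\mapsto F(\rho)-c_1\rho$ again has nonnegative power-series coefficients, so $\bigl(F(\langle x_i,x_j\rangle)-c_1\langle x_i,x_j\rangle\bigr)_{i,j}$ is positive semidefinite (a nonnegative combination of Hadamard powers of the positive semidefinite matrix $(\langle x_i,x_j\rangle)_{i,j}$), and the Schur product theorem yields $\sum_{i,j}a_{ij}\bigl(F(\langle x_i,x_j\rangle)-c_1\langle x_i,x_j\rangle\bigr)\ge0$. Consequently $\sum_{i,j}a_{ij}\langle x_i,x_j\rangle\le c_1^{-1}\max_{y_1,\dots,y_n\in\{v_1,\dots,v_4\}}\sum_{i,j}a_{ij}\langle y_i,y_j\rangle$, and it remains to evaluate $c_1$. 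Here the \emph{explicit} propeller moments enter: $\zeta_4=0$, while $\zeta_1,\zeta_2,\zeta_3$ lie at mutual angles $120^\circ$ in a plane with $\|\zeta_1\|_2^2=\|\zeta_2\|_2^2=\|\zeta_3\|_2^2=\tfrac{3}{8\pi}$, so $c_1=\tfrac{3}{8\pi}\bigl(3-\tfrac12\sum_{a\ne b\le3}\langle v_a,v_b\rangle\bigr)$; and $\sum_{a\ne b\le3}\langle v_a,v_b\rangle=\|v_1+v_2+v_3\|_2^2-3=\|v_4\|_2^2-3=-2$ because $\sum_av_a=0$, so $c_1=\tfrac{3}{8\pi}\cdot4=\tfrac{3}{2\pi}$ for \emph{every} admissible $\{v_a\}$. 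This gives~\eqref{eq:our gro}.

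For optimality I would use the family of~\cite[Sec.~3.1]{KN10}: a suitably weighted discretization on $S^{n-1}$ of the kernel $(x,y)\mapsto\langle x,y\rangle$, which is automatically zero-sum. Its semidefinite maximum is attained by the identity configuration $x_i\mapsto x_i$, while its combinatorial maximum over the regular simplex $\{v_1,\dots,v_4\}$ displayed in the theorem equals $\tfrac43$ times the largest sum of squared moments of a partition of $S^{n-1}$ into four cells. As $n\to\infty$ this quantity, rescaled by $n$, converges (via a central-limit/Poincar\'e comparison of the uniform measure on $\sqrt n\,S^{n-1}$ with a Gaussian) to the corresponding Gaussian maximum, which by Theorem~\ref{thm:main} equals $\tfrac43\cdot\tfrac{9}{8\pi}=\tfrac{3}{2\pi}$; hence the ratio in~\eqref{eq:our gro} tends to $\tfrac{2\pi}{3}$, with extremizers the rows of the displayed matrix. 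Reading the same chain of equalities in reverse shows that sharpness of~\eqref{eq:our gro} forces the constant in~\eqref{eq:z_i thm} to be precisely $\tfrac{9}{8\pi}$, which is why Theorem~\ref{thm:gro} and Theorem~\ref{thm:main} are equivalent, as established in~\cite{KN08}.

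The step I expect to be the main obstacle is not the reduction above. The inequality is soft, and the transference between the spherical and Gaussian partition problems in the optimality part, though it needs care, is standard and already carried out in~\cite{KN08,KN10}. The genuine difficulty is that this reduction deposits everything into Theorem~\ref{thm:main}: the constant $\tfrac{2\pi}{3}$ in~\eqref{eq:our gro} is nothing but $\bigl(\tfrac43\cdot\tfrac{9}{8\pi}\bigr)^{-1}$, so pinning it down as the Unique Games hardness threshold is exactly the problem of identifying the propeller as the maximizer in~\eqref{eq:z_i thm}, and any slack there propagates verbatim into~\eqref{eq:our gro}. That is the content of Theorem~\ref{thm:main}, and verifying it --- by reducing it to a finite family of rigorously checkable numerical inequalities --- is what the rest of the paper does and what this proposal rests on.
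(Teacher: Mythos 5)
Your proposal is correct and takes essentially the same route the paper does: treat inequality~\eqref{eq:our gro} as the soft part (the Gaussian-rounding/Mehler/Schur-product argument you spell out is exactly the content of~\cite[Thm.~1.1]{KN08}, which the paper simply cites), and reduce the optimality claim to Theorem~\ref{thm:main} via the transference to spherical partitions and the family of~\cite[Sec.~3.1]{KN10}, which is precisely the equivalence the paper invokes from~\cite{KN08}. The computation $c_1=\tfrac{3}{2\pi}$ for every admissible $\{v_a\}$ (using $\|\zeta_a\|_2^2=\tfrac{3}{8\pi}$, $\zeta_4=0$, and $\sum_a v_a=0$) checks out, as does the factor $\tfrac43$ in the optimality sketch.
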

Inequality~\eqref{eq:our gro} was proved in~\cite{KN08}. Our new contribution is the assertion that~\eqref{eq:our gro} cannot be improved, a statement that is equivalent to inequality~\eqref{eq:z_i thm} of Theorem~\ref{thm:main}. This is not the first time that attempts to prove sharpness of a Grothendieck inequality led to an extremal geometric partitioning question. Notably, see K\"onig's conjecture~\cite{Kon00} as a step towards Krivine's conjecture~\cite{Krivine77} on the sharpness of his version of the classical Grothendieck inequality. Unlike the Propeller Conjecture in $\R^3$, these conjectures turned out to be false~\cite{BMMN11}, but they do indicate the interconnection between Grothendieck inequalities and extremal geometric partitioning problems.

The main consequence (and motivation) of Theorem~\ref{thm:main} is complexity theoretic. To explain it we briefly recall Khot's
Unique Games Conjecture (UGC), which asserts
 that for every $\e\in
(0,1)$ there exists a prime $p=p(\e)\in \N$ such that no polynomial
time algorithm can perform the following task. The input is a system
of $m$ linear equations in $n$ variables $x_1,\ldots,x_n$, each of
which has the form $x_i-x_j\equiv c_{ij}\mod p$. The algorithm must determine
whether there exists an assignment of an integer value to each
variable $x_i$ such that at least $(1-\e)m$ of the equations are
satisfied, or no assignment of such values can satisfy more
than $\e m$ of the equations. If neither of these possibilities
occurs, then an arbitrary output of the algorithm is allowed. The UGC was introduced by Khot in~\cite{Khot02}, though the above
formulation of it, which is equivalent to the original one, is due
to~\cite{KKMO07}. The use of the UGC as a hardness hypothesis has
become popular over the past decade; we refer to Khot's
survey~\cite{Kho10} for more information on this topic. Saying that the UGC hardness threshold of an optimization problem $\mathscr O$ equals $\alpha\in [1,\infty)$ means that for every $\e\in (0,1)$ there exists a polynomial time algorithm that outputs a number that is guaranteed to be within a multiplicative factor of $\alpha+\e$ from the solution of $\mathscr O$, and that the existence of such an algorithm with approximation guarantee of $\alpha-\e$ would contradict the UGC.

The Kernel Clustering problem is a clustering framework for covariance matrices that originated in the work of Borgwardt, Gretton, Smola and Song~\cite{SSGB07} in the context of Machine Learning. This problem is generic in the sense that it contains well-studied optimization problems as special cases, and its versatility allows one to design a variety of algorithms tailor-made for particular applications (many of these algorithms are at present shown~\cite{SSGB07} to be successful empirically, but not rigorously). The input of the Kernel Clustering problem is an $n\times n$ symmetric positive semidefinite matrix $A=(a_{ij})$ and a $k\times k$ symmetric positive semidefinite matrix $B=(b_{ij})$ called the hypothesis matrix. Think of $n$ as very large and $k$ as small, the goal being to cluster the entries of $A$ into a $k\times k$ matrix that is most correlated with the hypothesis matrix $B$. Formally, given a partition $\{S_1,\ldots,S_k\}$ of $\{1,\ldots,n\}$, form the associated clustered version of $A$ by summing the entries of $A$ over the blocks induced by the partition $\{S_1,\ldots,S_k\}$. One thus obtains a $k\times k$ matrix $C=(c_{ij})$ given by $c_{ij}=\sum_{(s,t)\in S_i\times S_j} a_{st}$. Let $\Clust(A|B)$ denote the maximum of $\sum_{i=1}^k\sum_{j=1}^k c_{ij}b_{ij}$ over all partitions $\{S_1,\ldots,S_k\}$ of $\{1,\ldots,n\}$. We refer to~\cite{SSGB07,KN08,KN10,KN11} for further explanation of this clustering framework, as well as a discussion of important special cases arising from appropriate choices of the hypothesis matrix $B$.

In what follows, an $n\times n$ symmetric positive semidefinite matrix $A=(a_{ij})$ is called centered if $\sum_{i=1}^n\sum_{j=1}^n a_{ij}=0$, and it is called spherical if $a_{ii}=1$ for all $i\in \{1,\ldots,n\}$.

\begin{theorem}[Main theorem; complexity theoretic formulation]\label{thm:kernel}
Let $\mathscr O$ be the following optimization problem. The input is an $n\times n$ symmetric positive semidefinite centered matrix $A=(a_{ij})$, and also a $4\times 4$ symmetric centered and spherical positive semidefinite matrix $B=(b_{ij})$. The goal is to compute in polynomial time the quantity $\Clust(A|B)$. Then the UGC hardness threshold of $\mathscr O$ equals $\frac{2\pi}{3}$.
\end{theorem}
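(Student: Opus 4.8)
The plan is to derive Theorem~\ref{thm:kernel} from the analytic formulation, Theorem~\ref{thm:gro}, by appealing to the general correspondence between Kernel Clustering with a centered/spherical hypothesis matrix and a Grothendieck-type inequality over vector configurations, a correspondence that was already set up in~\cite{KN08,KN10}. Concretely, the first step is to recall that for a $k\times k$ centered spherical positive semidefinite $B$ one may factor $b_{ij}=\langle v_i,v_j\rangle$ with $v_1,\dots,v_k$ unit vectors summing to zero (sphericity gives $\|v_i\|_2=1$; centeredness gives $\sum_i v_i=0$), and dually that for a centered positive semidefinite $A$ one has $\Clust(A|B)=\max \sum_{i,j} a_{ij}\langle y_i,y_j\rangle$ over all assignments $y_1,\dots,y_n\in\{v_1,\dots,v_k\}$. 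The centeredness of $A$ is exactly what lets one pass freely between a general spherical SDP relaxation $\max_{x_i\in S^{n-1}}\sum a_{ij}\langle x_i,x_j\rangle$ and its value; this ``translation invariance'' is the reason the propeller-type configuration (rather than an arbitrary one) governs the extremal ratio. So for $k=4$, Theorem~\ref{thm:gro} says precisely that the SDP value is always at most $\frac{2\pi}{3}$ times $\Clust(A|B)$, and that this constant is optimal: there is a sequence of instances $A$, together with the specific regular-simplex configuration $\{v_1,v_2,v_3,v_4\}$ displayed after Theorem~\ref{thm:gro}, for which the ratio tends to $\frac{2\pi}{3}$.

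The second step is the algorithmic (upper) direction of the threshold claim. For every $\e\in(0,1)$ one needs a polynomial-time algorithm approximating $\Clust(A|B)$ within $\frac{2\pi}{3}+\e$. Here one solves the natural SDP relaxation $\mathrm{SDP}(A)=\max_{x_i\in S^{n-1}}\sum_{i,j}a_{ij}\langle x_i,x_j\rangle$ in polynomial time, then rounds the optimal vectors $x_1,\dots,x_n$ to labels in $\{1,2,3,4\}$ by partitioning $\R^n$ according to which of the four cells of a suitable rotated/randomized ``propeller-type'' partition (the $120^\circ$-sectors-times-line partition of Theorem~\ref{thm:main}, embedded in $\R^n$ via a random $3$-dimensional projection) the point $x_i$ falls into, assigning $y_i=v_{\mathrm{label}(x_i)}$. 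The analysis of the expected value of $\sum a_{ij}\langle y_i,y_j\rangle$ produced by this rounding is exactly the content of the Propeller-Conjecture computation in~\cite{KN08}: the expected clustered objective is at least $\frac{3}{2\pi}\,\mathrm{SDP}(A)\ge \frac{3}{2\pi}\,\Clust(A|B)$, the constant $\frac{3}{2\pi}$ being the reciprocal of our extremal ratio and traceable to $\frac{9}{8\pi}$ from~\eqref{eq:z_i thm}. Since $\Clust(A|B)\le \mathrm{SDP}(A)$ trivially, derandomizing (or repeating) this rounding yields the claimed $(\frac{2\pi}{3}+\e)$-approximation. I would stress that this entire direction is unconditional and was essentially already available in~\cite{KN08,KN10}; the only new input needed is that the constant cannot be pushed below $\frac{2\pi}{3}$, i.e. Theorem~\ref{thm:gro}'s sharpness, which is what our Theorem~\ref{thm:main} supplies.

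The third step is the hardness (lower) direction: assuming the UGC, no polynomial-time algorithm can approximate $\Clust(A|B)$ within $\frac{2\pi}{3}-\e$. This is where one invokes the UGC-hardness reduction for Kernel Clustering with a fixed spherical centered hypothesis matrix developed in~\cite{KN08,KN10}, which is a dictatorship-test / long-code style reduction whose completeness-versus-soundness gap is governed precisely by the worst-case ratio between the spherical SDP value and the best $\{v_1,\dots,v_4\}$-valued assignment — that is, by the sharp constant in~\eqref{eq:our gro}. Plugging in the extremal instances described after Theorem~\ref{thm:gro} (the regular-simplex $v_i$'s together with the matrix family from~\cite[Sec.~3.1]{KN10}) makes the integrality gap of the reduction tend to $\frac{2\pi}{3}$, so the reduction certifies UGC-hardness of approximation to within any factor below $\frac{2\pi}{3}$. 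Combining the two directions gives that the UGC hardness threshold of $\mathscr O$ is exactly $\frac{2\pi}{3}$.

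The main obstacle, and indeed the entire mathematical novelty, is none of these reduction steps — all of which are quotations from~\cite{KN08,KN10} — but rather the sharpness assertion feeding into them, namely establishing that the constant $\frac{2\pi}{3}$ in~\eqref{eq:our gro} (equivalently $\frac{9}{8\pi}$ in~\eqref{eq:z_i thm}) is truly optimal. That is the substance of Theorem~\ref{thm:main}, whose proof is the hard analytic core of the paper: reducing the infinite-dimensional partition optimization over $\R^3$ (already reduced to $k=4$ parts by~\cite{KN08}) to a finite family of numerical inequalities and then verifying those rigorously with computer assistance. Once Theorem~\ref{thm:main} (hence Theorem~\ref{thm:gro}'s sharpness) is in hand, Theorem~\ref{thm:kernel} follows by assembling the known algorithmic upper bound and the known UGC-hardness lower bound around this now-exact constant.
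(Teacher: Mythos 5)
Your proposal is correct and follows essentially the same route as the paper: both directions (the algorithmic $\frac{2\pi}{3}+\e$ upper bound and the UGC-hardness reduction whose soundness is governed by the integrality gap in~\eqref{eq:our gro}) are quotations from~\cite{KN08,KN10}, and the only new input is the sharpness of the constant, which Theorem~\ref{thm:main} supplies. The paper treats Theorem~\ref{thm:kernel} as an immediate consequence of Theorem~\ref{thm:main} via those prior reductions and does not spell out the SDP/rounding or the dictatorship-test details you recount; your added exposition is consistent with~\cite{KN08}, modulo a slightly garbled inequality chain in the rounding analysis (one wants $\Clust(A|B)\ge\E[\text{rounded value}]\ge\frac{3}{2\pi}\,\mathrm{SDP}(A)\ge\frac{3}{2\pi}\,\Clust(A|B)$, which sandwiches $\mathrm{SDP}(A)$ to within factor $\frac{2\pi}{3}$).
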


A polynomial time algorithm  with approximation ratio $\frac{2\pi}{3}+o(1)$ for the problem $\mathscr O$ of Theorem~\ref{thm:kernel} was designed in~\cite{KN08}. The fact that Theorem~\ref{thm:main} implies the matching UGC hardness result was also proved in~\cite{KN08}. More generally, it was shown in~\cite{KN08} that the validity of the $m$-dimensional Propeller Conjecture for some $m\ge 3$ would imply that the UGC hardness threshold of the variant of $\mathscr O$ with $B$ being an $m\times m$ matrix equals $\frac{8\pi}{9}\left(1-\frac{1}{m}\right)$.

This is not the first time
that extremal problems in the measure space $(\R^m,\gamma_m)$ arose
from investigations in complexity theory; see for example the
Majority is Stablest Conjecture of Khot, Kindler, Mossel and
O'Donnell~\cite{KKMO07}, which was solved by Mossel, O'Donnell, and
Oleszkiewicz~\cite{MOO} via  a reduction to a classical
isoperimetric inequality of Borell~\cite{Bor85}. A special feature
of the Propeller Conjecture is that its conclusion is quite
surprising: despite allowing for partitions of $\R^m$ into $m+1$
sets, the optimal partition has only three nonempty sets. Note that
this degeneracy property occurs for the first time when $m=3$, since
in the result of~\cite{KN08} for partitions of $\R^2$ into three
sets, all three sets are indeed present in the optimal partition.
Thus Theorem~\ref{thm:main} is a proof of the first nonintuitive
case of the Propeller Conjecture: a verification of a prediction
about a clean yet unexpected geometric phenomenon that arose from an
investigation in computational complexity. To the best of our
knowledge this is the first time that such a development occurred in
this theory.

Our proof of Theorem~\ref{thm:main} proceeds as follows. First, using reductions of the Propeller Conjecture that were obtained in~\cite{KN08}, we show that it suffices to prove an analogous partitioning problem for the sphere $S^2$. Namely,  given a partition of $S^2$ into four spherical triangles, the goal is to show that the sum of the squares of lengths of the moments (with respect to the surface area measure on $S^2$) of these spherical triangles cannot exceed $9\pi^2/4$. Using additional geometric arguments, we further reduce the question to an optimization problem over a three dimensional search space. We thus obtain a coordinate system with three degrees of freedom, with respect to which one must solve a rather complicated nonlinear optimization problem. After proving several additional estimates that serve to further reduce the search space and give crucial modulus of continuity estimates, we show that it suffices to check that the desired estimate holds true for a finite list of spherical partitions (arising from an appropriate net of the search space). This list of inequalities that must be proved is explicit but very large, so we proceed to check it in a fully rigorous computer-assisted fashion. That is, our computation carefully accounts for all the rounding errors; equivalently the computation can be viewed as an implementation in our setting of interval arithmetic (see~\cite{Han92}).

The role and implications of rigorous computer-assisted proofs has been discussed at length in the literature; many such discussions appear in papers that establish striking results via a proof that has a computer-assisted component (famous examples include~\cite{AH77,AHK77,HS00,GMT03,Hal05,GMM09}). We see no reason to include here a new treatment of this topic. Zwick's discussion in~\cite{Zwi02} is an excellent reference for a well thought out explanation of the role of computer-assisted proofs in mathematics and computer science\footnote{Zwick's paper~\cite{Zwi02} is another instance of a computer-assisted proof in the context of approximation algorithms. While the present article deals with an integrality gap lower bound for a semindefinite program (the left hand side of~\eqref{eq:our gro}), Zwick proves an integrality gap upper bound, i.e., he uses a rigorous computer-assisted argument to show that a certain algorithm performs well rather than to prove a hardness result.}. The essence of the argument can be best conveyed by  quoting Zwick directly~\cite[Sec.~7]{Zwi02}:
\begin{quote}
{
``A typical computer assisted proof, like ours, is essentially composed of two steps. The first step says something like: ``Here is program P. If program P, when executed by an abstract computer, outputs YES, then the theorem is true, because . . . ''. The second step says something like: ``I ran program P and it said YES!'', or more specifically: ``I compiled program P using compiler A, under operating system B, and ran it on processor C, again under operating system B. It said YES!''.''}
\end{quote}
Zwick then proceeds to explain that
\begin{quote}
``The first step is a conventional mathematical proof. It simply argues that a certain program has certain properties. Such arguments are common in computer science. Anyone who objects to this step of the proof should find a flaw, or a gap, in the arguments made.''
\end{quote}
Nevertheless, Zwick explains that the second step, i.e., the claim that the compilation of the program gave the desired result, is not of the same nature:
\begin{quote}
``The second step is more problematic. It is certainly not a proof in the conventional mathematical sense of the word. Many things can go wrong here. Program P may not have been compiled correctly by compiler A. Or, due to some bug in operating system B, the program did not run as intended. Or, processor C may suffer from a design flaw that causes an incorrect execution of the program, or perhaps the specific processor used has a short-circuit somewhere, and so on. It is possible to reduce the likelihood of such problems by
compiling the program using several different compilers, and running it on different processors. But, while we may eventually be able to produce mathematical correctness proofs for compilers, operating systems, and the hardware design of the processors, it seems that we would never be able to produce a mathematical proof that no hardware fault occurred during a specific computation.

 What we can do, is instruct the computer executing program P to print a trace of the execution. This trace is a mathematical proof, though perhaps a not so inspiring one. Like any mathematical proof it should be checked carefully to make sure that it is correct. In many cases, however, this is not humanly possible. The main problem with computer assisted proofs, therefore, is that they are usually too long. They are also, in most cases, less insightful than conventional proofs."
\end{quote}

The bulk of the work presented in this article consists of conventional geometric and analytic proofs, resulting in a reduction of the problem to a region where the desired inequality holds with ``room to spare''. This extra room allows us to complete the proof by checking a finite list of concrete inequalities (that we write explicitly) on a  certain finite net. Each such inequality can be checked by hand, but the number of such checks is large, and it would be unrealistic (and probably unilluminating) to complete this final check manually. The code that we used is publicly available at the following link, which contains a command-line user interface so that interacting with the code is easy.
\begin{center}
\url{http://cims.nyu.edu/~aukosh/propeller/propeller.html}
\end{center}
The code for this proof was run many times on several different computers. The technical requirements are quite modest, with the processor speed only affecting the total run time, and the total memory requirement being less than 200 Mb. One can run the code on anything from a dedicated compute server to a laptop computer (we did both).
We ran the code in two different ways. The simplest way was to run it in a serial fashion. That is, we ran the program as a single instance which performed each step of the algorithm described in this paper one after the other. This took 37 hours to complete on a single processor of a compute server with 3.3 GhZ Intel Xenon processors. The second way we ran it was to parallelize the algorithm by breaking up the domain of interest into seven distinct portions and running separate instances on each. This procedure took eight hours when split between seven 3.3 GhZ Intel Xenon processors. (We also ran the program split between seven 2.4 GhZ AMD Opteron processors; this procedure took twelve hours.)

Needless to say, our proof of the Propeller Conjecture in $\R^3$ leaves something to be desired, since we do not have a short explanation of the validity of the computations in the final step of our proof. It is conceptually important to have a proof of this result, despite the fact that it ends in a lengthy computation,  not only because it yields interesting results in mathematics and computational complexity. Importantly, since the Propeller Conjecture is a nonintuitive prediction that arose from investigations in computational complexity, knowing that the first nonintuitive case of this conjecture is indeed true puts the full Propeller Conjecture, and consequently the link between geometry and algorithms that it describes, on a stronger footing. Thus, in addition to yielding a remarkably involved UGC hardness result, our theorem will hopefully invigorate future research on this problem that might lead to a traditional mathematical proof  of the Propeller Conjecture in $\R^3$ that can hopefully be extended to higher dimensions.

A key feature of the Propeller Conjecture is that it asserts that a natural optimization problem exhibits intermediate-dimensional symmetry breaking. There are precedents of results of this type that have been proved in the literature, mostly using Fourier-analytic methods. For this reason we are hopeful that a clean and shorter proof of the Propeller Conjecture will eventually be found. Consider for example the problem of sharp Khinchine inequalities: if $\e_1,\ldots,\e_n$ are i.i.d. symmetric random variables taking values in $\{-1,1\}$ and $p\in [1,2)$, the goal is to compute the minimum of $\E\left[\left|\sum_{i=1}^n a_i\e_i\right|^p\right]$ over all unit vectors $a=(a_1,\ldots,a_n)\in \R^n$. For $p=1$ it was conjectured by Littlewood (see~\cite{Hal75}) that this minimum occurs at $a=(1,1,0,\ldots,0)/\sqrt 2$. Littlewood's conjecture was solved affirmatively by Szarek~\cite{Sza76} (see also~\cite{Tom87,LO94}). Haagerup~\cite{Haa81} proved that the same two dimensional symmetry breaking occurs for $p\in [1,p_0]$, where $p_0=1.87...$ is the solution of the equation $2\Gamma((p+1)/2)=\sqrt{\pi}$, i.e., the unit vector that minimizes $\E\left[\left|\sum_{i=1}^n a_i\e_i\right|^p\right]$ equals $(1,1,0,\ldots,0)/\sqrt 2$ for $p\in [1,p_0]$ and for $p\in [p_0,2]$ this minimum occurs at $a=(1,\ldots,1)/\sqrt n$. Another famous result of this type is Ball's cube slicing theorem~\cite{Bal86} (resolving a conjecture of Hensley~\cite{Hen79}), asserting that the hyperplane section of $[-1,1]^n$ with maximal $(n-1)$-dimensional volume is perpendicular to $(1,1,0,\ldots,0)$; see~\cite{OP00} for the corresponding result for complex scalars, as well as~\cite{Bal95,BN02} for an analogous statement for projections. We refer to~\cite{NP00} for a unified treatment of the results of Szarek, Haagerup, and Ball. There is also a conjecture of  Milman predicting a symmetry breaking phenomenon for extremal volumes of slabs in the cube $[-1,1]^n$; see~\cite{BK03,KK11} for partial results along these lines. We mentioned the above statements since we believe that among the literature they are most similar to the Propeller Conjecture, and perhaps (with much more work) related methods could be be used to address the Propeller Conjecture as well.

This article is organized as follows. In Section~\ref{sec:overview} we present the reduction of the Propeller Conjecture to a  certain optimization problem for spherical partitions, and we explain the main ingredients of our proof, including the conventional geometric/analytic arguments, as well as the computer-assisted component. The proofs of the geometric and analytic results leading to the final computational step are contained in Section~\ref{sec:identities} and Section~\ref{secnet}. Section~\ref{sec:implement} contains a detailed explanation of how the numerical step is implemented so as to account for all possible rounding errors. We remark that our arguments extend mutatis mutandis to higher dimensions. For the sake of simplicity we present the entire argument in $\R^3$, since while it is conceivable that the scheme presented here can yield a computer-assisted proof of the Propeller Conjecture in higher dimensions, at some fixed dimension the computer-assisted component of the proof will become unfeasible. Now that we know that a nonintuitive case of the Propeller Conjecture is indeed correct, the next natural step is to search for a proof that extends to all dimensions, rather than attempting to prove a few more cases in low dimensions.

\section{An overview of the proof of Theorem~\ref{thm:main}}\label{sec:overview}

From now on assume for the sake of eventually obtaining a contradiction that $\{A_i\}_{i=1}^4$ is a partition of $\R^3$ into measurable sets that violates the propeller conjecture, i.e.,
\begin{equation}\label{eq:contradiction}
\sum_{i=1}^4 \left\|\int_{A_i}xd\gamma_3(x)\right\|_2^2>
\frac{9}{8\pi}.
\end{equation}
Assume moreover that the maximum of $\sum_{i=1}^4 \left\|\int_{B_i}xd\gamma_3(x)\right\|_2^2$ over all measurable partitions $\{B_i\}_{i=1}^4$ of $\R^3$ is attained at $\{A_i\}_{i=1}^4$. For a proof that this maximum is indeed attained, see~\cite[Lem.~3,1]{KN08}. Using this maximality, it follows from~\cite[Lem.~3.3]{KN08} that (up to measure zero corrections) the $A_i$ are cones with cusp at the origin, and if we write  $\zeta_i=\int_{A_i} xd\gamma_3(x)$ then $\zeta_1+\zeta_2+\zeta_3+\zeta_4=0$ and for each $i\in \{1,2,3,4\}$ we have
\begin{equation}\label{eq:zeta}
A_i=\left\{x\in \R^3:\ \langle x,\zeta_i\rangle=\max_{j\in \{1,2,3,4\}} \langle x,\zeta_j\rangle\right\}.
\end{equation}
By~\cite[Lem.~3.3 \& Cor.~3.4]{KN08} the vectors $\{\zeta_i\}_{i=1}^4$ are distinct, nonzero and not coplanar.

Since the sets in question are cones, it is beneficial to study them in terms of their intersection with the sphere $S^2$, i.e., define
\begin{equation}\label{eq:def T_i}
T_i=A_i\cap S^2,
\end{equation}
and letting $\sigma$ denote the surface area measure on $S^2$ (thus $\sigma(S^2)=4\pi$), define
\begin{equation}\label{eq:def z_i}
z_i=\int_{T_i} xd\sigma(x)=\sqrt{2\pi^3}\cdot\zeta_i,
\end{equation}
where the last equality in~\eqref{eq:def z_i} follows from integration in polar coordinates. Being a constant multiple of  the vectors $\{\zeta_i\}_{i=1}^4$, the vectors $\{z_i\}_{i=1}^4$ are also distinct, nonzero, not coplanar, and satisfy $z_1+z_2+z_3+z_4=0$. Moreover,
\begin{equation}\label{ONE1}
T_{i}\stackrel{\eqref{eq:zeta}\wedge\eqref{eq:def T_i}\wedge\eqref{eq:def z_i}}{=}\left\{x\in S^{2}\colon \langle x,z_{i}\rangle=\max_{j\in\{1,2,3,4\}}\langle x,z_{j}\rangle\right\}.
\end{equation}
Consequently,
\begin{equation}\label{eq:on edge}
\forall i,j\in \{1,2,3,4\},\quad x\in T_{i}\cap T_{j}\implies \langle z_{i},x\rangle=\langle z_{j},x\rangle.
\end{equation}

Fix $\ell\in \{1,2,3,4\}$. Since $\{z_i\}_{i=1}^4$ are not coplanar there exists a unique $v_\ell\in S^2$ satisfying
\begin{equation}\label{eq:def v}
\forall i,j\in \{1,2,3,4\}\setminus \{\ell\},\quad\langle z_i,v_\ell\rangle=\langle z_j,v_\ell\rangle>\langle z_\ell,v_\ell\rangle .
\end{equation}
The vectors $\{v_\ell\}_{\ell=1}^4$ are the vertices of the partition $P=\{T_1,T_2,T_3,T_4\}$ of $S^2$. A simple argument presented in Section~\ref{sec:identities} shows that $v_i\notin \{-v_j,v_j\}$ if $i\neq j$ and each spherical triangle $T_i$ is contained in an open hemisphere of $S^2$. Moreover, for distinct $i,j,\ell\in \{1,2,3,4\}$ we have $\det(v_i,v_j,v_\ell)\neq 0$, and if we define
\begin{equation}\label{eq:def thetas}
\theta_{ij} \colonequals \arccos(\langle v_{i},v_{j}\rangle)\qquad \mathrm{and}\qquad
\Theta_{ij\ell} \colonequals
\arccos\left(\left\langle\frac{(v_{i}\times v_{j})}{\vnormf{v_{i}\times v_{j}}_{2}},
\frac{(v_{j}\times v_{\ell})}{\vnormf{v_{j}\times v_{\ell}}_{2}}\right\rangle\right),
\end{equation}
then it is also argued in Section~\ref{sec:identities} that $\theta_{ij},\Theta_{ij\ell}\in (0,\pi)$.  Observe $\theta_{ij}=\theta_{ji}$, $\Theta_{ijk}=\Theta_{kji}$, and $\Theta_{ij\ell}$ is the spherical angle, at vertex $v_{j}$, of the spherical triangle with vertices $\{v_{i},v_{j},v_{\ell}\}$; the cosine of this angle is exactly the inner product of unit normals of the two planes containing $\{v_{i},v_{j},0\}$ and $\{v_{j},v_{\ell},0\}$ respectively. See Figure~\ref{fig1} for a schematic description of the notation.
\begin{figure}[htbp!]
\subfigure[]{
   \centering
   \def\svgwidth{\figoneawidth} 
   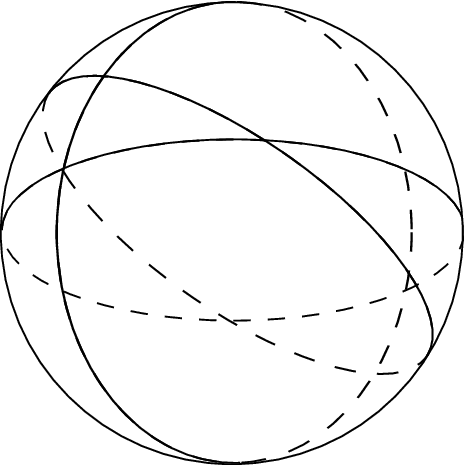
}
\subfigure[]{
   \centering
   \def\svgwidth{\figonebwidth} 
   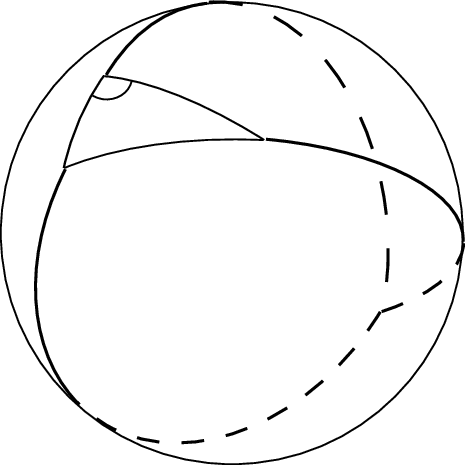
}
\caption{(a) Spherical triangle constructed with three great circles (b) Partition of the sphere into four spherical triangles.}
\label{fig1}
\end{figure}

In order to proceed we need to have a formula for the spherical moment $z_4$ in terms of the vertices $\{v_1,v_2,v_3\}$ of the spherical triangle $T_4$. In order to do so, assume that $\det(v_1,v_2,v_3)>0$. Then
\begin{flalign}
z_{4}
&=\frac{1}{2}\left(\theta_{12}\frac{v_{1}\times v_{2}}{\vnorm{v_{1}\times v_{2}}_{2}}
+\theta_{23}\frac{v_{2}\times v_{3}}{\vnorm{v_{2}\times v_{3}}_{2}}
+\theta_{31}\frac{v_{3}\times v_{1}}{\vnorm{v_{3}\times v_{1}}_{2}}\right)\label{onec0}.
\end{flalign}
Identity~\eqref{onec0} was proved by Minchin in 1877; see~\cite[p.~259]{minchin77}. We also found identity~\eqref{onec0} in~\cite{brock74}, which is  a military publication that is not publicly available (it isn't classified: we purchased access to it). Since both references for~\eqref{onec0} are hard to find, we include a brief derivation of it in Proposition~\ref{prop1.5}, using the Cauchy projection formula (see~\cite[p.~25]{Kol05}).

Using~\eqref{onec0} and geometric arguments, we show in Lemma~\ref{prop3} that the vertex $v_4$ is determined by the vertices $\{v_1,v_2,v_3\}$ as follows; an analogous formula expresses each vertex of the partition in terms of the other three vertices.
\begin{equation*}
v_{4}=-\frac{1}{\sqrt{\lambda}}\left(\frac{\sin\theta_{23}}{\theta_{23}}v_{1}+\frac{\sin\theta_{13}}{\theta_{13}}v_{2}+
\frac{\sin\theta_{12}}{\theta_{12}}v_{3}\right),
\end{equation*}
where
\begin{multline}\label{one1.60}
\lambda
\colonequals\vnorm{\frac{\sin\theta_{23}}{\theta_{23}}v_{1}+\frac{\sin\theta_{13}}{\theta_{13}}v_{2}+\frac{\sin\theta_{12}}{\theta_{12}}v_{3}}_{2}^{2}
 = \frac{\sin^{2}\theta_{23}}{\theta_{23}^{2}}+\frac{\sin^{2}\theta_{13}}{\theta_{13}^{2}}+\frac{\sin^{2}\theta_{12}}{\theta_{12}^{2}}\\
+2\frac{\sin\theta_{23}\sin\theta_{13}}{\theta_{23}\theta_{13}}\cos\theta_{12}
+2\frac{\sin\theta_{23}\sin\theta_{12}}{\theta_{23}\theta_{12}}\cos\theta_{13}
+2\frac{\sin\theta_{12}\sin\theta_{13}}{\theta_{12}\theta_{13}}\cos\theta_{23}.
\end{multline}
In Proposition~\ref{prop6} we derive the following restriction on products of opposite edges.
\begin{equation*}\label{one20}
\frac{\sin\theta_{12}\sin\theta_{34}}{\theta_{12}\theta_{34}}
=\frac{\sin\theta_{13}\sin\theta_{24}}{\theta_{13}\theta_{24}}
=\frac{\sin\theta_{14}\sin\theta_{23}}{\theta_{14}\theta_{23}}.
\end{equation*}
Also, in Lemma~\ref{lem:squares} we obtain the following expression for the objective function at the partition $P=\{T_{i}\}_{i=1}^{4}$ as the sum of the squares of the angles between the vertices.
\begin{equation}\label{eq:sum of squares}
F(P)\colonequals\|z_1\|_2^2+\|z_2\|_2^2+\|z_3\|_2^2+\|z_4\|_2^2=\theta_{12}^2+\theta_{13}^2+\theta_{14}^2+\theta_{23}^2+\theta_{24}^2+\theta_{34}^2.
\end{equation}
Define
\begin{equation}\label{one3.00}
\gamma\colonequals\frac{\sin\theta_{23}}{\theta_{23}}\cos\theta_{12}+\frac{\sin\theta_{13}}{\theta_{13}}+\frac{\sin\theta_{12}}
{\theta_{12}}\cos\theta_{23}.
\end{equation}
In Lemma~\ref{prop7} we obtain the following useful restriction on a single spherical triangle appearing in the partition $\{T_1,T_2,T_3,T_4\}$.
\begin{equation}\label{eq:gamma identity}
\cos\left(\sqrt{\lambda-\gamma^{2}}\frac{\theta_{23}\theta_{12}}{\sin\theta_{23}\sin\theta_{12}}\right)
=\cos\left(\sqrt{\theta_{12}^{2}+\theta_{23}^{2}+2\theta_{12}\theta_{23}\cos\Theta_{123}}\right)=-\frac{\gamma}{\sqrt{\lambda}},
\end{equation}
where $\lambda$ is given in~\eqref{one1.60}.

Observe that~\eqref{eq:gamma identity} is a restriction involving data from only one triangle $T_{4}$, as we see by the definitions of $\gamma$ and $\lambda$.  Moreover, by symmetry, every cyclic permutation of $\{\theta_{12},\theta_{23},\theta_{13}\}$ must also satisfy~\eqref{eq:gamma identity}. And, every triangle of $P$ must satisfy~\eqref{eq:gamma identity}, with the indices of the $\theta_{ij}$ substituted appropriately. To understand the ramifications of~\eqref{eq:gamma identity} assume that $\theta_{12}=\theta_{23}=\theta_{13}\equalscolon\theta$. Then \eqref{eq:gamma identity} simplifies to
\begin{equation}\label{one5.1}
\cos\left(\sqrt{2(2\cos\theta +1)(1-\cos\theta)}\frac{\theta}{\sin\theta}\right)=-\frac{1}{\sqrt{3}}\sqrt{2\cos\theta +1}.
\end{equation}
The only $\theta\in(0,2\pi/3)$ satisfying this equation are $\theta=\arccos(-1/3)\approx 1.9106332362490187$, and $\theta\approx 1.5379684120790425$.  The first solution corresponds to the partition whose vertices are those of a regular simplex
inscribed in the sphere, and the second intuitively corresponds to a critical point ``between'' the regular partition and the propeller partition (note that $\theta=2\pi/3$ implies $\det(v_{1},v_{2},v_{3})=0$, and $\theta>2\pi/3$ cannot be achieved by a spherical triangle).  The above special values of $\theta$ will play an important role in the ensuing computations.

As explained in the discussion following Lemma~\ref{lem:squares}, a combination of~\eqref{eq:sum of squares} and~\eqref{eq:gamma identity} can be used to express the value of our objective function at the maximizing partition $P$ in terms of the data from a single spherical triangle in $P$. Specifically, we have
\begin{equation}\label{eq:F single tri}
F(P)= 3\left(\theta_{12}^2+\theta_{23}^2+\theta_{13}^2\right)+2\cos(\Theta_{213})\theta_{12}\theta_{13}+2\cos(\Theta_{123})\theta_{12}\theta_{23}
+2\cos(\Theta_{231})\theta_{23}\theta_{13}.
\end{equation}
In particular, it follows that the right hand side of~\eqref{eq:F single tri} must be the same for each set of data for all four triangles of $P$.

Up to now we only discussed  identities that the extremal partition $P$ must satisfy. In order to proceed we need to prove some a priori estimates on the various parameters in question. First, let $M=\max\{\theta_{12},\theta_{13},\theta_{23}\}$. It follows from~\eqref{eq:F single tri}  that $F(P)\le 14 M^2$ (using the fact that at least one $\Theta_{ijk}$ must be greater than $\frac{\pi}{3}$). Our contrapositive assumption~\eqref{eq:contradiction} combined with~\eqref{eq:def z_i} implies that $F(P)>\frac{9\pi^2}{4}$, so we deduce that all the triangles in $P$ must have an edge of length greater than $\frac{3\pi}{2\sqrt{14}}>\frac54$. Additional arguments (that are significantly more involved technically) imply that for all distinct $i,j,\ell\in \{1,2,3,4\}$ we have $\theta_{ij}\le \pi-\frac12$ (Lemma~\ref{lemma4}) and $\theta_{ij},\sin\Theta_{ij\ell}>\frac1{10}$ (Lemma~\ref{thm1}). In Lemma~\ref{lemma7} we prove that $(\theta_{12},\theta_{13},\theta_{23})$ must be outside the $\ell_2$ ball of radius $\frac{1}{100}$ centered at $(\theta,\theta,\theta)$ for $\theta=\arccos\left(-\frac{1}{3}\right)$ and $\theta=1.53796841207904$. This excludes from our search space  balls centered at the two solutions of~\eqref{one5.1} that were discussed above. In Lemma~\ref{rk5} we show that, up to a relabeling of the spherical triangles $T_1,T_2,T_3,T_4$, we may assume that $\sqrt{\lambda}>\frac{9}{50}$.

It turns out that the above estimates suffice in order to conclude the proof of Theorem~\ref{thm:main} via a search over a sufficiently fine net. To explain this endgame, note that by the spherical law of cosines we have $\cos \theta_{13}=\cos\theta_{12}\cos\theta_{23}+\sin\theta_{12}\sin\theta_{23}\cos\Theta_{123}$. It therefore follows from~\eqref{one1.60}, \eqref{eq:gamma identity}, \eqref{one3.00} that if we define  $h,\lambda,\gamma:[0,\pi]^3\to \R$ by
$$
\lambda(x,y,z)\colonequals\frac{\sin^2 x}{x}+\frac{\sin^2 y}{y}+\frac{\sin^2 z}{z}+2\frac{\sin x\sin y}{xy}\cos z+2\frac{\sin x\sin z}{xz}\cos y+2\frac{\sin y\sin z}{yz}\cos x,
$$
$$
\gamma(x,y,z)\colonequals\frac{\sin z}{z}\cos x+\frac{\sin y}{y}+\frac{\sin x}{x}\cos z,
$$
$$
h(x,y,z)\colonequals \sqrt{\lambda(x,y,z)}\cos\left(\sqrt{x^2+z^2+2x
z\frac{\cos y-\cos x\cos z}{\sin x\sin z}}\right)+\gamma(x,y,z),
$$
then $h(\theta_{12},\theta_{13},\theta_{23})=0$. This identity must hold  for all cyclic permutations of the indices $\{1,2,3\}$, so we also have the two identities $h(\theta_{13},\theta_{23},\theta_{12})=h(\theta_{23},\theta_{12},\theta_{13})=0$. It follows that if we define $H=(H_1,H_2,H_3):[0,\pi]^3\to \R^3$ by $H(x,y,z)=(h(y,z,x),h(x,y,z),h(z,x,y))$ then at the extremal partition $P$ we have $H(\theta_{12},\theta_{13},\theta_{23})=0$.

Our strategy is therefore as follows. At every point $q$ in the search space given by the constraints described above (specifically, see the system of equations~\eqref{three1.1}, excluding also the two balls described in Lemma~\ref{lemma7}), we will show that either $H(q)\neq 0$ or $F_0(q)<9\pi^2/4$, where $F_0$ is given by the right hand side of~\eqref{eq:F single tri} (with the understanding that one expresses $\cos\Theta_{123}, \cos\Theta_{213}, \cos\Theta_{231}$ in terms of $\theta_{12},\theta_{23},\theta_{13}$ using the spherical law of cosines, so that $F_0$ is a function of $\theta_{12},\theta_{23},\theta_{13}$). Due to the above discussion, such an assertion will complete the proof of Theorem~\ref{thm:main}. Moreover, it turns out that this assertion holds ``with room to spare'', making a computer-assisted verification feasible.

To this end, we need to have estimates on the modulus of continuity of $H$ and $F_0$. Such estimates are complicated but can be proved using elementary considerations; see Lemma~\ref{lemma6} and the discussion immediately following it. Consequently, if we are given a point $q$ in our restricted search space and $\tau\in (0,1)$  for which either $|H(q)|>\tau$  or $F_0(q)<9\pi^2/4-\tau$, then it would follow that for some $r>0$ that depends on $\tau$ via our modulus of continuity estimates, in the entire ball of radius $r$ centered at $q$ either $H\neq 0$ or $F_0<9\pi^2/4$.

From this procedure we achieve a process  by which we can iteratively remove from our search space macroscopically large regions of controlled size in which we are guaranteed that a counterexample to the Propeller Conjecture cannot exist. Our program proceeds to remove such balls until it eventually exhausts the entire search space, thus arriving at the conclusion that there is no counterexample to the Propeller Conjecture.

In order to make such a procedure rigorous, we  carefully account for all possible rounding errors. We do this quite conservatively, i.e., while significantly overestimating the magnitude of all possible errors, as explained in detail in Section~\ref{sec:implement}. Note that our  computer-assisted proof does not compute any computationally complicated expressions such as, say, definite integrals or implicitly defined functions: it only checks inequalities between expressions involving elementary combinations of trigonometric functions and square roots.


In summary, the validity of the Propeller Conjecture in $\R^3$ has mainly conceptual implications, and what remains open seems to be of a more technical nature: to perhaps find a clever transformation, e.g., as in Ball's cube slicing theorem~\cite{Bal86}, that allows one to prove the theorem analytically rather than resorting to a transversal of a net in a region where the desired inequality is actually quite weak. At present, we have a clean geometric argument which addresses directly the region where the Propeller Conjecture is most subtle, and in the remainder of the search space we do something rather crude. This has two conceptual consequences: the Propeller Conjecture as a nonintuitive link between complexity theory and geometry is correct, and moreover one has a geometric/analytic proof of this conjecture in a region where it is tightest. Nevertheless, we are lacking a technical idea that allows us to address the remaining region in a way that does not resort to ``brute force''. It remains a challenge to find such an idea, with the hope that it will pave the way to a proof of the Propeller Conjecture in all dimensions.





\section{Proofs of basic identities and estimates}\label{sec:identities}

This section contains proofs the geometric identities and inequalities that were stated in Section~\ref{sec:overview}. Recall that we are assuming that $P=\{T_1,T_2,T_3,T_4\}$ is a partition of $S^2$ into four spherical triangles, and that $\{z_i\}_{i=1}^4$, as defined in~\eqref{eq:def z_i}, are the corresponding spherical moments. We are also making the contrapositive assumption that our objective function
\begin{equation}\label{eq:the F}
F(P)=\|z_1\|_2^2+\|z_2\|_2^2+\|z_3\|_2^2+\|z_4\|_2^2
\end{equation}
(recall~\eqref{eq:sum of squares}) exceeds $9\pi^2/4$, which is the maximal value of $F(\cdot)$ that is predicted by the Propeller Conjecture. We also assume that $P$ maximizes $F$. The vertices of the partition $\{v_i\}_{i=1}^4$ were defined in Section~\ref{sec:overview}, along with the angles $\theta_{ij}$ and $\Theta_{ij\ell}$ as given in~\eqref{eq:def thetas}.

\begin{prop}\label{prop1}\textup{(a)} $z_1+z_2+z_3+z_4=0$. 
\begin{itemize}
\item[(b)] Suppose two triangles $T_{i},T_{j}\in P$ share an edge $E\colonequals T_{i}\cap T_{j}$.  Then for all $e\in E$ we have $\langle z_i,e\rangle=\langle z_j,e\rangle$.  In particular, if $\ell\notin\{i,j\}$  then $\langle z_{i},v_{\ell}\rangle=\langle z_{j},v_{\ell}\rangle$.
\item[(c)] For all distinct $i,j\in \{1,2,3,4\}$ we have $\langle z_{i},v_{i}\rangle=\langle -3z_{j},v_{i}\rangle$.
\item[(d)] For all distinct $i,j\in \{1,2,3,4\}$ we have $v_j\notin \{-v_i,v_i\}$.
\item[(e)] For all distinct $i,j,\ell\in\{1,2,3,4\}$ we have $\det(v_{i},v_{j},v_{\ell})\neq0$ , each $T_{i}$ is contained in an open hemisphere, and $0<\theta_{ij},\Theta_{ij\ell}<\pi$.
\end{itemize}
\end{prop}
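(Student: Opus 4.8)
The plan is to chase through the definitions \eqref{eq:def z_i}, \eqref{ONE1}, \eqref{eq:on edge}, \eqref{eq:def v}, using the recalled facts that $\{z_i\}_{i=1}^4$ are nonzero, pairwise distinct, not coplanar, and sum to zero. Part (a) is immediate: since $\{A_i\}$ partitions $\R^3$ we have $\sum_i\zeta_i=\int_{\R^3}x\,d\gamma_3(x)=0$, hence $\sum_i z_i=0$ by \eqref{eq:def z_i} (equivalently, $\sum_i z_i=\int_{S^2}x\,d\sigma(x)=0$ by antipodal symmetry of $\sigma$). Part (b): the first assertion is just \eqref{eq:on edge} applied to points $e\in E=T_i\cap T_j$; for the ``in particular'' clause, \eqref{eq:def v} directly gives $\langle z_i,v_\ell\rangle=\langle z_j,v_\ell\rangle$ whenever $i,j\neq\ell$, and in fact \eqref{eq:def v} combined with \eqref{ONE1} shows $v_\ell\in T_i\cap T_j$ for every pair $i,j\neq\ell$, so $v_\ell$ is a common vertex of the three triangles $T_k$ with $k\neq\ell$ (this combinatorial ``tetrahedral'' picture is what drives the rest).

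For (c), I would pair the identity $z_1+z_2+z_3+z_4=0$ with $v_i$: by \eqref{eq:def v} the three scalars $\langle z_k,v_i\rangle$ with $k\neq i$ are all equal, each equal to $\langle z_j,v_i\rangle$, so $\langle z_i,v_i\rangle+3\langle z_j,v_i\rangle=0$. Part (d) is the first place needing an actual argument, though a short one. To get $v_i\neq v_j$, suppose $v_i=v_j=v$ and pick $m\notin\{i,j\}$: \eqref{eq:def v} at $v_i$ gives $\langle z_m,v\rangle>\langle z_i,v\rangle$, while \eqref{eq:def v} at $v_j$ (legitimately, since $m,i\neq j$) gives $\langle z_m,v\rangle=\langle z_i,v\rangle$ — contradiction. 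To get $v_i\neq -v_j$, suppose $v_i=-v_j$ and again pick $m\notin\{i,j\}$: \eqref{eq:def v} at $v_j$ gives $\langle z_i,v_j\rangle=\langle z_m,v_j\rangle$ (both indices $\neq j$), whereas \eqref{eq:def v} at $v_i=-v_j$ gives $\langle z_m,v_i\rangle>\langle z_i,v_i\rangle$, i.e.\ $\langle z_i,v_j\rangle>\langle z_m,v_j\rangle$ after negating — contradiction.

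For (e), the open-hemisphere statement is the cleanest part: for $x\in T_i$, by \eqref{ONE1} and (a), $\langle x,z_i\rangle=\max_k\langle x,z_k\rangle\geq\tfrac14\sum_k\langle x,z_k\rangle=0$, and equality would force $\langle x,z_k\rangle=0$ for all $k$ (a sum of nonpositive numbers vanishing), hence $x\perp\mathrm{span}\{z_k\}=\R^3$, impossible for $x\in S^2$; thus $T_i\subseteq\{x\in S^2:\langle x,z_i\rangle>0\}$. The step I expect to be the (modest) main obstacle is $\det(v_i,v_j,v_\ell)\neq0$: if it vanished then, since the $v$'s are pairwise non-parallel by (d), they span a $2$-plane $\Pi$ through the origin; letting $m$ be the remaining index, \eqref{eq:def v} at $v_j$ and at $v_\ell$ (using $m,i\neq j$ and $m,i\neq\ell$) gives $\langle z_m-z_i,v_j\rangle=\langle z_m-z_i,v_\ell\rangle=0$, and since $v_j\neq\pm v_\ell$ by (d) while $z_m\neq z_i$ by distinctness, $v_j,v_\ell$ span the $2$-plane $(z_m-z_i)^\perp$; hence $\Pi=(z_m-z_i)^\perp$ and therefore $\langle z_m-z_i,v_i\rangle=0$, contradicting the strict inequality $\langle z_m,v_i\rangle>\langle z_i,v_i\rangle$ in \eqref{eq:def v}. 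The remaining claims of (e) then follow formally: $\theta_{ij}=\arccos\langle v_i,v_j\rangle\in(0,\pi)$ since $\langle v_i,v_j\rangle\in(-1,1)$ by (d); $\Theta_{ij\ell}$ is well defined because $v_i\times v_j$ and $v_j\times v_\ell$ are nonzero by (d); and $\Theta_{ij\ell}\in(0,\pi)$ because $v_i\times v_j\parallel v_j\times v_\ell$ would make both vectors normal to $\mathrm{span}\{v_i,v_j,v_\ell\}$, forcing $\det(v_i,v_j,v_\ell)=0$, which was just excluded.
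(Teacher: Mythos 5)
Your proof is correct, and parts (a)--(c) match the paper's argument. For (d) and (e) you take a genuinely different route. The paper proves (d) by observing that $\R v_\ell$ is exactly the line perpendicular to the plane $\Pi_\ell$ spanned by $\{z_i : i\neq\ell\}$; non-coplanarity of the $z_i$ then forces the four lines $\Pi_\ell^\perp$ to be distinct. Your direct contradiction arguments from \eqref{eq:def v} are shorter and avoid invoking these planes. For (e), the paper works in the opposite order from you: it first constructs the local neighborhoods $U_\ell$ and the lune-like sets $T_{i,\ell}$, shows that $\partial T_{i,\ell}$ consists of two half great circles meeting at $\pm v_\ell$ with spherical angle in $(0,\pi)$, and obtains $\Theta_{i\ell j}\in(0,\pi)$ from that; $\det(v_i,v_j,v_\ell)\neq 0$ is then deduced as a \emph{consequence} of $\Theta\in(0,\pi)$ (if $v_i=\alpha_j v_j+\alpha_\ell v_\ell$ then $v_i\times v_j\parallel v_\ell\times v_j$, contradicting the angle bound), and the open-hemisphere claim comes from $T_i\subset T_{i,\ell}$ together with $T_i$ avoiding a neighborhood of $-v_\ell$. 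You instead get the open-hemisphere claim immediately from the ``max $\geq$ average'' trick with $\sum z_k=0$ and $\mathrm{span}\{z_k\}=\R^3$, prove $\det\neq0$ by a linear-algebra argument identifying $\mathrm{span}\{v_j,v_\ell\}$ with $(z_m-z_i)^\perp$, and only then read off $\Theta\in(0,\pi)$. Both are correct; your version is more elementary and self-contained, while the paper's $T_{i,\ell}$ construction, though heavier for this Proposition, builds the explicit local picture near each vertex that is closer in spirit to the geometric arguments used later.
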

\begin{proof}
Parts (a) and (b) were already proved in Section~\ref{sec:overview}; for part (b) see~\eqref{eq:on edge}. A combination of~\eqref{eq:def v} and part (a) implies part (c). For $\ell\in \{1,2,3,4\}$ let $\Pi_{\ell}$ be the unique plane containing $\{z_{i}\}_{i\in \{1,2,3,4\}\setminus \{\ell\}}$.  A vector is perpendicular to $\Pi_{\ell}$ if and only if it has equal inner product on $\{z_{i}\}_{i\in \{1,2,3,4\}\setminus \{\ell\}}$, so $\R v_{\ell}$ is also perpendicular to $\Pi_{\ell}$.  Since $\{z_{i}\}_{i=1}^{4}$ are not coplanar the planes $\{\Pi_{\ell}\}_{\ell=1}^{4}$ have distinct perpendiculars.  So $v_{1},v_2,v_3,v_4$ are distinct with $v_{i}\notin \R v_{j}$ for $i\neq j$.  This establishes (d). For every $\ell\in \{1,2,3,4\}$ consider the set
$$
U_\ell\colonequals \left\{x\in S^2:\ \min_{i\in \{1,2,3,4\}\setminus \ell} \langle x,z_i-z_\ell\rangle>\frac12 \langle v_\ell,z_i-z_\ell\rangle\right\}.
$$
It follows from~\eqref{eq:def v} that $U_\ell$ is an open neighborhood of $v_\ell$, and in combination with~\eqref{ONE1} we know that
\begin{equation}\label{eq:neighborhood intersection}
i\in \{1,2,3,4\}\setminus\{\ell\}\implies U_\ell\cap T_i=\left\{x\in U_\ell\cap S^2:\ \langle x,z_i\rangle=\max_{j\in \{1,2,3,4\}\setminus\{\ell\}} \langle x,z_j\rangle\right\}.
\end{equation}
Write $I=\{1,2,3,4\}\setminus\{\ell\}$. For every distinct $i,k\in \{1,2,3,4\}$ the set $\{x\in S^2:\ \langle x,z_i\rangle \ge \langle x,z_j\rangle\}$ is a half sphere, and therefore the set  $T_{i,\ell}\colonequals \{x\in S^2:\ \langle x,z_i\rangle =\max_{j\in I} \langle x,z_j\rangle\}$ is strictly contained in a half sphere (since the $z_j$ are distinct and nonzero). Note that $\partial T_{i,\ell}$ is the union of two half circles that meet at antipodal points, and $v_\ell$ is one of these antipodal points. So, the spherical angle of $T_{i,\ell}$ at the vertex $v_\ell$ is in $(0,\pi)$. By~\eqref{ONE1} and~\eqref{eq:neighborhood intersection} this spherical angle is identical for $T_{i,\ell}$ and $T_i$. This shows that $\Theta_{i\ell j}\in (0,\pi)$ for all distinct $i,j,\ell\in \{1,2,3,4\}$. Also $\theta_{ij}\in (0,\pi)$ since $v_j\notin \{-v_i,v_i\}$. Next, assume for the sake of contradiction that $\det(v_i,v_j,v_\ell)=0$. Since $v_i\notin \R v_j$ for distinct $i,j\in \{1,2,3,4\}$, it follows that
$v_i=\alpha_j v_j+\alpha_\ell v_\ell$ with $\alpha_j,\alpha_\ell\in \R\setminus\{0\}$. Hence $v_i\times v_j=\alpha_\ell v_\ell\times v_j$ with $\alpha_\ell\neq 0$, contradicting $\Theta_{ij\ell}\in (0,\pi)$. We have already seen that $T_{i,\ell}$ is contained a closed hemisphere, and this containment
  can be chosen such that $T_{i,\ell}$ intersected with the boundary of the hemisphere is $\{v_{\ell},-v_{\ell}\}$. But from~\eqref{ONE1} we know that $T_i\subset T_{i,\ell}$ and $T_i$ avoids a neighborhood of $\{-v_{\ell}\}$, so each $T_{i}$ is contained in an open hemisphere. This concludes the proof of (e).
\end{proof}


We now give an explicit formula for the spherical moments $\{z_{i}\}_{i=1}^4$; as explained in Section~\ref{sec:overview}, this formula can be found in~\cite[p.~259]{minchin77} and~\cite[p.~6]{brock74}.  Since these references are hard to find,  we provide a proof.

\begin{prop}\label{prop1.5}
Suppose that the spherical triangle $T_{4}\subset S^{2}$ has vertices $\{v_{1},v_{2},v_{3}\}$ satisfying $\det(v_{1},v_{2},v_{3})>0$.  Then the vector $z_{4}=\int_{T_{4}}xd\sigma\in\R^{3}$ satisfies
\begin{flalign}
z_{4}
&=\frac{1}{2}\left(\theta_{12}\frac{v_{1}\times v_{2}}{\vnorm{v_{1}\times v_{2}}_{2}}
+\theta_{23}\frac{v_{2}\times v_{3}}{\vnorm{v_{2}\times v_{3}}_{2}}
+\theta_{31}\frac{v_{3}\times v_{1}}{\vnorm{v_{3}\times v_{1}}_{2}}\right)\label{onec}.
\end{flalign}
Consequently
\begin{equation}\label{oneb}
\langle z_{4},v_{1}\rangle
= \left\langle\frac{\theta_{23}}{2}\frac{v_{2}\times v_{3}}{\vnorm{v_{2}\times v_{3}}_{2}},v_{1}\right\rangle
 = \frac{\theta_{23}\det(v_{1},v_{2},v_{3})}{2\sin\theta_{23}}
\end{equation}
\end{prop}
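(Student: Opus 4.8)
The plan is to establish \eqref{onec} via the Cauchy projection formula (or equivalently the classical ``vector area'' identity for a region bounded by a closed curve on the sphere), and then to obtain \eqref{oneb} by a direct computation pairing \eqref{onec} against $v_1$.

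For \eqref{onec}: the starting point is the representation of $z_4=\int_{T_4}x\,d\sigma(x)$ as a ``vector area'' of the spherical triangle $T_4$. Concretely, I would use the fact (a form of the Cauchy projection formula, see \cite[p.~25]{Kol05}) that for a region $T$ on $S^2$ bounded by a piecewise-smooth closed curve $\partial T$ traversed positively with respect to the outward normal, one has $\int_T x\,d\sigma(x) = \tfrac12\oint_{\partial T} x\times dx$. The boundary $\partial T_4$ consists of three great-circle arcs: from $v_1$ to $v_2$, from $v_2$ to $v_3$, and from $v_3$ to $v_1$ (the orientation being fixed by $\det(v_1,v_2,v_3)>0$, which guarantees the correct sign and that the arcs are the short ones of length $\theta_{ij}\in(0,\pi)$, using Proposition~\ref{prop1}(e)). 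So I need to evaluate $\tfrac12\int_{\text{arc }v_i\to v_j} x\times dx$ for a single great-circle arc. Parametrize the arc from $v_i$ to $v_j$ by $x(t) = \cos t\, v_i + \sin t\, w$ for $t\in[0,\theta_{ij}]$, where $w$ is the unit vector in the plane $\mathrm{span}(v_i,v_j)$ orthogonal to $v_i$ and making an acute configuration with $v_j$; then $dx = (-\sin t\, v_i+\cos t\, w)\,dt$, and $x\times dx = (v_i\times w)\,dt$ since the $v_i\times v_i$ and $w\times w$ terms vanish and $\cos^2 t\,(v_i\times w) - \sin^2 t\,(w\times v_i) = (v_i\times w)$. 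Integrating gives $\theta_{ij}(v_i\times w)$. Finally one checks that $v_i\times w = \frac{v_i\times v_j}{\|v_i\times v_j\|_2}$: indeed $v_j = \cos\theta_{ij}\,v_i+\sin\theta_{ij}\,w$, so $v_i\times v_j = \sin\theta_{ij}\,(v_i\times w)$, and $v_i\times w$ is a unit vector because $v_i\perp w$. Summing the three arc contributions with the factor $\tfrac12$ yields exactly \eqref{onec}.

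For \eqref{oneb}: take the inner product of \eqref{onec} with $v_1$. The term $\theta_{12}\frac{v_1\times v_2}{\|v_1\times v_2\|_2}$ is orthogonal to $v_1$, and likewise $\theta_{31}\frac{v_3\times v_1}{\|v_3\times v_1\|_2}$ is orthogonal to $v_1$, so only the middle term survives, giving $\langle z_4,v_1\rangle = \frac{\theta_{23}}{2}\big\langle \frac{v_2\times v_3}{\|v_2\times v_3\|_2},v_1\big\rangle$. Now $\langle v_1, v_2\times v_3\rangle = \det(v_1,v_2,v_3)$ and $\|v_2\times v_3\|_2 = \sin\theta_{23}$ (valid since $\theta_{23}\in(0,\pi)$, again by Proposition~\ref{prop1}(e)), which gives the second equality in \eqref{oneb}.

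The only genuinely delicate point is bookkeeping the orientation: one must make sure that the hypothesis $\det(v_1,v_2,v_3)>0$ is exactly what makes the cyclic ordering $v_1\to v_2\to v_3\to v_1$ of the boundary arcs agree with the orientation convention in the Cauchy projection formula, and that each arc used is the minor arc (so that its length is $\theta_{ij}$ rather than $2\pi-\theta_{ij}$), which is where Proposition~\ref{prop1}(e) — that $T_4$ lies in an open hemisphere and $\theta_{ij},\Theta_{ij\ell}\in(0,\pi)$ — is used. Everything else is a routine integral. I expect the write-up to phrase the projection-formula step cleanly rather than rederiving it, citing \cite{Kol05}, and then to spend most of the space on the single-arc parametrization computation above.
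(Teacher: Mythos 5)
Your proposal is correct, but it takes a genuinely different (and arguably more direct) route than the paper. You compute $z_4$ as a vector-valued integral all at once, using the classical vector-area identity $\int_{T_4}x\,d\sigma(x)=\int_{T_4}n\,d\sigma=\tfrac12\oint_{\partial T_4}x\times dx$ (valid since the outward normal to $S^2$ at $x$ is $x$ itself and vector area depends only on the boundary), and then evaluate the line integral arc-by-arc via the explicit parametrization $x(t)=\cos t\,v_i+\sin t\,w$. The paper instead works scalar-by-scalar: it fixes $y\in T_4\setminus\partial T_4$, writes $\langle z_4,y\rangle=\int_{T_4}\langle x,y\rangle\,d\sigma(x)$, applies the Cauchy projection formula to the closed surface $\partial H$ where $H$ is the convex hull of $T_4\cup\{0\}$ (so $\partial H$ is $T_4$ together with three planar disc sectors), uses convexity of $H$ to equate the two signed projected areas, and thereby isolates the contribution of the three disc sectors; it then observes that the resulting scalar identity, holding for all $y$ in the interior of $T_4$, forces \eqref{onec}. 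Both arguments use the same underlying geometry and the hypothesis $\det(v_1,v_2,v_3)>0$ for orientation; your version is shorter and avoids the convex-hull and "holds for all $y$" steps, at the cost of invoking the line-integral form of the identity rather than the projected-area form cited in~\cite{Kol05}. Your orientation bookkeeping (the cyclic order $v_1\to v_2\to v_3$ matches the outward normal when $\det(v_1,v_2,v_3)>0$, and each arc is the minor arc of length $\theta_{ij}\in(0,\pi)$ by Proposition~\ref{prop1}(e)) is exactly the point that needs care, and you handle it correctly. The derivation of \eqref{oneb} from \eqref{onec} is identical in both.
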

\begin{proof}
 Fix $y\in T_{4}\setminus\partial T_{4}$.  By definition of $z_{4}$,
\begin{equation}\label{one0.4}
\langle z_{4},y\rangle=\int_{T_{4}}\langle x,y\rangle d\sigma(x).
\end{equation}
Let $H\subset\R^{3}$ be the convex hull of $T_{4}\cup\{0\}$ and let $\{U_{i}\}_{i=1}^{3}\subset\R^{3}$ be half spaces through the origin such that $\{x\in\R^{3}\colon\vnorm{x}_{2}\leq 1\}\bigcap\left(\bigcap_{i=1}^{3}U_{i}\right)=H$; see Figure ~\ref{fig2}.
 \begin{figure}[htbp!]
   \centering
   \def\svgwidth{\figtwowidth} 
   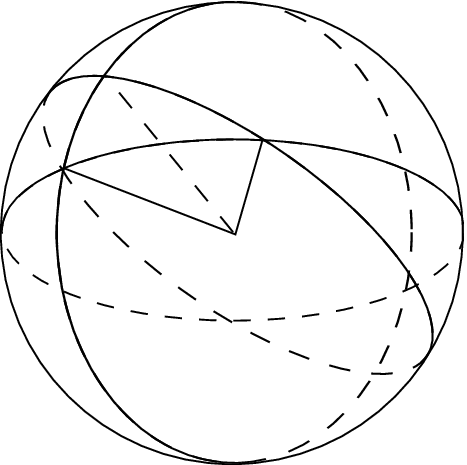
\caption{Spherical triangle together with three disc sectors, each containing $\{v_{i},v_{j},0\}$.}
\label{fig2}
\end{figure}

 Define $\Pi\colon S^{2}\to\R^{2}$ so that $\Pi(x)$ is the orthogonal projection of $x\in S^{2}\subset\R^{3}$ onto the unique plane that is intersecting the origin and perpendicular to $y$.  Applying the Cauchy projection formula (see for example \cite[p.~25]{Kol05}) to $T_{4}$, we see that
\begin{equation}\label{one0.5}
\int_{T_{4}}\langle x,y\rangle d\sigma(x)
=\mbox{Area}_{\R^{2}}\Pi(T_{4}\cap\{x\in S^{2}\colon \langle y,x\rangle>0\})
-\mbox{Area}_{\R^{2}}\Pi(T_{4}\cap\{x\in S^{2}\colon \langle y,x\rangle\leq0\}).
\end{equation}

Define $A\colonequals (T_{4}\setminus\partial T_{4})\cap\{x\in S^{2}\colon \langle y,x\rangle>0\}$ and $B\colonequals(\partial H)\setminus A$.  Note that $\partial H$ is the union of $T_{4}$ and three sectors of discs $\{D_{i}\}_{i=1}^{3}$ such that $D_{i}\subset\partial U_{i}$.  Since $y\in\bigcap_{i=1}^{3}U_{i}$, the exterior normal $n(x)$ of $x\in D_{i}\subset\partial H$ has nonpositive inner product with $y$ for $i=1,2,3$.  So, $\langle n(x),y\rangle>0$ for $x\in\partial H$ if and only if $x\in A$.  Since $H$ is convex, $\Pi\colon A\to\Pi(\partial H)$ and $\Pi\colon B\to\Pi(\partial H)$ are bijections almost everywhere.  Since $\partial H=T_{4}\bigcup\left(\bigcup_{i=1}^{3}D_{i}\right)$, the Cauchy projection formula applied to right side of the identity $\mbox{Area}_{\R^{2}}(\Pi(A))=\mbox{Area}_{\R^{2}}(\Pi(B))$ gives
\begin{equation}\label{one0.6}
\begin{aligned}
&\mbox{Area}_{\R^{2}}\Pi(T_{4}\cap\{x\in S^{2}\colon \langle y,x\rangle>0\})
=\mbox{Area}_{\R^{2}}\Pi(T_{4}\cap\{x\in S^{2}\colon \langle y,x\rangle\leq0\})\\
&\qquad+\frac{\theta_{12}}{2}\left\langle y,\frac{v_{1}\times v_{2}}{\vnorm{v_{1}\times v_{2}}_{2}}\right\rangle
+\frac{\theta_{23}}{2}\left\langle y,\frac{v_{2}\times v_{3}}{\vnorm{v_{2}\times v_{3}}_{2}}\right\rangle
+\frac{\theta_{13}}{2}\left\langle y,\frac{v_{3}\times v_{1}}{\vnorm{v_{3}\times v_{1}}_{2}}\right\rangle.
\end{aligned}
\end{equation}
Since \eqref{one0.6} holds for any $y\in S^{2}\cap T_{4}$, \eqref{one0.4}, \eqref{one0.5} and \eqref{one0.6} give \eqref{onec}.
\end{proof}

Note that the vectors in the determinant of \eqref{oneb} are assumed to have positive orientation.  So, when we apply this equation below we will need to respect orientations.

\begin{cor}\label{cor0.5}
Under the assumptions of Proposition \ref{prop1.5}, we have
\begin{eqnarray}
\label{onef}\|z_4\|_2^2&=&\frac{1}{4}\left(\theta_{12}^2 +\theta_{23}^2 +\theta_{13}^2
-2\cos(\Theta_{213})\theta_{12}\theta_{13}-2\cos(\Theta_{123})\theta_{12}\theta_{23}-2\cos(\Theta_{231})\theta_{23}\theta_{13}\right)\\
&=&
\frac{1}{4}\left(\theta_{12}^2 +\theta_{23}^2 +\theta_{31}^2 +2\frac{\cos\theta_{12}\cos\theta_{13}-\cos\theta_{23}}{\sin\theta_{12}\sin\theta_{13}}\theta_{12}\theta_{13}\right.\nonumber\\
&&\left.\quad+2\frac{\cos\theta_{12}\cos\theta_{23}-\cos\theta_{13}}{\sin\theta_{12}\sin\theta_{23}}\theta_{12}\theta_{23}
+2\frac{\cos\theta_{23}\cos\theta_{13}-\cos\theta_{12}}{\sin\theta_{23}\sin\theta_{13}}\theta_{23}\theta_{13}
\right).\label{onee}
\end{eqnarray}

\end{cor}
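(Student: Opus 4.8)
\emph{Plan.}
The idea is to expand $\|z_4\|_2^2$ directly from the formula \eqref{onec} of Proposition~\ref{prop1.5}. Write $n_{ab}=(v_a\times v_b)/\|v_a\times v_b\|_2$, so that \eqref{onec} reads $z_4=\frac12(\theta_{12}n_{12}+\theta_{23}n_{23}+\theta_{31}n_{31})$. By Proposition~\ref{prop1}(e) each $\theta_{ab}\in(0,\pi)$, hence $\|v_a\times v_b\|_2=\sin\theta_{ab}>0$ and each $n_{ab}$ is a genuine unit vector. Expanding the square by bilinearity of the inner product gives
\begin{multline*}
\|z_4\|_2^2=\frac14\Big(\theta_{12}^2+\theta_{23}^2+\theta_{31}^2\\
+2\theta_{12}\theta_{23}\langle n_{12},n_{23}\rangle+2\theta_{23}\theta_{31}\langle n_{23},n_{31}\rangle+2\theta_{31}\theta_{12}\langle n_{31},n_{12}\rangle\Big),
\end{multline*}
so the whole problem reduces to evaluating the three inner products of consecutive great-circle normals.

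I would first establish \eqref{onee}. Each of the three inner products is computed from the Binet--Cauchy (Lagrange) identity $\langle u\times v,\,w\times x\rangle=\langle u,w\rangle\langle v,x\rangle-\langle u,x\rangle\langle v,w\rangle$ together with $\langle v_i,v_j\rangle=\cos\theta_{ij}$ and $\|v_i\times v_j\|_2=\sin\theta_{ij}$: for instance $\langle v_1\times v_2,\,v_2\times v_3\rangle=\cos\theta_{12}\cos\theta_{23}-\cos\theta_{13}$, whence $\langle n_{12},n_{23}\rangle=(\cos\theta_{12}\cos\theta_{23}-\cos\theta_{13})/(\sin\theta_{12}\sin\theta_{23})$, and the analogous computations handle $\langle n_{23},n_{31}\rangle$ and $\langle n_{31},n_{12}\rangle$ in exactly the same way. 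The only delicate point is keeping track of orientations, since the three normals in \eqref{onec} are taken in the cyclic order $v_1\times v_2,\ v_2\times v_3,\ v_3\times v_1$, so one must respect the sign $v_3\times v_1=-(v_1\times v_3)$; this is precisely the orientation issue flagged in the remark following the statement. Substituting the three resulting quotients into the display above yields \eqref{onee}.

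Finally, \eqref{onef} follows from \eqref{onee} via the spherical law of cosines applied to the spherical triangle $T_4$, which is legitimate because Proposition~\ref{prop1}(e) guarantees that $T_4$ is nondegenerate with $\theta_{ij},\Theta_{ij\ell}\in(0,\pi)$, so all the sines involved are strictly positive. Since $\Theta_{ij\ell}$ is the spherical angle of $T_4$ at the vertex $v_j$, the law of cosines identifies $(\cos\theta_{ab}\cos\theta_{bc}-\cos\theta_{ac})/(\sin\theta_{ab}\sin\theta_{bc})$ with $-\cos\Theta_{abc}$ for each of the three pairs of consecutive edges meeting at a vertex of $T_4$, and plugging these into \eqref{onee} produces \eqref{onef}. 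I expect the only real obstacle to be the orientation/sign bookkeeping of the middle paragraph --- matching the cyclically ordered cross products of \eqref{onec} against the conventions for $\Theta_{ij\ell}$ in \eqref{eq:def thetas} and against the standard form of the spherical law of cosines --- since the underlying algebra is entirely routine.
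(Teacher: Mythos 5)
Your proposal is correct and follows essentially the same route the paper uses: expand the square of the Minchin formula \eqref{onec}, evaluate the cross terms by the Lagrange (Binet--Cauchy) identity to obtain \eqref{onee}, and convert to \eqref{onef} by the spherical law of cosines. The paper's proof is a one-line instruction to do exactly this (modulo a small typo in its statement of the Lagrange identity, where the last factor should read $\langle a,d\rangle\langle b,c\rangle$); your write-up merely makes the sign/orientation bookkeeping explicit, which is a fair thing to emphasize.
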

\begin{proof}
Apply \eqref{onec}, and then use either the vector identity $\langle a\times b,c\times d\rangle=\langle a,c\rangle\langle b,d\rangle -\langle a,d\rangle\langle b,d\rangle$, or the spherical law of cosines.
\end{proof}

\begin{cor}\label{cor0.6}
For distinct $i,j\in \{1,2,3,4\}$ we have $\langle z_{i},v_{j}\rangle>0$.
\end{cor}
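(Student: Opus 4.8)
The plan is to deduce the statement directly from the explicit moment formula \eqref{onec} of Proposition~\ref{prop1.5}, after first recalling which of the points $\{v_\ell\}_{\ell=1}^4$ are the corners of a given triangle. Fix distinct $i,j$. By \eqref{eq:def v}, the vertex $v_j$ realizes the common value $\langle z_k,v_j\rangle$ for every $k\neq j$, and this common value strictly exceeds $\langle z_j,v_j\rangle$; hence by \eqref{ONE1} we have $v_j\in T_i$ for every $i\neq j$. Thus the three corners of the spherical triangle $T_i$ are exactly $\{v_\ell:\ell\neq i\}$, and it suffices to show that $\langle z_i,v\rangle>0$ for each of the three vertices $v$ of $T_i$.

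Since the assertion is symmetric in the indices, I would relabel so that $i=4$, i.e. the vertices of $T_4$ are $\{v_1,v_2,v_3\}$. By Proposition~\ref{prop1}(e) the vectors $v_1,v_2,v_3$ are not coplanar, so $\det(v_1,v_2,v_3)\neq0$; swapping two of them only flips the sign of the determinant, so we may assume $\det(v_1,v_2,v_3)>0$ and then apply Proposition~\ref{prop1.5}. Pairing the formula \eqref{onec} for $z_4$ against each of $v_1,v_2,v_3$ and using $\langle a\times b,a\rangle=\langle a\times b,b\rangle=0$ together with $\langle v_a\times v_b,v_c\rangle=\det(v_a,v_b,v_c)$ and $\|v_a\times v_b\|_2=\sin\theta_{ab}$, each of the three inner products collapses to a single term:
\begin{gather*}
\langle z_4,v_1\rangle=\frac{\theta_{23}\det(v_1,v_2,v_3)}{2\sin\theta_{23}},\qquad
\langle z_4,v_2\rangle=\frac{\theta_{31}\det(v_1,v_2,v_3)}{2\sin\theta_{31}},\\
\langle z_4,v_3\rangle=\frac{\theta_{12}\det(v_1,v_2,v_3)}{2\sin\theta_{12}}.
\end{gather*}
(The first identity is exactly \eqref{oneb}.) By Proposition~\ref{prop1}(e) each angle $\theta_{ab}$ lies in $(0,\pi)$, so $\sin\theta_{ab}>0$, and by our choice of ordering $\det(v_1,v_2,v_3)>0$; hence all three quantities are strictly positive. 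Since $\{v_1,v_2,v_3\}=\{v_\ell:\ell\neq 4\}$, this gives $\langle z_4,v_j\rangle>0$ for every $j\neq 4$, and relabeling yields the corollary.

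The argument is short and the only point requiring care is the orientation bookkeeping flagged in the remark following Proposition~\ref{prop1.5}: one must invoke \eqref{onec} only after ordering the vertices of $T_i$ so that their determinant is positive, and then observe that the \emph{symmetric} form of the three displayed identities makes the sign of $\langle z_i,v_j\rangle$ independent of which corner $v_j$ is. I do not anticipate any genuine obstacle here; the content is entirely contained in Propositions~\ref{prop1} and~\ref{prop1.5}.
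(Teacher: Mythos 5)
Your proof is correct and is essentially the paper's own argument: the paper also orders the three vertices of the relevant triangle so that $\det(v_k,v_j,v_\ell)>0$, applies~\eqref{oneb}, and invokes Proposition~\ref{prop1}(e) to conclude $\sin\theta_{k\ell}>0$. You have merely spelled out the cyclic-invariance of the determinant and the symmetric form of the three inner products, which the paper leaves implicit.
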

\begin{proof}
Let $k,\ell$ be such that $\det(v_{k},v_{j},v_{\ell})>0$, and apply~\eqref{oneb} and Proposition \ref{prop1}(e).
\end{proof}



\begin{lemma}\label{prop3}
 Write $\langle z_{4}, v_{1}\rangle\equalscolon a$, $\langle z_{4},v_{2}\rangle\equalscolon b$, $\langle z_{4},v_{3}\rangle\equalscolon c$ and $\langle z_{4},v_{4}\rangle\equalscolon-3d$.  Then the following relation holds
\begin{equation}\label{one1}
\frac{1}{a}v_{1}+\frac{1}{b}v_{2}+\frac{1}{c}v_{3}+\frac{1}{d}v_{4}=0
\end{equation}
\end{lemma}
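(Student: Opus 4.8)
The plan is to reduce~\eqref{one1} to the statement that a single explicit linear combination of $v_1,v_2,v_3,v_4$ is orthogonal to each of the spherical moments $z_1,z_2,z_3,z_4$, and then to use that the $z_i$ are not coplanar (hence span $\R^3$) to conclude that this combination vanishes. In particular, I do not expect to need the Minchin-type formula~\eqref{onec} or any orientation bookkeeping for this statement.

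The first step is to write down the full table of inner products $\langle z_i,v_j\rangle$. By~\eqref{eq:def v}, for each fixed $\ell$ the value $\langle z_i,v_\ell\rangle$ does not depend on $i\in\{1,2,3,4\}\setminus\{\ell\}$; call this common value $a_\ell$, noting $a_\ell>0$ by Corollary~\ref{cor0.6} (or merely $a_\ell\neq0$, since $a_\ell=0$ would force $v_\ell\perp z_i$ for all $i$, contradicting that the $z_i$ span $\R^3$). By Proposition~\ref{prop1}(c) one then has $\langle z_\ell,v_\ell\rangle=-3a_\ell$. In the notation of the lemma this says $a=a_1$, $b=a_2$, $c=a_3$, and $d=a_4$ (the last because $\langle z_4,v_4\rangle=-3d=-3a_4$), so in particular $a,b,c,d$ are nonzero and the reciprocals in~\eqref{one1} make sense.

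The second step is the computation. Put $w\colonequals\frac1a v_1+\frac1b v_2+\frac1c v_3+\frac1d v_4=\sum_{j=1}^4 a_j^{-1}v_j\in\R^3$. For any $k\in\{1,2,3,4\}$, separating the diagonal term,
\[
\langle z_k,w\rangle=\sum_{j=1}^4\frac{\langle z_k,v_j\rangle}{a_j}=\sum_{j\neq k}\frac{a_j}{a_j}+\frac{-3a_k}{a_k}=3-3=0 .
\]
Hence $w$ is orthogonal to all four $z_i$; since $\{z_i\}_{i=1}^4$ are not coplanar they span $\R^3$, and therefore $w=0$, which is exactly~\eqref{one1}.

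There is no real obstacle here: the only inputs are the rigid shape of the inner-product table (off-diagonal entry $a_j$, diagonal entry $-3a_j$) and the spanning property of $\{z_i\}$. The only care needed is to cite the right prior facts — \eqref{eq:def v} for the off-diagonal equalities, Proposition~\ref{prop1}(c) for the diagonal entries, and the non-coplanarity of $\{z_i\}$ from the setup of Section~\ref{sec:identities} to pass from $w\perp z_i$ for all $i$ to $w=0$. Closed forms for $a,b,c,d$ (e.g. $a=\frac{\theta_{23}\det(v_1,v_2,v_3)}{2\sin\theta_{23}}$ via~\eqref{oneb}) are not needed for~\eqref{one1} itself and can be deferred.
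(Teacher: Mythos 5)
Your proof is correct, and it takes a genuinely different route from the paper's. The paper fixes the basis $\{v_1,v_2,v_3\}$ (using $\det(v_1,v_2,v_3)\neq0$ from Proposition~\ref{prop1}(e)), writes $v_4=\sum_{i=1}^3\alpha_i v_i$, pairs against $z_1,z_2,z_3$ via Proposition~\ref{prop1}(b),(c) to get the $3\times3$ linear system $d=-3a\alpha_1+b\alpha_2+c\alpha_3=a\alpha_1-3b\alpha_2+c\alpha_3=a\alpha_1+b\alpha_2-3c\alpha_3$, and solves it to find $\alpha_i=-d/a,-d/b,-d/c$; nonvanishing of $a,b,c$ is imported from Corollary~\ref{cor0.6}, which in turn rests on the Minchin formula~\eqref{oneb}. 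You instead treat the four indices symmetrically: you form $w=\sum_j a_j^{-1}v_j$, show $\langle z_k,w\rangle=\sum_{j\neq k}1-3=0$ for every $k$, and conclude $w=0$ from the spanning of $\{z_i\}_{i=1}^4$ (non-coplanarity). Your approach is shorter, needs no choice of basis or system-solving, and as you observe it can bypass Corollary~\ref{cor0.6} entirely: the nonvanishing of $a,b,c,d$ also follows from the spanning of $\{z_i\}$ together with Proposition~\ref{prop1}(c), since $a_\ell=0$ would force $v_\ell\perp z_i$ for all $i$. One minor remark: the paper's route, while longer, simultaneously records that at least one $\alpha_i\neq0$, which is a small sanity check it uses; in your version that role is played directly by $v_\ell\neq0$.
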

\begin{proof}
By Proposition \ref{prop1}(e) we have $\det(v_{1},v_{2},v_{3})\neq0$, so we may write $v_{4}=\sum_{i=1}^{3}\alpha_{i}v_{i}$ for some $\alpha_1,\alpha_2,\alpha_3\in \R$.  By Proposition \ref{prop1}(c),
$$d=\langle z_{1},v_{4}\rangle=\langle z_{2},v_{4}\rangle=\langle z_{3},v_{4}\rangle.$$
Substituting in the expression $v_{4}=\sum_{i=1}^{3}\alpha_{i}v_{i}$ and then applying Proposition \ref{prop1}(b) and (c) several times, we get
$$
d = -3a\alpha_1 +b\alpha_{2} +c\alpha_{3}
= a\alpha_1 -3b\alpha_{2} +c\alpha_{3}
= a\alpha_1 +b\alpha_{2} -3c\alpha_{3}.
$$
Solving this system of equations gives
$$d=-a\alpha_{1}=-b\alpha_{2}=-c\alpha_{3}.$$
By Corollary \ref{cor0.6}, we may write $\alpha_{1}=-d/a,\alpha_{2}=-d/b,\alpha_{3}=-d/c$.  Substituting these equalities into the expression $v_{4}=\sum_{i=1}^{3}\alpha_{i}v_{i}$ gives \eqref{one1}.  Note that at least one $\alpha_{i}$ is nonzero, so $d$ is nonzero as well, that is, we have not made a division by zero.
\end{proof}

Since $\det(v_{1},v_{2},v_{3})\neq0$, we have $\vnormf{\frac{1}{a}v_{1}+\frac{1}{b}v_{2}+\frac{1}{c}v_{3}}_{2}\neq0$.  Since $\vnorm{v_{4}}_{2}=1$, Lemma \ref{prop3} implies
\begin{equation}\label{one1.4}
v_{4}=-\frac{\frac{1}{a}v_{1}+\frac{1}{b}v_{2}+\frac{1}{c}v_{3}}{\vnormf{\frac{1}{a}v_{1}+\frac{1}{b}v_{2}+\frac{1}{c}v_{3}}_{2}}.
\end{equation}
A priori, we have only determined $v_{4}$ up to multiplication by $\pm1$.  However, Proposition \ref{prop1}(c) and Corollary \ref{cor0.6} imply that $d=(-1/3)\langle z_{4},v_{4}\rangle>0$.  So, taking the inner product of $z_{4}$ with both sides of \eqref{one1.4}  shows that \eqref{one1.4} has the correct sign.

Applying \eqref{oneb} to \eqref{one1.4} gives
\begin{equation}\label{one1.5}
v_{4}=-\frac{\frac{\sin\theta_{23}}{\theta_{23}}v_{1}+\frac{\sin\theta_{13}}{\theta_{13}}v_{2}+\frac{\sin\theta_{12}}{\theta_{12}}v_{3}}
{\vnorm{\frac{\sin\theta_{23}}{\theta_{23}}v_{1}+\frac{\sin\theta_{13}}{\theta_{13}}v_{2}+\frac{\sin\theta_{12}}{\theta_{12}}v_{3}}_{2}}.
\end{equation}

Define, as we did in Section~\ref{sec:overview},
\begin{equation}\label{one1.6}
\lambda
\colonequals\vnorm{\frac{\sin\theta_{23}}{\theta_{23}}v_{1}+\frac{\sin\theta_{13}}{\theta_{13}}v_{2}+\frac{\sin\theta_{12}}{\theta_{12}}v_{3}}_{2}^{2}
\qquad\qquad\qquad\qquad
\end{equation}
\begin{equation}\label{one1.7}
\begin{split}
& = \frac{\sin^{2}\theta_{23}}{\theta_{23}^{2}}+\frac{\sin^{2}\theta_{13}}{\theta_{13}^{2}}+\frac{\sin^{2}\theta_{12}}{\theta_{12}^{2}}
+2\frac{\sin\theta_{23}\sin\theta_{13}}{\theta_{23}\theta_{13}}\cos\theta_{12}\\
&\qquad+2\frac{\sin\theta_{23}\sin\theta_{12}}{\theta_{23}\theta_{12}}\cos\theta_{13}
+2\frac{\sin\theta_{12}\sin\theta_{13}}{\theta_{12}\theta_{13}}\cos\theta_{23}.
\end{split}
\end{equation}
%
%

\begin{prop}\label{prop5}
\textup{(a)} All four vertices of $P$ are not contained in any closed half sphere. 
\begin{itemize}
\item[(b)]  The function $F(P)$ is well-defined as a function of the vertices $\{v_{i}\}_{i=1}^{4}$.
\end{itemize}
\end{prop}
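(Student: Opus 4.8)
The plan is to obtain both parts from the linear dependence among the vertices recorded in Lemma~\ref{prop3}, the strict positivity of Corollary~\ref{cor0.6}, and the closed-form evaluation of $\|z_i\|_2^2$ from Corollary~\ref{cor0.5}.

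For part (a), I would start from \eqref{one1}, which reads $\frac1a v_1+\frac1b v_2+\frac1c v_3+\frac1d v_4=0$ with $a=\langle z_4,v_1\rangle$, $b=\langle z_4,v_2\rangle$, $c=\langle z_4,v_3\rangle$, and $-3d=\langle z_4,v_4\rangle$. Corollary~\ref{cor0.6} gives $a,b,c>0$, and Proposition~\ref{prop1}(c) together with Corollary~\ref{cor0.6} gives $\langle z_4,v_4\rangle=-3\langle z_j,v_4\rangle<0$ for any $j\neq 4$, hence $d>0$; so the four vertices satisfy a linear relation with strictly positive coefficients. Suppose for contradiction that all $v_i$ lie in a closed hemisphere $\{x\in S^2:\langle x,w\rangle\geq 0\}$. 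Pairing the relation with $w$ gives $0=\frac1a\langle v_1,w\rangle+\frac1b\langle v_2,w\rangle+\frac1c\langle v_3,w\rangle+\frac1d\langle v_4,w\rangle$, a sum of nonnegative terms with positive coefficients, which forces $\langle v_i,w\rangle=0$ for all $i$. Then $v_1,v_2,v_3,v_4$ all lie on the great circle $S^2\cap w^\perp$, so in particular $v_1,v_2,v_3$ are coplanar, contradicting $\det(v_1,v_2,v_3)\neq 0$ from Proposition~\ref{prop1}(e). This proves (a).

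For part (b), recall $F(P)=\sum_{i=1}^4\|z_i\|_2^2$. Each $T_i$ is a spherical triangle whose vertex set is $\{v_j:j\neq i\}$ (three distinct, non-coplanar points by Proposition~\ref{prop1}(d),(e), these being by definition the triple points of the partition), so Corollary~\ref{cor0.5} applies and evaluates $\|z_i\|_2^2$ by the formula \eqref{onee}, which involves only the angles $\theta_{jk}=\arccos\langle v_j,v_k\rangle$ for $j,k\in\{1,2,3,4\}\setminus\{i\}$. Corollary~\ref{cor0.5} is stated for a positively oriented triple of vertices, but since $\det(v_j,v_k,v_\ell)\neq 0$ one may reorder $\{v_j:j\neq i\}$ to make the orientation positive; as the right-hand side of \eqref{onee} is symmetric under permuting its three angles and $\theta_{jk}=\theta_{kj}$, this reordering does not change the value. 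Hence $\|z_i\|_2^2$ depends only on the inner products $\langle v_j,v_k\rangle$, and summing over $i$ shows the same for $F(P)$; this is the asserted well-definedness. (It is also consistent with, and could alternatively be deduced from, identity \eqref{eq:sum of squares} once that is available.)

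I expect no real difficulty in either part, since all the needed ingredients are already in place. The one spot that should be handled with a sentence of care is the passage to a positively oriented triple before invoking Corollary~\ref{cor0.5} in part (b); everything else is routine sign bookkeeping.
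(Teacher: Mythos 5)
Both parts of your proof are correct. For (a), your argument is essentially the paper's: the paper invokes~\eqref{one1.5} to say $v_4$ is the normalized negative of a strictly positive combination of $v_1,v_2,v_3$ (the coefficients $\sin\theta_{ij}/\theta_{ij}$ being positive since $0<\theta_{ij}<\pi$), hence cannot lie in any closed hemisphere containing $\{v_1,v_2,v_3\}$, whereas you work with~\eqref{one1} and fill in the positivity of all four coefficients and the coplanarity contradiction explicitly. Same mechanism, a bit more spelled out on your end.

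For (b), your route is genuinely different from the paper's. The paper argues that the vertices determine the partition itself: by Proposition~\ref{prop1}(e) the edges $T_i\cap T_j$ are geodesics of length strictly less than $\pi$, hence are uniquely determined by their endpoint vertices, and the resulting system of arcs in turn determines $P$ (and therefore $F(P)$). You instead bypass reconstructing $P$ and show directly that each $\|z_i\|_2^2$ is given by the closed-form formula~\eqref{onee}, which depends only on the pairwise angles $\theta_{jk}$; summing gives $F(P)$ as a function of the inner products. Both are valid, and your handling of the orientation issue (reorder to make $\det>0$, use symmetry of~\eqref{onee}) is fine. The paper's argument yields something slightly stronger and is arguably what is actually being used downstream — when Proposition~\ref{prop8} treats $G$ on $(S^2)^4$ and moves back and forth between viewing things as functions of $P$ or of the vertices, it is the fact that the vertices determine the \emph{partition} (not just the numerical value $F(P)$) that is wanted. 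Your argument establishes the literal claim of (b) but not that stronger geometric statement, so if one wanted to replace the paper's proof wholesale one would want to keep the paper's observation about edges of length $<\pi$ for the later use.
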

\begin{proof}
(a) By Proposition \ref{prop1}(e) we have $0<\theta_{ij}<\pi$.  \eqref{one1.5} says that $v_{4}$ is in the complement of all closed half spheres containing $\{v_{1},v_{2},v_{3}\}$.

(b) From Proposition \ref{prop1}(e), the edges $T_{i}\cap T_{j}$ of $P$ are geodesics of length less than $\pi$, so the vertices uniquely determine the edges.  Finally, the edges determine $P$ itself.
\end{proof}
%
%

\begin{prop}[Restrictions on products of opposite edges]\label{prop6}
\begin{equation}\label{one2}
\frac{\sin\theta_{12}\sin\theta_{34}}{\theta_{12}\theta_{34}}
=\frac{\sin\theta_{13}\sin\theta_{24}}{\theta_{13}\theta_{24}}
=\frac{\sin\theta_{14}\sin\theta_{23}}{\theta_{14}\theta_{23}}.
\end{equation}
\end{prop}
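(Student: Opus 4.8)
The plan is to show that all three products in~\eqref{one2} equal one and the same symmetric quantity, namely a constant multiple of the product of the absolute triple determinants of the four vertices. The key bookkeeping device is the observation that, by Proposition~\ref{prop1}(c), for each $i$ the number $q_i\colonequals \langle z_j,v_i\rangle$ does not depend on the choice of $j\in\{1,2,3,4\}\setminus\{i\}$, and that $q_i>0$ by Corollary~\ref{cor0.6}.

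First I would rewrite the conclusion of Lemma~\ref{prop3} in terms of the $q_i$. In the notation of that lemma, $a=\langle z_4,v_1\rangle=q_1$, $b=\langle z_4,v_2\rangle=q_2$, $c=\langle z_4,v_3\rangle=q_3$, while Proposition~\ref{prop1}(c) gives $\langle z_4,v_4\rangle=-3\langle z_1,v_4\rangle$, i.e.\ $d=q_4$. Hence Lemma~\ref{prop3} says precisely that $\sum_{i=1}^4\frac1{q_i}v_i=0$, a relation symmetric in the four indices. Since any three of $v_1,v_2,v_3,v_4$ are linearly independent by Proposition~\ref{prop1}(e), the kernel of the $3\times4$ matrix with columns $v_1,v_2,v_3,v_4$ is one-dimensional, and by the standard cofactor identity it is spanned by the vector whose $i$-th entry is $(-1)^{i+1}\det(v_1,\dots,\widehat{v_i},\dots,v_4)$. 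Comparing this spanning vector with the strictly positive vector $(1/q_1,\dots,1/q_4)$ forces $\frac1{q_i}=c\,D_i$ for a single constant $c>0$, where $D_i\colonequals|\det(v_j,v_k,v_\ell)|$ with $\{j,k,\ell\}=\{1,2,3,4\}\setminus\{i\}$ (the consistency of signs is exactly the statement that $0$ lies in the interior of $\mathrm{conv}\{v_i\}$, cf.\ Proposition~\ref{prop5}(a)).

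Next I would bring in the moment formula. Applying~\eqref{oneb} to each spherical triangle $T_\ell$ and each of its vertices $v_i$ (after relabeling $v_1,v_2,v_3$ to make the relevant determinant positive, which is harmless since every quantity in sight is positive), where the edge of $T_\ell$ opposite $v_i$ joins the remaining two vertices $v_j,v_k$, gives
$$
q_i=\langle z_\ell,v_i\rangle=\frac{\theta_{jk}\,|\det(v_i,v_j,v_k)|}{2\sin\theta_{jk}}.
$$
Now fix a pair $\{i,j\}$ with complementary pair $\{k,\ell\}$. Taking $T_\ell$ and its vertex $v_k$, whose opposite edge has angle $\theta_{ij}$, the displayed identity reads $q_k=\frac{\theta_{ij}\,|\det(v_i,v_j,v_k)|}{2\sin\theta_{ij}}$, while the previous paragraph gives $q_k=\frac1{c\,|\det(v_i,v_j,v_\ell)|}$. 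Equating,
$$
\frac{\sin\theta_{ij}}{\theta_{ij}}=\frac{c}{2}\,|\det(v_i,v_j,v_k)|\cdot|\det(v_i,v_j,v_\ell)|.
$$

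Finally, multiplying the relation for $\theta_{ij}$ by the one for $\theta_{k\ell}$, the right-hand side becomes $\frac{c^2}{4}$ times the product of $|\det(v_i,v_j,v_k)|$, $|\det(v_i,v_j,v_\ell)|$, $|\det(v_i,v_k,v_\ell)|$, $|\det(v_j,v_k,v_\ell)|$, i.e.\ the product of $|\det|$ over all four $3$-element subsets of $\{1,2,3,4\}$. This is symmetric in the indices, hence identical for the three pairings $\{1,2\}/\{3,4\}$, $\{1,3\}/\{2,4\}$, $\{1,4\}/\{2,3\}$, which is exactly~\eqref{one2}. I do not anticipate a genuine obstacle here; the only points requiring care are the cofactor identification of $(1/q_i)_i$ up to a positive scalar (which rests on Proposition~\ref{prop1}(e) and the positivity in Corollary~\ref{cor0.6}) and keeping orientations consistent when invoking~\eqref{oneb}, both of which are routine.
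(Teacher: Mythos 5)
Your proof is correct, and it takes a genuinely different route from the paper. The paper's proof of Proposition~\ref{prop6} is a direct computation: it starts from the single relation $\langle z_3,v_1\rangle=\langle z_2,v_1\rangle$ (from Proposition~\ref{prop1}(c)), applies~\eqref{oneb} to both sides, and substitutes the explicit formula~\eqref{one1.5} expressing $v_4$ in terms of $v_1,v_2,v_3$; the resulting determinants cancel to give the first equality of~\eqref{one2}, and the second equality is obtained by repeating the computation with a different choice of indices. You instead exploit the full symmetry of Lemma~\ref{prop3} at once: rewriting it as $\sum_{i=1}^4 q_i^{-1}v_i=0$ with $q_i=\langle z_j,v_i\rangle$ ($j\neq i$), you identify $(q_i^{-1})_i$ with the one-dimensional kernel of $[v_1\,|\,v_2\,|\,v_3\,|\,v_4]$, hence with the cofactor vector, giving $q_i^{-1}=c\,|\det(v_j,v_k,v_\ell)|$. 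Combining this with~\eqref{oneb} yields the pleasant intermediate identity
\begin{equation*}
\frac{\sin\theta_{ij}}{\theta_{ij}}=\frac{c}{2}\,\bigl|\det(v_i,v_j,v_k)\bigr|\cdot\bigl|\det(v_i,v_j,v_\ell)\bigr|,
\end{equation*}
from which~\eqref{one2} drops out by multiplying the relation for a pair $\{i,j\}$ with the one for its complement $\{k,\ell\}$ and noting that the product of all four $3\times 3$ determinants is independent of the pairing. What your argument buys is transparency about \emph{why} the three products coincide: they are all equal to $\frac{c^2}{4}\prod_{|S|=3}|\det(v_S)|$. What the paper's approach buys is economy — it avoids the cofactor lemma and the bookkeeping with $q_i$, getting the result in two lines from~\eqref{one1.5} (which the paper has already set up for other purposes). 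Both are sound; the only point worth tightening in your write-up is the parenthetical about $0\in\mathrm{int\,conv}\{v_i\}$, which is actually a consequence of (not a prerequisite for) the proportionality you are establishing — once you know $(q_i^{-1})_i$ is a positive vector in the one-dimensional kernel, the sign-consistency of the cofactor vector is automatic.
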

\begin{proof}
From Proposition \ref{prop1}(c) we have $\langle z_{3},v_{1}\rangle=\langle z_{2},v_{1}\rangle$.  For each side of the equality $\langle z_{3},v_{1}\rangle=\langle z_{2},v_{1}\rangle$, apply \eqref{oneb} for $\langle z_{i},v_{1}\rangle$ and then substitute in \eqref{one1.5} to get
$$-\frac{\det(v_{1},v_{3},v_{2})}{2\sqrt{\lambda}}\frac{\sin\theta_{12}}{\theta_{12}}\frac{\theta_{24}}{\sin\theta_{24}}
=-\frac{\det(v_{1},v_{3},v_{2})}{2\sqrt{\lambda}}\frac{\sin\theta_{13}}{\theta_{13}}\frac{\theta_{34}}{\sin\theta_{34}}.$$
Canceling terms gives the first equality of \eqref{one2}.  The second follows similarly: Proposition \ref{prop1}(c) says $\langle z_{3},v_{2}\rangle=\langle z_{1},v_{2}\rangle$, and so on.
\end{proof}


\begin{lemma}[Restrictions on a single triangle]\label{prop7}
  Let $\lambda$ be defined as in \eqref{one1.7} and let $\gamma$ be defined as in~\eqref{one3.00}, i.e.,
\begin{equation}\label{one3.0}
\gamma\colonequals\frac{\sin\theta_{23}}
{\theta_{23}}\cos\theta_{12}+\frac{\sin\theta_{13}}{\theta_{13}}+\frac{\sin\theta_{12}}{\theta_{12}}\cos\theta_{23}.
\end{equation}
Then the following holds
\begin{equation}\label{one3}
\cos\left(\sqrt{\lambda-\gamma^{2}}\frac{\theta_{23}\theta_{12}}{\sin\theta_{23}\sin\theta_{12}}\right)=-\frac{\gamma}{\sqrt{\lambda}}.
\end{equation}
\begin{equation}\label{one5}
\cos\left(\sqrt{\theta_{12}^{2}+\theta_{23}^{2}+2\theta_{12}\theta_{23}\cos\Theta_{123}}\right)=-\frac{\gamma}{\sqrt{\lambda}}.
\end{equation}
\end{lemma}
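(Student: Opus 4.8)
The plan is to show that both sides of~\eqref{one3} equal $\cos\theta_{24}$, where $\theta_{24}=\arccos\langle v_2,v_4\rangle\in(0,\pi)$ is the spherical distance between the vertices $v_2$ and $v_4$ (well defined, and in $(0,\pi)$, by Proposition~\ref{prop1}(e)). The analytic bridge between the two sides will be the identity
\[
\frac{\theta_{24}}{\sin\theta_{24}}=\sqrt{\lambda}\;\frac{\theta_{12}\theta_{23}}{\sin\theta_{12}\sin\theta_{23}},
\]
after which~\eqref{one5} follows from~\eqref{one3} by a purely trigonometric manipulation.

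The first step is immediate: take the inner product of~\eqref{one1.5} with $v_2$. Using $\langle v_1,v_2\rangle=\cos\theta_{12}$, $\langle v_2,v_2\rangle=1$ and $\langle v_3,v_2\rangle=\cos\theta_{23}$, the right-hand side of~\eqref{one1.5} collapses, by the definition~\eqref{one3.0} of $\gamma$, to $-\gamma/\sqrt{\lambda}$, while the left-hand side is $\langle v_4,v_2\rangle=\cos\theta_{24}$. Thus $\cos\theta_{24}=-\gamma/\sqrt\lambda$, which already identifies the right-hand sides of~\eqref{one3} and~\eqref{one5}; it also forces $\gamma^2\le\lambda$ and gives $\lambda-\gamma^2=\lambda\sin^2\theta_{24}\ge0$ together with $\sin\theta_{24}=\sqrt{\lambda-\gamma^2}/\sqrt{\lambda}$ (the positive root, since $\theta_{24}\in(0,\pi)$).

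Next I would establish the bridge identity. By Proposition~\ref{prop1}(b), $\langle z_1,v_3\rangle=\langle z_4,v_3\rangle$. Apply Proposition~\ref{prop1.5}/\eqref{oneb} to each of these, listing the triangle's vertices in positive orientation so that the determinant appearing there is the absolute value of the determinant of the three vertices: for $T_4$, whose vertices are $v_1,v_2,v_3$ and in which the edge opposite $v_3$ has length $\theta_{12}$, this gives $\langle z_4,v_3\rangle=\frac{\theta_{12}}{2\sin\theta_{12}}\,|\det(v_1,v_2,v_3)|$; for $T_1$, whose vertices are $v_2,v_3,v_4$ and in which the edge opposite $v_3$ has length $\theta_{24}$, it gives $\langle z_1,v_3\rangle=\frac{\theta_{24}}{2\sin\theta_{24}}\,|\det(v_2,v_3,v_4)|$. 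Expanding $\det(v_2,v_3,v_4)$ via~\eqref{one1.5} (the $v_2$ and $v_3$ summands contribute repeated columns and drop out) yields $|\det(v_2,v_3,v_4)|=\frac{1}{\sqrt\lambda}\,\frac{\sin\theta_{23}}{\theta_{23}}\,|\det(v_1,v_2,v_3)|$. Equating the two expressions for $\langle z_1,v_3\rangle=\langle z_4,v_3\rangle$ and cancelling the factor $|\det(v_1,v_2,v_3)|/2$, which is nonzero by Proposition~\ref{prop1}(e), yields the bridge identity. Combining it with the first step gives $\theta_{24}=\sqrt{\lambda-\gamma^2}\,\frac{\theta_{12}\theta_{23}}{\sin\theta_{12}\sin\theta_{23}}$, and taking cosines (recall $\cos\theta_{24}=-\gamma/\sqrt\lambda$) gives~\eqref{one3}.

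For~\eqref{one5} it then suffices to match the arguments of the two cosines, i.e. to verify $(\lambda-\gamma^2)\frac{\theta_{12}^2\theta_{23}^2}{\sin^2\theta_{12}\sin^2\theta_{23}}=\theta_{12}^2+\theta_{23}^2+2\theta_{12}\theta_{23}\cos\Theta_{123}$ (both sides being nonnegative). This is a direct computation from the definitions~\eqref{one1.7} and~\eqref{one3.0}: one finds $\lambda-\gamma^2=\frac{\sin^2\theta_{23}}{\theta_{23}^2}\sin^2\theta_{12}+\frac{\sin^2\theta_{12}}{\theta_{12}^2}\sin^2\theta_{23}+2\frac{\sin\theta_{12}\sin\theta_{23}}{\theta_{12}\theta_{23}}\bigl(\cos\theta_{13}-\cos\theta_{12}\cos\theta_{23}\bigr)$, after which the claim follows by multiplying through by $\frac{\theta_{12}^2\theta_{23}^2}{\sin^2\theta_{12}\sin^2\theta_{23}}$ and substituting the spherical law of cosines $\cos\theta_{13}=\cos\theta_{12}\cos\theta_{23}+\sin\theta_{12}\sin\theta_{23}\cos\Theta_{123}$. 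The main obstacle is the middle step, the bridge identity: one must apply~\eqref{oneb} to the correct triangle with its vertices oriented positively, correctly identify the edge opposite $v_3$ in each triangle (so as to read off the right angle $\theta_{ij}$), and track the sign produced when $\det(v_2,v_3,v_4)$ is expanded via~\eqref{one1.5}. The first and last steps are, respectively, a one-line inner-product computation and a mechanical trigonometric identity.
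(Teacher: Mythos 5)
Your proof is correct and takes essentially the same route as the paper: derive $\cos\theta_{24}=-\gamma/\sqrt\lambda$ by taking the inner product of~\eqref{one1.5} with $v_2$, derive the "bridge identity" $\frac{\theta_{24}}{\sin\theta_{24}}=\sqrt\lambda\,\frac{\theta_{12}\theta_{23}}{\sin\theta_{12}\sin\theta_{23}}$ from Proposition~\ref{prop1}(b) together with~\eqref{oneb} and~\eqref{one1.5}, combine to get~\eqref{one3}, and reduce to~\eqref{one5} by the spherical law of cosines. The only cosmetic difference is that you use $\langle z_1,v_3\rangle=\langle z_4,v_3\rangle$ where the paper uses $\langle z_4,v_1\rangle=\langle z_3,v_1\rangle$; both instances of Proposition~\ref{prop1}(b) yield the identical relation~\eqref{one4}.
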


\begin{proof}
Arguing as in Proposition \ref{prop6}, we start with Proposition \ref{prop1}(c) and the equality
$\langle z_{4},v_{1}\rangle=\langle z_{3},v_{1}\rangle$.
For each side of this equality, apply \eqref{oneb} for $\langle z_{i},v_{1}\rangle$ and substitute in \eqref{one1.5} to get
\begin{equation}\label{one4}
\frac{\theta_{23}\theta_{12}}{\sin\theta_{23}\sin\theta_{12}}=\frac{\theta_{24}}{\sin\theta_{24}}\frac{1}{\sqrt{\lambda}}.
\end{equation}
  Substituting  $\theta_{24}=\arccos(\langle v_{2},v_{4}\rangle)$ into \eqref{one4} and rearranging gives
$$
\cos\left(\sqrt{\lambda}\sqrt{1-\langle v_{2},v_{4}\rangle^{2}}\frac{\theta_{23}\theta_{12}}{\sin\theta_{23}\sin\theta_{12}}\right)=\langle v_{2},v_{4}\rangle.
$$
From \eqref{one1.5} and the definitions of $\lambda$ and $\gamma$ (\eqref{one1.7} and \eqref{one3.0}), we have $\langle v_{2}, v_{4}\rangle=-\gamma/\sqrt{\lambda}$, so
$$
\cos\left(\sqrt{\lambda}\sqrt{1-\frac{\gamma^{2}}
{\lambda}}\frac{\theta_{23}\theta_{12}}{\sin\theta_{23}\sin\theta_{12}}\right)=-\frac{\gamma}{\sqrt{\lambda}}.
$$
yielding \eqref{one3}.  To derive \eqref{one5}, observe that $\lambda-\gamma^{2}$ allows some cancelation as follows.
\begin{flalign*}
\lambda-\gamma^{2}
& = \frac{\sin^{2}\theta_{23}}{\theta_{23}^{2}}(1-\cos^{2}\theta_{12})
+\frac{\sin^{2}\theta_{12}}{\theta_{12}^{2}}(1-\cos^{2}\theta_{23})
+2\frac{\sin\theta_{23}\sin\theta_{12}}{\theta_{23}\theta_{12}}(\cos\theta_{31}-\cos\theta_{23}\cos\theta_{12})\\
& = \frac{\sin^{2}\theta_{23}}{\theta_{23}^{2}}(\sin^{2}\theta_{12})
+\frac{\sin^{2}\theta_{12}}{\theta_{12}^{2}}(\sin^{2}\theta_{23})
+2\frac{\sin\theta_{23}\sin\theta_{12}}{\theta_{23}\theta_{12}}(\cos\theta_{31}-\cos\theta_{23}\cos\theta_{12}).
\end{flalign*}
Thus,
\begin{flalign*}
(\lambda-\gamma^{2})\frac{\theta_{23}^{2}\theta_{12}^{2}}{\sin^{2}\theta_{23}\sin^{2}\theta_{12}}
& = \theta_{12}^{2}+\theta_{23}^{2}+2\theta_{12}\theta_{23}\frac{\cos\theta_{31}-\cos\theta_{23}\cos\theta_{12}}{\sin\theta_{12}\sin\theta_{23}}\\
& = \theta_{12}^{2}+\theta_{23}^{2}+2\theta_{12}\theta_{23}\cos\Theta_{123}\quad\mbox{, by the spherical law of cosines.}
\end{flalign*}
Substituting this equation into \eqref{one3} gives \eqref{one5}.
\end{proof}

Since $-\frac{\gamma}{\sqrt{\lambda}}=\langle v_{4},v_{2}\rangle=\cos\theta_{24}$, \eqref{one5} says that
\begin{equation}\label{one6}
\theta_{12}^{2}+\theta_{23}^{2}+2\theta_{12}\theta_{23}\cos\Theta_{123}=\theta_{24}^{2}.
\end{equation}
Substituting \eqref{one6} back into \eqref{onef} and \eqref{eq:the F} gives a simplified form of $F(P)$.
\begin{lemma}\label{lem:squares}
\begin{equation}\label{one7}
F(P)=\sum_{i<j}\theta_{ij}^{2}=\sum_{i<j}(\arccos(\langle v_{i},v_{j}\rangle))^{2}
\end{equation}
\end{lemma}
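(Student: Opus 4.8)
The plan is to compute $F(P)=\sum_{\ell=1}^{4}\|z_\ell\|_2^2$ by writing each summand via the closed form \eqref{onef} of Corollary~\ref{cor0.5} (applied to the triangle $T_\ell$ with its three vertices suitably oriented), and then to eliminate every mixed term $\cos\Theta_{\cdot}\,\theta_{\cdot}\theta_{\cdot}$ using \eqref{one6}. Recall from the discussion surrounding \eqref{one5}--\eqref{one6} (and the remarks in Section~\ref{sec:overview}) that \eqref{one6} is not special to the displayed triple: exactly the argument of Lemma~\ref{prop7} applied to an arbitrary triangle $T_\ell$ of $P$, say with remaining vertices $v_i,v_j,v_k$, shows that for each vertex $v_j$ of $T_\ell$ one has $\theta_{ij}^{2}+\theta_{jk}^{2}+2\theta_{ij}\theta_{jk}\cos\Theta_{ijk}=\theta_{j\ell}^{2}$, where $\Theta_{ijk}$ is the spherical angle of $T_\ell$ at $v_j$. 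I will freely use all twelve of these instances.

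First I would observe that the right-hand side of \eqref{onef} is a symmetric function of the three indices $\{1,2,3\}$ (using $\Theta_{ijk}=\Theta_{kji}$ and that each mixed term is attached to a single vertex), so Corollary~\ref{cor0.5} may be invoked for any triangle of $P$ irrespective of the orientation of its vertices — by Proposition~\ref{prop1}(e) one can always relabel the three vertices so that the positive-orientation hypothesis of Proposition~\ref{prop1.5} holds. Then, fixing $\ell$ and substituting into \eqref{onef} the three instances of the identity above (one per vertex of $T_\ell$), each term $-2\cos\Theta_{\cdot}\theta_a\theta_b$ becomes $\theta_a^{2}+\theta_b^{2}-\theta_{c\ell}^{2}$ for the appropriate third index $c$, and the leftover ``$+\theta_a^{2}$'' pieces triple the squared edge lengths of $T_\ell$. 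The outcome is the clean relation
\begin{equation*}
4\|z_\ell\|_2^{2}=3\!\!\sum_{\substack{i<j\\ i,j\neq\ell}}\!\!\theta_{ij}^{2}\ -\ \sum_{i\neq\ell}\theta_{i\ell}^{2},
\end{equation*}
which for $\ell=4$ reads $4\|z_4\|_2^{2}=3(\theta_{12}^{2}+\theta_{13}^{2}+\theta_{23}^{2})-(\theta_{14}^{2}+\theta_{24}^{2}+\theta_{34}^{2})$, precisely what ``substituting \eqref{one6} into \eqref{onef}'' produces.

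Finally I would sum over $\ell\in\{1,2,3,4\}$: a fixed $\theta_{ij}^{2}$ appears with coefficient $+3$ in the two summands with $\ell\notin\{i,j\}$ and with coefficient $-1$ in the two summands with $\ell\in\{i,j\}$, for a net coefficient $3+3-1-1=4$. Hence $4F(P)=4\sum_{i<j}\theta_{ij}^{2}$, and dividing by $4$ and recalling $\theta_{ij}=\arccos\langle v_i,v_j\rangle$ gives \eqref{one7}. The computation is entirely elementary; the only points requiring a little care are the orientation/relabeling bookkeeping noted above (so that \eqref{onef} and \eqref{one6} are legitimately applied to each of the four triangles) and keeping the twelve index substitutions straight. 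I do not anticipate any serious obstacle beyond that.
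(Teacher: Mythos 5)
Your proposal is correct and uses exactly the same two ingredients as the paper's proof — the closed form \eqref{onef}/\eqref{onee} for each $\|z_\ell\|_2^2$ and the substitution identity \eqref{one6}/\eqref{one7.2} — you simply perform the substitution within each triangle first (yielding the clean intermediate identity $4\|z_\ell\|_2^{2}=3\sum_{i<j,\,i,j\neq\ell}\theta_{ij}^{2}-\sum_{i\neq\ell}\theta_{i\ell}^{2}$) and then sum over $\ell$, whereas the paper sums over $\ell$ first and then substitutes inside the double sum. This is a reordering of the same bookkeeping, not a genuinely different route, though your per-triangle formula is a pleasant extra landmark.
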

\begin{proof}
Applying \eqref{onee} to the definition $F(P)=\sum_{i=1}^{4}\vnorm{z_{i}}_{2}^{2}$ gives
\begin{equation}\label{one7.1}
F(P)=\frac{1}{2}\sum_{i<j}\theta_{ij}^{2}-\frac{1}{2}\sum_{i,j,\ell}\cos(\Theta_{ij\ell})\theta_{ij}\theta_{j\ell}.
\end{equation}
Here the first sum in~\eqref{one7.1} runs over all $(i,j)$ satisfying $1\leq i<j\leq 4$, and the second sum runs over all equivalence classes of $3$-tuples $(i,j,\ell)$ under the equivalence $(i,j,\ell)\sim(\ell,j,i)$, where $i,j,\ell$ are distinct elements of $\{1,2,3,4\}$.  From \eqref{one6} we have, for $r\notin\{i,j,\ell\}$,
\begin{equation}\label{one7.2}
\theta_{ij}^{2}+\theta_{j\ell}^{2}+2\theta_{ij}\theta_{j\ell}\cos\Theta_{ij\ell}=\theta_{jr}^{2}.
\end{equation}
Substituting~\eqref{one7.2} into \eqref{one7.1} gives
$$
F(P)=\frac{1}{2}\sum_{i<j}\theta_{ij}^{2}-\frac{1}{4}\sum_{i,j,\ell}\left(\theta_{jr}^{2}-\theta_{ij}^{2}-\theta_{j\ell}^{2}\right).
$$
In the right summation, for a fixed $(i',j')$ with $i'\neq j'$, the term $\theta_{i'j'}^{2}$ appears four times with a minus sign and two times with a plus sign.  Therefore,
$$
F(P)
=\frac{1}{2}\sum_{i<j}\theta_{ij}^{2}-\frac{1}{4}\sum_{i,j,\ell}\left(\theta_{jr}^{2}-\theta_{ij}^{2}-\theta_{j\ell}^{2}\right)
=\frac{1}{2}\sum_{i<j}\theta_{ij}^{2}-\frac{1}{4}(\sum_{i<j}-2\theta_{ij}^{2})
=\sum_{i<j}\theta_{ij}^{2},
$$
proving \eqref{one7}.
\end{proof}


In \eqref{one7} replace $\theta_{jr}^{2}$ with the left side of \eqref{one7.2} for $(j,r)\in\{(1,4),(2,4),(3,4)\}$ to get
\begin{equation}\label{one8}
F(P)=3\left(\theta_{12}^{2}+\theta_{23}^{2}+\theta_{13}^{2}\right)
+2\cos(\Theta_{213})\theta_{12}\theta_{13}+2\cos(\Theta_{123})\theta_{12}\theta_{23}+2\cos(\Theta_{231})\theta_{23}\theta_{13}.
\end{equation}
\begin{definition}\label{F0def}
By the spherical law of cosines, $\Theta_{ij\ell}$ is a function of $\{\theta_{ij},\theta_{j\ell},\theta_{i\ell}\}$.  Let
$$
F_{0}(\theta_{12},\theta_{23},\theta_{13})\colonequals
3\left(\theta_{12}^{2}+\theta_{23}^{2}+\theta_{13}^{2}\right)
+2\cos(\Theta_{213})\theta_{12}\theta_{13}+2\cos(\Theta_{123})\theta_{12}\theta_{23}+2\cos(\Theta_{231})\theta_{23}\theta_{13}.
$$
\end{definition}
By \eqref{one8}, $F(P)=F_{0}$ can be computed by the data from only one triangle of $P$.  That is,
$$
F_{0}(\theta_{12},\theta_{23},\theta_{13})
=F_{0}(\theta_{12},\theta_{24},\theta_{14})
=F_{0}(\theta_{34},\theta_{24},\theta_{23})
=F_{0}(\theta_{34},\theta_{14},\theta_{13}).
$$

Since the spherical propeller partition, i.e., the partition of $S^2$ corresponding to the intersection of the propeller partition of $\R^3$ with $S^2$, satisfies $F=(9/4)\pi^{2}$, we may bound \eqref{one8} in terms of $M\colonequals\max\{\theta_{12},\theta_{13},\theta_{23}\}$.  In particular, we get $F(P)\leq 14 M^{2}$, which is less than $(9/4)\pi^{2}$ if $M<\pi\frac{3}{2\sqrt{14}}$.  (In our bound on $F(P)$, we use that one $\Theta_{ijk}$ must be larger than $\pi/3$ by the pigeonhole principle applied to $\Theta_{123}+\Theta_{231}+\Theta_{312}-\pi=\mbox{Area}_{S^{2}}(T_{4})>0$, so one cosine term satisfies $\cos\Theta_{ijk}\leq1/2$.)  So, this bound on $F$  gives
\begin{cor}\label{cor2.5}
Any triangle $T_{i}$ of $P$ must have an edge of length greater than $\pi\frac{3}{2\sqrt{14}}>\frac{5}{4}$.
\end{cor}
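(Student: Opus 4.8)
The plan is to start from the single-triangle expression for the objective function in~\eqref{one8} and bound it crudely from above by a multiple of $M^{2}$, where $M\colonequals\max\{\theta_{12},\theta_{13},\theta_{23}\}$, and then compare the resulting estimate against the contrapositive hypothesis $F(P)>\tfrac94\pi^{2}$. First I would record that every $\Theta_{ij\ell}\in(0,\pi)$ by Proposition~\ref{prop1}(e), so that each cosine satisfies $\cos\Theta_{ij\ell}<1$. Next I would observe that the three angles occurring in~\eqref{one8}, namely $\Theta_{213}=\Theta_{312}$, $\Theta_{123}$ and $\Theta_{231}$, are exactly the three interior angles of the spherical triangle $T_{4}$, and that by the spherical excess formula $\Theta_{312}+\Theta_{123}+\Theta_{231}=\pi+\mathrm{Area}_{S^{2}}(T_{4})>\pi$ (the strict inequality being guaranteed by the nondegeneracy $\det(v_{1},v_{2},v_{3})\neq0$ from Proposition~\ref{prop1}(e)). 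By pigeonhole, at least one of these three angles exceeds $\pi/3$, hence at least one of the three cosines in~\eqref{one8} is $\le\tfrac12$.

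Then I would bound~\eqref{one8} term by term. The leading term satisfies $3(\theta_{12}^{2}+\theta_{23}^{2}+\theta_{13}^{2})\le 9M^{2}$. Among the three cross terms $2\cos(\Theta)\,\theta\theta'$, two of them are $<2\theta\theta'\le 2M^{2}$ using $\cos\Theta<1$, and the remaining one is $\le 2\cdot\tfrac12\cdot M^{2}=M^{2}$ using the estimate $\cos\Theta\le\tfrac12$ just established. Summing gives $F(P)\le 9M^{2}+2M^{2}+2M^{2}+M^{2}=14M^{2}$. Combined with the contrapositive assumption $F(P)>\tfrac94\pi^{2}$, this forces $14M^{2}>\tfrac94\pi^{2}$, i.e.\ $M>\pi\tfrac{3}{2\sqrt{14}}$; in other words the triangle $T_{4}$ has an edge of length greater than $\pi\tfrac{3}{2\sqrt{14}}$. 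Since the labeling of the four triangles is arbitrary — any $T_{i}$ may play the role of $T_{4}$ — the same conclusion holds for each $T_{i}\in P$. Finally, the numerical inequality $\pi\tfrac{3}{2\sqrt{14}}>\tfrac54$ is equivalent to $36\pi^{2}>350$, which holds since $\pi^{2}>9.8$.

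I do not expect a genuine obstacle here: the argument is deliberately lossy, and every ingredient it needs — the bound $\Theta_{ij\ell}\in(0,\pi)$, the positivity of the spherical area of a nondegenerate spherical triangle, and the identity~\eqref{one8} — is already available. The only points requiring mild attention are the bookkeeping identification of the three $\Theta$'s in~\eqref{one8} with the interior angles of one triangle (so that the pigeonhole step is legitimate), and the elementary verification that $36\pi^{2}>350$ so that the stated final inequality $\pi\tfrac{3}{2\sqrt{14}}>\tfrac54$ indeed holds.
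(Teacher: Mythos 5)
Your proposal is correct and follows essentially the same route the paper takes: bounding the single-triangle form~\eqref{one8} by $14M^{2}$ via the spherical-excess pigeonhole argument (one $\Theta_{ijk}>\pi/3$, hence one $\cos\Theta_{ijk}\le\tfrac12$), then comparing with the contrapositive hypothesis $F(P)>\tfrac94\pi^{2}$. The only difference is cosmetic: you spell out the relabeling step that lets any $T_{i}$ play the role of $T_{4}$, which the paper leaves implicit in its remark that $F_{0}$ can be computed from the data of any one triangle.
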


Iteratively applying \eqref{one2} gives the following improvement to Proposition \ref{prop1}(e).  This improvement will be important in our numerical calculations in Section \ref{secnet} and in the proof of Theorem~\ref{thm:main}, since we eventually require a bound on the derivative of $\theta_{ij}/\sin\theta_{ij}$.
\begin{lemma}[Restrictions on edge length]\label{lemma4}  For every distinct $i,j\in \{1,2,3,4\}$,
\begin{equation}\label{three2}
\theta_{ij}<\pi-\frac12.
\end{equation}
\end{lemma}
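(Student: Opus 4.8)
# Proof Proposal for Lemma~\ref{lemma4}

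\textbf{Overview of the strategy.} The plan is to argue by contradiction: suppose some edge, say $\theta_{12}$, satisfies $\theta_{12}\ge \pi-\frac12$. The first observation is that $\theta_{12}$ is very close to $\pi$, so the corresponding vertices $v_1,v_2$ are nearly antipodal. The key tool is Proposition~\ref{prop6}, the restriction on products of opposite edges, which says
$$
\frac{\sin\theta_{12}\sin\theta_{34}}{\theta_{12}\theta_{34}}
=\frac{\sin\theta_{13}\sin\theta_{24}}{\theta_{13}\theta_{24}}
=\frac{\sin\theta_{14}\sin\theta_{23}}{\theta_{14}\theta_{23}}.
$$
Since the function $t\mapsto \frac{\sin t}{t}$ is decreasing on $(0,\pi)$ and since $\theta_{12}$ close to $\pi$ makes $\frac{\sin\theta_{12}}{\theta_{12}}$ small (roughly $\frac{\sin\theta_{12}}{\pi}$), the left-hand quantity $\frac{\sin\theta_{12}\sin\theta_{34}}{\theta_{12}\theta_{34}}$ is forced to be small unless $\theta_{34}$ is also close to $0$ or $\pi$. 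I would use this to pin down strong constraints on all six edge lengths, then derive a contradiction either with Corollary~\ref{cor2.5} (every triangle has an edge $>\frac{5}{4}$) or with the contrapositive hypothesis $F(P)=\sum_{i<j}\theta_{ij}^2>\frac{9\pi^2}{4}$ via Lemma~\ref{lem:squares}.

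\textbf{Key steps in order.} First I would record the elementary facts about $\varphi(t)\colonequals \frac{\sin t}{t}$: it is positive, strictly decreasing on $(0,\pi)$, with $\varphi(0^+)=1$ and $\varphi(\pi)=0$; in particular if $\theta_{12}\ge \pi-\frac12$ then $\varphi(\theta_{12})\le \varphi(\pi-\frac12)=\frac{\sin(1/2)}{\pi-1/2}$, a concrete small constant. Second, from Proposition~\ref{prop6} the common value $V\colonequals \varphi(\theta_{12})\varphi(\theta_{34})=\varphi(\theta_{13})\varphi(\theta_{24})=\varphi(\theta_{14})\varphi(\theta_{23})$ satisfies $V\le \varphi(\theta_{12})\le \frac{\sin(1/2)}{\pi-1/2}$. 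Third, I would combine this with the identity~\eqref{eq:gamma identity} (equivalently~\eqref{one4}, $\frac{\theta_{23}\theta_{12}}{\sin\theta_{23}\sin\theta_{12}}=\frac{\theta_{24}}{\sin\theta_{24}}\frac{1}{\sqrt\lambda}$, which rewrites as $\varphi(\theta_{24})=\sqrt\lambda\,\varphi(\theta_{23})\varphi(\theta_{12})$) and with Corollary~\ref{cor2.5}, which guarantees that in every triangle the maximal edge exceeds $\frac54$. The idea is that $\theta_{12}$ near $\pi$ forces, through $V$ being small, at least one of $\theta_{34}$'s neighbors to be pushed toward $\pi$ as well, and iterating this around the tetrahedron drives $\sum_{i<j}\theta_{ij}^2$ either below $\frac{9\pi^2}{4}$ or forces a geometric impossibility (several edges simultaneously close to $\pi$ while four vertices cannot all lie in a closed hemisphere, contradicting Proposition~\ref{prop5}(a)). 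Fourth, I would close the loop by a short quantitative computation: bound $F(P)=F_0(\theta_{12},\theta_{23},\theta_{13})$ from~\eqref{one8} using the near-antipodality to control the $\cos\Theta_{ij\ell}$ terms via the spherical law of cosines, showing $F_0<\frac{9\pi^2}{4}$.

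\textbf{Main obstacle.} The hard part will be the third step: organizing the bookkeeping so that the single assumption $\theta_{12}\ge\pi-\frac12$, propagated through the three equalities of Proposition~\ref{prop6} and the auxiliary relation~\eqref{one4}, yields a clean quantitative contradiction rather than a tangle of inequalities. The difficulty is that $V$ small only tells us $\varphi(\theta_{ij})\varphi(\theta_{k\ell})$ is small for \emph{paired} opposite edges, not that individual edges are large; one must use the fact that a triangle having an edge near $\pi$ strongly constrains its other two edges (two of them being short, by the triangle inequality on $S^2$ applied to the geodesic triangle, since $\theta_{13}+\theta_{23}\ge\theta_{12}$ and each $\theta_{ij}<\pi$ by Proposition~\ref{prop1}(e)) together with $\theta_{13}+\theta_{23}\ge\pi-\frac12$, and feed this into $F_0$. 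I expect that after choosing the right combination of~\eqref{one8}, Proposition~\ref{prop6}, and the monotonicity of $\varphi$, the contradiction with $F(P)>\frac{9\pi^2}{4}$ falls out with a comfortable numerical margin, but verifying that margin carefully — and handling the possibility that the near-antipodal pair is $\{3,4\}$ rather than $\{1,2\}$, which by symmetry of~\eqref{one2} is the same case — is where the real work lies.
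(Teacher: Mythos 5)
Your toolbox is correct — Proposition~\ref{prop6}, the monotonicity of $\varphi(t)=\frac{\sin t}{t}$ on $(0,\pi)$, the spherical-triangle perimeter bound, and Corollary~\ref{cor2.5} are exactly the ingredients the paper uses — but the proposal stops at the point where the real work begins, and two of your sketched steps are factually off. The paper's argument is a genuine iterated bootstrap: from $\theta_{12}\ge\pi-\frac12$ and Proposition~\ref{prop6} one gets $\varphi(\theta_{13})\varphi(\theta_{24})\le\varphi(\pi-\frac12)$, hence (taking $\theta_{13}=\max\{\theta_{13},\theta_{24}\}$) $\theta_{13}\ge X>2.065$; the perimeter bound in $T_4$ then gives $\theta_{23}\le Y<1.577$; feeding this back into Proposition~\ref{prop6} gives $\theta_{14}\ge Z>2.388$; the perimeter bound in $T_3$ gives $\theta_{24}\le W<1.254$; and one must \emph{iterate} this cycle once more to sharpen the constants to $\theta_{13},\theta_{14}>2.52$ and $\theta_{23},\theta_{24}<1.14$, whence the perimeter bound in $T_2$ forces $\theta_{34}\le 2\pi-2(2.52)<\frac54$. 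That makes all three edges of $T_1$ less than $\frac54$, contradicting Corollary~\ref{cor2.5}. Your proposal recognizes that bookkeeping is the obstacle but never carries out the iteration, which is precisely where the lemma's content lies.

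Two of your specific claims would lead you astray if pursued. First, the parenthetical assertion that in a triangle with a near-antipodal edge "two of [the other edges are] short" is wrong: the triangle inequality and perimeter bound only give $\theta_{13}+\theta_{23}\in[\pi-\frac12,\,\pi+\frac12]$, and the bootstrap in fact drives $\theta_{13}$ \emph{large} (past $2.5$), with only $\theta_{23}$ becoming small. Second, the proposed alternate endgame of contradicting $F(P)>\frac{9\pi^2}{4}$ directly cannot work here: once $\theta_{12}\ge\pi-\frac12$ and $\theta_{13},\theta_{14}>2.5$, the identity $F(P)=\sum_{i<j}\theta_{ij}^2$ from Lemma~\ref{lem:squares} already gives $F(P)>\theta_{12}^2+\theta_{13}^2+\theta_{14}^2>\frac{9\pi^2}{4}$, so the hypothesis on $F$ is never violated in this regime; the contradiction has to come from the degeneracy of $T_1$ via Corollary~\ref{cor2.5}. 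Finally, invoking \eqref{one4}/\eqref{eq:gamma identity} is unnecessary baggage — the paper's proof expressly uses only \eqref{one2} together with the monotonicity and perimeter facts, and you should trim your argument to the same tools.
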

\begin{proof}
Suppose $\theta_{12}\geq\pi-1/2$.  We are eventually going to derive a contradiction, which will allow us to conclude that $\theta_{12}<\pi-1/2$.  We will essentially only need \eqref{one2}, the monotonicity of $x\mapsto(\sin x)/x$ for $x\in(0,\pi)$, the fact that the perimeter $\theta_{ij}+\theta_{j\ell}+\theta_{\ell i}$ of a spherical triangle is bounded by $2\pi$, and that $(\sin x)/x\leq1$ on $[0,\pi]$.  Relabeling the edges if necessary, we may assume that $\theta_{13}=\max\{\theta_{13},\theta_{24}\}$.  Using $\theta_{12}\geq\pi-1/2$, \eqref{one2} gives
\begin{equation}\label{three2.1}
\frac{\sin(\pi-1/2)}{\pi-1/2}
\geq\frac{\sin\theta_{12}}{\theta_{12}}
\geq\frac{\sin\theta_{12}\sin\theta_{34}}{\theta_{12}\theta_{34}}
=\frac{\sin\theta_{13}\sin\theta_{24}}{\theta_{13}\theta_{24}}
\geq\left(\frac{\sin\theta_{13}}{\theta_{13}}\right)^{2}.
\end{equation}
Therefore, $\theta_{13}\geq X$ where $X\in(0,\pi)$ satisfies
\begin{equation}\label{three2.2}
\left(\frac{\sin X}{X}\right)^{2}=\frac{\sin(\pi-1/2)}{\pi-1/2}.
\end{equation}
So, $\theta_{13}\geq X>2.065$.  Now, since $\theta_{12}+\theta_{13}+\theta_{23}\leq2\pi$ the bounds on $\theta_{12}$ and $\theta_{13}$ give
$$\theta_{23}\leq 2\pi-(\pi-1/2+X)\equalscolon Y<1.577.$$
But then \eqref{one2} and the bounds on $\theta_{12}$ and $\theta_{23}$ yield
$$
\frac{\sin\theta_{14}}{\theta_{14}}
=\frac{\sin\theta_{12}\sin\theta_{34}}{\theta_{12}\theta_{34}}\frac{\theta_{23}}{\sin\theta_{23}}
\leq\frac{\sin(\pi-1/2)}{\pi-1/2}\frac{Y}{\sin Y}.
$$
So, $\theta_{14}\geq Z$ where $Z\in(0,\pi)$ is defined by
$$\frac{\sin Z}{Z}=\frac{\sin(\pi-1/2)}{\pi-1/2}\frac{Y}{\sin Y}.$$
Then $\theta_{14}\geq Z>2.388$.  But since $\theta_{14}+\theta_{24}+\theta_{12}\leq2\pi$ the bounds on $\theta_{14}$ and $\theta_{12}$ give
$$\theta_{24}\leq2\pi-(\pi-1/2+Z)\equalscolon W<1.254$$

We now iterate the above argument.  Using \eqref{three2.1},
\begin{equation}\label{three2.3}
\frac{\sin\theta_{13}}{\theta_{13}}
=\frac{\sin\theta_{12}\sin\theta_{34}}{\theta_{12}\theta_{34}}\frac{\theta_{24}}{\sin\theta_{24}}
\leq\frac{\sin(\pi-1/2)}{\pi-1/2}\frac{W}{\sin W}.
\end{equation}
So, $\theta_{13}\geq\tilde{X}$ with $(\sin\tilde{X})/\tilde{X}=(\sin(\pi-1/2)/(\pi-1/2))(W/\sin W)$.  We now repeat the argument above, starting at \eqref{three2.2} and using tildes to designate updated variables.  Thus, $\theta_{13}\geq\tilde{X}>2.499$, $\theta_{23}\leq\tilde{Y}< 1.14, \theta_{14}\geq\tilde{Z}> 2.52,\theta_{24}\leq\tilde{W}< 1.122$.  Using $\tilde{W}$ in place of $W$ in \eqref{three2.3}, we find that $\theta_{13}>2.52$.

In summary, iterating the above procedure improves our bounds on the $\theta_{ij}$.  We find that $\theta_{13},\theta_{14}>2.52$.  We also find that $\theta_{23},\theta_{24}<1.14$.  Since $\theta_{13}+\theta_{14}+\theta_{34}\leq2\pi$, $\theta_{34}\leq2\pi-(2(2.52))$.  But then $\theta_{23},\theta_{24},\theta_{34}<5/4<\pi\frac{3}{2\sqrt{14}}$, violating Corollary \ref{cor2.5} and giving our desired contradiction.  We therefore conclude that $\theta_{ij}<\pi-1/2$.
\end{proof}

%

For a partition $P=\{T_{i}\}_{i=1}^{4}$ as above, the following map $G\colon(S^{2})^{4}\to\R$ is also a function of $P$ by Proposition \ref{prop5}(b)
\begin{equation}\label{one9}
G(v_{1},v_2,v_3,v_{4})\colonequals\sum_{1\leq i<j\leq4}\theta_{ij}^{2}.
\end{equation}
\begin{prop}\label{prop8}
Suppose that the conclusion of Proposition \ref{prop1}(e) holds.  Then Proposition \ref{prop1}(c) holds if and only if $P$ is a critical point of $G$.
\end{prop}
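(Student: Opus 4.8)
The plan is to compute the constrained gradient of $G$ on $(S^{2})^{4}$ explicitly, translate each side of the claimed equivalence into a system of scalar equations in the strictly positive quantities $s_{ij}\colonequals\theta_{ij}/(2\sin\theta_{ij})$ and $D_{\ell}\colonequals\abs{\det(v_{a},v_{b},v_{c})}$ (where $v_{a},v_{b},v_{c}$ are the three vertices of the spherical triangle $T_{\ell}$), and then observe that these two systems coincide. Since every $\theta_{ij}\in(0,\pi)$ by Proposition~\ref{prop1}(e), the function $G=\sum_{i<j}(\arccos\langle v_{i},v_{j}\rangle)^{2}$ is smooth near $P$ and differentiating gives $\nabla_{v_{\ell}}G=-2\sum_{j\neq\ell}\frac{\theta_{\ell j}}{\sin\theta_{\ell j}}v_{j}$; hence $P$ is a critical point of $G$ if and only if, for every $\ell\in\{1,2,3,4\}$, the vector $\sum_{j\neq\ell}\frac{\theta_{\ell j}}{\sin\theta_{\ell j}}v_{j}$ is a scalar multiple of $v_{\ell}$.

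To unwind this condition I would fix $\ell=4$, use Proposition~\ref{prop1}(e) to write $v_{4}=\alpha_{1}v_{1}+\alpha_{2}v_{2}+\alpha_{3}v_{3}$ with all $\alpha_{j}\neq0$, and compare coordinates in the basis $\{v_{1},v_{2},v_{3}\}$: parallelism at $v_{4}$ holds exactly when $\frac{\theta_{4j}/\sin\theta_{4j}}{\alpha_{j}}$ is the same (necessarily nonzero) number for $j=1,2,3$, which (since the numerators are positive) decomposes into the \emph{sign condition} that $\alpha_{1},\alpha_{2},\alpha_{3}$ have a common sign together with the equation that $\frac{\theta_{4j}/\sin\theta_{4j}}{\abs{\alpha_{j}}}$ be independent of $j$. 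By Cramer's rule $\abs{\alpha_{j}}=D_{j}/D_{4}$ for $j=1,2,3$ (the numerator determinant for $\alpha_{j}$ is the one associated with the vertex set $\{v_{1},v_{2},v_{3},v_{4}\}\setminus\{v_{j}\}$ of $T_{j}$), so the equation becomes $\frac{s_{14}}{D_{1}}=\frac{s_{24}}{D_{2}}=\frac{s_{34}}{D_{3}}$, and likewise at $v_{1},v_{2},v_{3}$. Thus the equation part of ``$P$ is a critical point of $G$'' reads: for every $\ell$, the ratio $s_{i\ell}/D_{i}$ is independent of $i\in\{1,2,3,4\}\setminus\{\ell\}$.

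For the other side, using $z_{1}+z_{2}+z_{3}+z_{4}=0$ (Proposition~\ref{prop1}(a)), Proposition~\ref{prop1}(c) is equivalent to: for each $i$ the number $\langle z_{j},v_{i}\rangle$ is independent of $j\in\{1,2,3,4\}\setminus\{i\}$. For $j\neq i$ the point $v_{i}$ is a vertex of $T_{j}$, so Proposition~\ref{prop1.5} together with Corollary~\ref{cor0.6} yields $\langle z_{j},v_{i}\rangle=s_{ab}D_{j}$, where $\{a,b\}=\{1,2,3,4\}\setminus\{i,j\}$ indexes the edge of $T_{j}$ opposite $v_{i}$; hence Proposition~\ref{prop1}(c) becomes: for each $i$, the product $s_{ab}D_{j}$ does not depend on how $\{1,2,3,4\}\setminus\{i\}$ is split as $\{a,b\}\cup\{j\}$. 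The crux is then the elementary fact that \emph{both} systems are equivalent to the single symmetric requirement that $s_{ij}/(D_{i}D_{j})$ be the same for all pairs $i<j$: for the critical-point system one writes $s_{i\ell}/D_{i}=\big(s_{i\ell}/(D_{i}D_{\ell})\big)D_{\ell}$ and uses $s_{i\ell}=s_{\ell i}$, and for the Proposition~\ref{prop1}(c) system one writes $s_{ab}D_{j}=\big(s_{ab}/(D_{a}D_{b})\big)\prod_{k\neq i}D_{k}$; in both cases chaining the equalities coming from the four choices of vertex connects all six pairs. This proves the equivalence of the equation parts. For the full statement, observe that ``$P$ is a critical point of $G$'' additionally involves the sign conditions, which Proposition~\ref{prop1}(c) does not mention; however, once Proposition~\ref{prop1}(c) holds, Lemma~\ref{prop3} and~\eqref{one1.4} (equivalently Proposition~\ref{prop5}(a)) exhibit each $v_{\ell}$ as a combination of the other three vertices with coefficients $-1/\langle z_{\ell},v_{j}\rangle$ up to a common positive factor, hence all strictly negative because $\langle z_{\ell},v_{j}\rangle>0$ by Corollary~\ref{cor0.6}; so the sign conditions hold automatically, while the reverse implication needs no sign information.

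The step I anticipate being the main obstacle is not conceptual but a matter of careful bookkeeping: correctly matching indices in the identities $\abs{\alpha_{j}}=D_{j}/D_{4}$ and $\langle z_{j},v_{i}\rangle=s_{ab}D_{j}$, i.e.\ keeping track of which edge of which triangle is opposite which vertex and which triple of vertices carries which determinant. Orientations cause no trouble, since the $s_{ij}$, the $D_{\ell}$, and the inner products $\langle z_{i},v_{j}\rangle$ are all strictly positive by Proposition~\ref{prop1}(e) and Corollary~\ref{cor0.6}, so one may work throughout with absolute values of determinants.
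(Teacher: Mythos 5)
Your argument is correct, and it takes a genuinely different route from the paper's. The paper evaluates the directional derivative of $G$ at $v_{i}$ along the particular tangent directions $y=v_{i}\times v_{j}$ (which span $T_{v_{i}}S^{2}$ because the $v$'s are affinely independent); the payoff of that choice is that $\langle y,v_{i}\rangle=\langle y,v_{j}\rangle=0$ and the two surviving terms are signed determinants, so by~\eqref{oneb} the derivative collapses at once to a positive multiple of $\langle z_{\ell},v_{j}\rangle-\langle z_{r},v_{j}\rangle$ with $\{i,j,\ell,r\}=\{1,2,3,4\}$ --- vanishing of all these scalar quantities \emph{is} Proposition~\ref{prop1}(c) in its equal-inner-product form, and there is no parallelism or sign analysis to carry out. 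You instead characterize criticality as parallelism of the Euclidean gradient $-2\sum_{j\neq\ell}(\theta_{\ell j}/\sin\theta_{\ell j})v_{j}$ to $v_{\ell}$, expand in the basis $\{v_{j}\}_{j\neq\ell}$ via Cramer's rule, rewrite Proposition~\ref{prop1}(c) via~\eqref{oneb} and Corollary~\ref{cor0.6}, and observe that both systems are equivalent to constancy of $s_{ij}/(D_{i}D_{j})$ over all six pairs; the residual sign condition in the direction \ref{prop1}(c)~$\Rightarrow$~critical is then disposed of with Lemma~\ref{prop3} and Corollary~\ref{cor0.6} (note that Lemma~\ref{prop3}'s proof really only uses parts (a), (c), (e) of Proposition~\ref{prop1}, so this is not circular). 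What your route buys is a transparent symmetric invariant unifying the two conditions; what it costs is the Cramer's-rule index bookkeeping and the separate sign discussion, both of which the paper's choice of test directions avoids automatically.
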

\begin{proof}
For a manifold $M$ with $v\in M$, $T_{v}M$ denotes the tangent space of $M$ at $v$.  Since $G=\sum_{i<j}\theta_{ij}^{2}=\sum_{i<j}(\arccos\langle v_{i},v_{j}\rangle)^{2}$, we take derivatives of $G$ with respect to $v_{i}$.  For $v_{2},v_{3},v_{4}\in S^{2}$ fixed and $x\in \R^{3}$ variable,
$$\nabla_{S^{2}}G(\cdot,v_{2},v_{3},v_{4})=\left.\left(\nabla_{\R^{3}}G\left(\frac{x}{\vnorm{x}_{2}},v_{2},v_{3},v_{4}\right)\right)\right|_{S^{2}}$$
Recall that $T_{v_{1}}\R^{3}$ is isomorphic to $\R^{3}$.  We denote this isomorphism by $(d/dy)\leftrightarrow y$.

Let $\frac{d}{dy}$ be a vector in the tangent space $T_{v_{1}}\R^{3}$.  Identifying $\frac{d}{dy}$ with $y\in\R^{3}$ we compute
\begin{flalign*}
&\frac{d}{dy}G\left(\frac{x}{\vnorm{x}_{2}},v_{2},v_{3},v_{4}\right)
 = \sum_{j\neq1}\frac{-2\arccos(\langle x,v_{j}\rangle/\vnorm{x}_{2})}{\sqrt{1-(\langle x,v_{j}\rangle/\vnorm{x}_{2})^{2}}}\left\langle\frac{d}{dy}(x/\vnorm{x}_{2}),v_{j}\right\rangle\\
&\qquad\qquad\qquad = \sum_{j\neq1}\frac{-2\arccos(\langle x,v_{j}\rangle/\vnorm{x}_{2})}{\sqrt{1-(\langle x,v_{j}\rangle/\vnorm{x}_{2})^{2}}}
\left(\frac{\vnorm{x}_{2}\frac{d}{dy}\langle x,v_{j}\rangle-\frac{1}{2}\langle x,v_{j}\rangle\vnorm{x}_{2}^{-1}\frac{d}{dy}\langle x,x\rangle}{\vnorm{x}_{2}^{2}}\right)\\
&\qquad\qquad\qquad = \sum_{j\neq1}\frac{-2\arccos(\langle x,v_{j}\rangle/\vnorm{x}_{2})}{\sqrt{1-(\langle x,v_{j}\rangle/\vnorm{x}_{2})^{2}}}
\left\langle\left(\frac{\vnorm{x}_{2}y-x\vnorm{x}_{2}^{-1}\langle y,x\rangle}{\vnorm{x}_{2}^{2}}\right),v_{j}\right\rangle.
\end{flalign*}
Letting $\frac{d}{dy}\in T_{v_{1}}S^{2}$ and $x=v_{1}\in S^{2}$ gives $\langle y,x\rangle=0$, so that
$$
\frac{d}{dy}G\left(\frac{x}{\vnorm{x}_{2}},v_{2},v_{3},v_{4}\right)
=\sum_{j\neq1}\frac{-2\arccos(\langle v_{1},v_{j}\rangle)}{\sqrt{1-(\langle v_{1},v_{j}\rangle)^{2}}}
\left\langle y, v_{j}\right\rangle
=\sum_{j\neq1}\frac{-2\theta_{1j}}{\sin\theta_{1j}}\left\langle y,v_{j}\right\rangle.
$$

So, if we take $\frac{d}{dy}\colonequals v_{1}\times v_{2}\in T_{v_{1}}S^{2}$ and then use \eqref{oneb}, we get
\begin{flalign*}
\frac{d}{dy}G
 = \frac{-2\theta_{13}}{\sin\theta_{13}}\det(v_{1},v_{2},v_{3})-\frac{2\theta_{14}}{\sin\theta_{14}}\det(v_{1},v_{2},v_{4})
 = 2(-\langle z_{4},v_{2}\rangle+\langle z_{3},v_{2}\rangle).
\end{flalign*}


In general, if we take the derivative at $v_{i}$ in the direction $v_{i}\times v_{j}$ and set this derivative to zero, we get $\langle z_{r},v_{j}\rangle=\langle z_{\ell},v_{j}\rangle$ where the set $\{i,j,\ell,r\}$ is equal to the set $\{1,2,3,4\}$.  Since the above argument can be reversed, the proposition is proven.
\end{proof}

Setting $\frac{d}{dy}=v_{2}-\langle v_{2},v_{1}\rangle v_{1}$ and $\frac{d}{dy}=v_{3}-\langle v_{3},v_{1}\rangle v_{3}$ in the above argument gives
\begin{equation}\label{one10}
-\theta_{12}=\theta_{13}\cos\Theta_{213}+\theta_{14}\cos\Theta_{214}.
\end{equation}

Observe that
\begin{flalign*}
&\frac{d}{dy}G\left(\frac{x}{\vnorm{x}_{2}},v_{2},v_{3},v_{4}\right)
 = \sum_{j\neq1}\frac{-2\theta_{1j}}{\sin\theta_{1j}}\left\langle y,v_{j}\right\rangle
= \sum_{j\neq1}\frac{-2\theta_{1j}}{\sin\theta_{1j}}\left\langle(v_{2}-\langle v_{1},v_{2}\rangle v_{1}),v_{j}\right\rangle\\
&\quad = -\frac{2\theta_{12}}{\sin\theta_{12}}(1-(\langle v_{2},v_{1}\rangle)^{2})
-\frac{2\theta_{13}}{\sin\theta_{13}}(\langle v_{2},v_{3}\rangle-(\langle v_{2},v_{1}\rangle)(\langle v_{1},v_{3}\rangle))\\
&\qquad\qquad-\frac{2\theta_{14}}{\sin\theta_{14}}(\langle v_{2},v_{4}\rangle-(\langle v_{2},v_{1}\rangle)(\langle v_{1},v_{4}\rangle))\\
&\quad = -\frac{2\theta_{12}}{\sin\theta_{12}}(1-(\langle v_{2},v_{1}\rangle)^{2})
-\frac{2\theta_{13}}{\sin\theta_{13}}\left\langle(v_{1}\times v_{2}),(v_{1}\times v_{3})\right\rangle
-\frac{2\theta_{14}}{\sin\theta_{14}}\left\langle(v_{1}\times v_{2}),(v_{1}\times v_{4})\right\rangle.
\end{flalign*}
Thus, dividing by $\sin\theta_{12}$, setting $\frac{d}{dy}G=0$, and simplifying as in Corollary \ref{cor0.5} gives \eqref{one10}.

We now give a geometric interpretation of \eqref{one6} and \eqref{one10} that should clarify their meaning.  Define $\exp_{1}^{-1}\colon\{2,3,4\}\to \R^{2}$ by
\begin{equation}\label{one9.9}
\exp_{1}^{-1}(j)\colonequals\theta_{1j}\frac{(v_{j}-\langle v_{j},v_{1}\rangle v_{1})}{\vnorm{v_{j}-\langle v_{j},v_{1}\rangle v_{1}}_{2}}.
\end{equation}
Then $\exp_{1}^{-1}(j)$ is perpendicular to $v_{1}$ and parallel to $v_{j}-\mbox{Proj}_{v_{1}}v_{j}$, with length $\theta_{1j}$.
\begin{lemma}\label{lemma5} Suppose that the conclusion of Proposition \ref{prop1}(e) holds.  If $\{v_{i}\}_{i=1}^{4}$ is a critical point of $G$, then
\begin{equation}\label{one10prime}
\exp_{1}^{-1}(2)+\exp_{1}^{-1}(3)+\exp_{1}^{-1}(4)=0.
\end{equation}
\end{lemma}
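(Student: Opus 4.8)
The plan is to recycle the directional-derivative computation already carried out in the proof of Proposition~\ref{prop8}. There it is shown that for a tangent vector $y\in T_{v_1}S^2$, which we identify with the subspace $v_1^\perp\subset\R^3$, one has
\[
\frac{d}{dy}G\left(\frac{x}{\|x\|_2},v_2,v_3,v_4\right)\bigg|_{x=v_1}=\sum_{j\neq 1}\frac{-2\theta_{1j}}{\sin\theta_{1j}}\langle y,v_j\rangle .
\]
First I would use $\langle y,v_1\rangle=0$ to rewrite $\langle y,v_j\rangle=\langle y,\,v_j-\langle v_j,v_1\rangle v_1\rangle$. Under the standing hypothesis (the conclusion of Proposition~\ref{prop1}(e)) we have $\theta_{1j}\in(0,\pi)$, so $\|v_j-\langle v_j,v_1\rangle v_1\|_2=\sqrt{1-\langle v_j,v_1\rangle^2}=\sin\theta_{1j}>0$, and comparing with the definition~\eqref{one9.9}, this $v_1$-orthogonal component of $v_j$ equals $\frac{\sin\theta_{1j}}{\theta_{1j}}\exp_1^{-1}(j)$.

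Substituting, the factors $\theta_{1j}/\sin\theta_{1j}$ cancel and the derivative collapses to
\[
\frac{d}{dy}G=-2\sum_{j\neq 1}\big\langle y,\exp_1^{-1}(j)\big\rangle=-2\Big\langle y,\ \exp_1^{-1}(2)+\exp_1^{-1}(3)+\exp_1^{-1}(4)\Big\rangle .
\]
If $\{v_i\}_{i=1}^4$ is a critical point of $G$, the left-hand side vanishes for every $y\in T_{v_1}S^2=v_1^\perp$. By construction each $\exp_1^{-1}(j)$ is perpendicular to $v_1$, so the vector $w\colonequals\exp_1^{-1}(2)+\exp_1^{-1}(3)+\exp_1^{-1}(4)$ itself lies in $v_1^\perp$; taking $y=w$ forces $\|w\|_2^2=0$, which is exactly~\eqref{one10prime}.

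There is no genuine obstacle here. The only points that require a moment's care are (i) recognizing that the $v_1$-orthogonal projection of $v_j$, once renormalized and scaled by $\theta_{1j}$, is precisely $\exp_1^{-1}(j)$ as defined in~\eqref{one9.9}, and (ii) observing that the vanishing of the directional derivative in all tangent directions pins down the in-plane vector $w$ because $w$ is itself tangent to $S^2$ at $v_1$. (Alternatively, the relations~\eqref{one6} and~\eqref{one10} recorded just above the lemma are the coordinate form of~\eqref{one10prime} and could be reassembled to give the same conclusion, but the direct argument above is shorter and more transparent.)
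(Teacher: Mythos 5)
Your proof is correct, and it takes a more direct route than the paper's. The paper establishes \eqref{one10prime} by combining two scalar identities derived earlier: the norm relation $\|\exp_1^{-1}(2)+\exp_1^{-1}(3)\|_2^2=\|\exp_1^{-1}(4)\|_2^2$, which is a rewrite of \eqref{one6} and therefore inherits the chain of dependencies through Lemma~\ref{prop7}, the formula \eqref{one1.5} for $v_4$, and Proposition~\ref{prop1}(c); and the tangential projection relation \eqref{one10}, obtained by substituting specific tangent directions into the derivative formula from Proposition~\ref{prop8}. It then reassembles these into a vector identity using coplanarity of $\{\exp_1^{-1}(j)\}_{j\neq1}$ inside $T_{v_1}S^2$. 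You instead keep the general tangent vector $y\in T_{v_1}S^2$ and observe that the prefactors $\theta_{1j}/\sin\theta_{1j}$ in the derivative formula cancel exactly against the scaling built into \eqref{one9.9}, so that $d_yG=-2\langle y,\sum_{j\neq 1}\exp_1^{-1}(j)\rangle$ in one stroke; since the sum is itself tangent at $v_1$, criticality of $G$ forces it to vanish. The gain is conceptual as well as practical: the lemma is exposed as a statement purely about $\nabla_{S^2}G$ and the constraint $\theta_{1j}\in(0,\pi)$, with no appeal to the moments $z_i$, the partition structure, or Lemma~\ref{prop7}, and the relation \eqref{one6} in the guise of the norm identity then follows as a corollary of \eqref{one10prime} rather than serving as an ingredient in its proof.
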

\begin{proof}
Let $\Pi_{ij}\subset\R^{3}$ be the plane containing $\{0,v_{i},v_{j}\}$.  For $j\in\{2,3,4\}$, $\exp_{1}^{-1}(j)\in\Pi_{1j}$ is perpendicular to $v_{1}$ and in $\Pi_{1j}$, so $\langle\exp_{1}^{-1}(2),\exp_{1}^{-1}(3)\rangle=\theta_{12}\theta_{13}\cos\Theta_{213}$.  We can therefore rewrite \eqref{one6} as
\begin{equation}\label{one6prime}
\vnorm{\exp_{1}^{-1}(2)+\exp_{1}^{-1}(3)}_{2}^{2}=\vnorm{\exp_{1}^{-1}(4)}_{2}^{2}.
\end{equation}
Using the definition of $\exp_{1}^{-1}(j)$, we similarly rewrite \eqref{one10} (with its indices permuted) as
\begin{equation}\label{one9prime}
-\exp_{1}^{-1}(4)=\mbox{Proj}_{\exp_{1}^{-1}(4)}(\exp_{1}^{-1}(2))+\mbox{Proj}_{\exp_{1}^{-1}(4)}(\exp_{1}^{-1}(3)).
\end{equation}
Since $\exp_{1}^{-1}(j)$ is perpendicular to $v_{1}$ for $j\in\{2,3,4\}$, $\{\exp_{1}^{-1}(j)\}_{j\in\{2,3,4\}}$ lies in a plane containing the origin.  Combining \eqref{one6prime} and \eqref{one9prime} therefore gives \eqref{one10prime}.
\end{proof}
In particular, orthogonally projecting \eqref{one10prime} orthogonal to $\exp_{1}^{-1}(2)$ gives
\begin{equation}\label{one12}
\theta_{13}\sin\Theta_{213}=\theta_{14}\sin\Theta_{214}.
\end{equation}

The similarity of \eqref{one6},\eqref{one10} and \eqref{one12} is no coincidence.  For example, note that \eqref{one10} and \eqref{one12} express $\sum_{j\in \{2,3,4\}}\exp_{1}^{-1}(j)=0$, tangential and perpendicular to $\exp_{1}^{-1}(2)$, respectively.  So, combining Proposition \ref{prop8}, \eqref{one10}, \eqref{one12} and Lemma \ref{lemma5}, we deduce the equivalence
\begin{cor}\label{cor3}  Suppose that the conclusion of Proposition \ref{prop1}(e) holds.  Then
$$
\left\{\substack{\mbox{Prop. \ref{prop1}(c)}\\\mbox{holds}}\right\}
\Leftrightarrow
\left\{\substack{\mbox{$P$ is a}\\ \mbox{critical point of $G$}}\right\}
\Leftrightarrow
\left\{\substack{\mbox{\eqref{one10prime} holds at each $v_{i}$.  That is,}\\ \mbox{for each $i$, $\sum_{j\colon j\neq i}\exp_{i}^{-1}(j)=0$}}\right\}
$$
\end{cor}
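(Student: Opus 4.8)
The plan is to obtain the first equivalence for free and to reduce the second equivalence to a single gradient identity. The equivalence ``Proposition~\ref{prop1}(c) holds $\Leftrightarrow$ $P$ is a critical point of $G$'' is exactly the content of Proposition~\ref{prop8}, which is already stated under the standing hypothesis that the conclusion of Proposition~\ref{prop1}(e) holds, so nothing further is needed there. For the second equivalence, ``$P$ is a critical point of $G$ $\Leftrightarrow$ \eqref{one10prime} holds at each $v_i$'', the forward implication is just Lemma~\ref{lemma5} invoked at each of the four vertices (the argument is symmetric in the indices); the work lies in the converse.

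For the converse I would return to the gradient computation carried out inside the proof of Proposition~\ref{prop8}. There it is shown that, identifying $T_{v_1}\R^3\cong\R^3$ and $T_{v_1}S^2\cong v_1^\perp$, for $y\in T_{v_1}S^2$ and $x=v_1$ one has
\[
\frac{d}{dy}\,G\Big(\tfrac{x}{\vnorm{x}_2},v_2,v_3,v_4\Big)=\sum_{j\neq 1}\frac{-2\theta_{1j}}{\sin\theta_{1j}}\langle y,v_j\rangle .
\]
Since $y\perp v_1$, I may replace $v_j$ by $v_j-\langle v_j,v_1\rangle v_1$ in each inner product; and since $\vnorm{v_j-\langle v_j,v_1\rangle v_1}_2=\sin\theta_{1j}>0$ (using $\theta_{1j}\in(0,\pi)$ from Proposition~\ref{prop1}(e)), the definition~\eqref{one9.9} gives $v_j-\langle v_j,v_1\rangle v_1=\tfrac{\sin\theta_{1j}}{\theta_{1j}}\exp_1^{-1}(j)$. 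The factors $\theta_{1j}/\sin\theta_{1j}$ then cancel and
\[
\frac{d}{dy}\,G=-2\Big\langle y,\ \sum_{j\neq 1}\exp_1^{-1}(j)\Big\rangle .
\]
Because each $\exp_1^{-1}(j)$ is perpendicular to $v_1$, the vector $\sum_{j\neq 1}\exp_1^{-1}(j)$ lies in $T_{v_1}S^2$, so it is (up to the factor $-2$) the Riemannian gradient of $G$ at $v_1$, and it vanishes if and only if $\sum_{j\neq 1}\exp_1^{-1}(j)=0$. Running the same computation at $v_2,v_3,v_4$ shows that $P$ is a critical point of $G$ precisely when \eqref{one10prime} holds at every vertex; combined with Proposition~\ref{prop8} this closes the cycle of equivalences. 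As a sanity check I would note that \eqref{one10} and \eqref{one12} are exactly the components of $\sum_{j\in\{2,3,4\}}\exp_1^{-1}(j)=0$ parallel and perpendicular to $\exp_1^{-1}(2)$, which is why the three conditions in the corollary line up as stated.

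I do not expect a genuine obstacle here: the corollary is essentially a collation of Proposition~\ref{prop8}, Lemma~\ref{lemma5}, and the gradient formula above. The only points that require care are bookkeeping ones: keeping the tangent-space identifications consistent, using the hypothesis that Proposition~\ref{prop1}(e) holds in order to guarantee $\sin\theta_{1j}>0$ (so that $\exp_1^{-1}$ is well defined and no division by zero occurs), and noting that the chain-rule computation in the proof of Proposition~\ref{prop8} is reversible, so that the displayed formula is an honest equivalence rather than a one-directional implication.
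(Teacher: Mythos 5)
Your proposal is correct and follows essentially the same route as the paper: the first equivalence is Proposition~\ref{prop8}, the implication ``critical $\Rightarrow$ \eqref{one10prime}'' is Lemma~\ref{lemma5}, and the converse comes from reading the gradient computation inside Proposition~\ref{prop8} as an honest equivalence. The paper packages that last step by noting that~\eqref{one10} and~\eqref{one12} are exactly the components of $\nabla_{S^2}G$ tangential and perpendicular to $\exp_1^{-1}(2)$; your version, which rewrites the gradient directly as $\frac{d}{dy}G=-2\bigl\langle y,\ \sum_{j\neq 1}\exp_1^{-1}(j)\bigr\rangle$ using $v_j-\langle v_j,v_1\rangle v_1=\tfrac{\sin\theta_{1j}}{\theta_{1j}}\exp_1^{-1}(j)$, is the same content stated as a single vector identity, which is a slightly cleaner way to see that criticality at $v_i$ is equivalent to $\sum_{j\neq i}\exp_i^{-1}(j)=0$.
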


\begin{lemma}\label{thm1}
$\theta_{ij}>1/10$ and $\sin\Theta_{ij\ell}>1/10$.
\end{lemma}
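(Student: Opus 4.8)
The plan is to prove both inequalities by contradiction, exploiting the fact --- established above --- that the extremal partition $P$ is a critical point of $G$, and so satisfies the identities of Corollary~\ref{cor3}, Proposition~\ref{prop6} and Lemma~\ref{prop7}, together with the a priori bounds of Corollary~\ref{cor2.5} (every triangle has an edge $>\tfrac54$) and Lemma~\ref{lemma4} ($\theta_{ij}<\pi-\tfrac12$), as well as the standing hypothesis $F(P)>\tfrac{9\pi^2}{4}$. In each case one assumes the offending quantity is small, shows that this pins the partition into a near-degenerate configuration whose parameters are determined rather precisely, and then contradicts one of: the spherical-triangle perimeter bound ($<2\pi$), Corollary~\ref{cor2.5}, or the assumption on $F(P)$.

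\smallskip

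\noindent\emph{The bound $\theta_{ij}>\tfrac1{10}$.} Suppose $\theta_{12}=\min_{i<j}\theta_{ij}\le\tfrac1{10}$. By Corollary~\ref{cor3} the three vectors $\exp_\ell^{-1}(j)$ ($j\ne\ell$) sum to zero at each vertex $v_\ell$, so their lengths $\{\theta_{\ell j}\}_{j\ne\ell}$ obey the triangle inequality; applying this at $v_1$ and $v_2$, and the spherical triangle inequality in $T_4$, gives $|\theta_{13}-\theta_{14}|\le\theta_{12}$, $|\theta_{23}-\theta_{24}|\le\theta_{12}$, $|\theta_{13}-\theta_{23}|\le\theta_{12}$, so $\theta_{13},\theta_{14},\theta_{23},\theta_{24}$ all lie within $\tfrac15$ of a common value $t$; by Corollary~\ref{cor2.5} applied to $T_4$ one of them exceeds $\tfrac54$, and by Lemma~\ref{lemma4} all are $<\pi-\tfrac12$. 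I would feed these nearly equal values into the single-triangle identity~\eqref{one5} for $T_4$: as $\theta_{12}\to0$ and $\theta_{13},\theta_{23}\to t$ it degenerates to $t\cos t=-\sin t$, whose unique root in $(\tfrac\pi2,\pi-\tfrac12)$ is $t_0\approx 2.027$, and since~\eqref{one5} is a small perturbation of this limit, $t$ is confined to a small neighborhood of $t_0$. Then Proposition~\ref{prop6} gives $\tfrac{\sin\theta_{34}}{\theta_{34}}=\tfrac{\theta_{12}}{\sin\theta_{12}}\cdot\tfrac{\sin\theta_{13}}{\theta_{13}}\cdot\tfrac{\sin\theta_{24}}{\theta_{24}}$, which, as $\theta_{12}$ is tiny and $\theta_{13},\theta_{24}\approx t_0$, is close to $(\tfrac{\sin t_0}{t_0})^2\approx0.196$; since $x\mapsto\tfrac{\sin x}{x}$ is decreasing on $(0,\pi)$ this forces $\theta_{34}$ close to $2.6$. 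But then the perimeter $\theta_{13}+\theta_{14}+\theta_{34}$ of the spherical triangle $T_2$ is close to $2t_0+2.6\approx6.65>2\pi$, which is impossible. Hence $\theta_{12}>\tfrac1{10}$, so all $\theta_{ij}>\tfrac1{10}$.

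\smallskip

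\noindent\emph{The bound $\sin\Theta_{ij\ell}>\tfrac1{10}$.} Suppose $\sin\Theta_{123}\le\tfrac1{10}$; put $\delta=\arcsin\tfrac1{10}$, so $\Theta_{123}\le\delta$ or $\Theta_{123}\ge\pi-\delta$. If $\Theta_{123}\ge\pi-\delta$, the spherical law of cosines makes $\cos\theta_{13}$ close to $\cos(\theta_{12}+\theta_{23})$; since $\theta_{13}<\pi-\tfrac12$ rules out $v_2$ being the ``far'' point, $\theta_{13}\approx\theta_{12}+\theta_{23}$, and the same computation shows $\mathrm{Area}(T_4)\to0$, i.e.\ the angles of $T_4$ at $v_1$ and $v_3$ are close to $0$. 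If $\theta_{12},\theta_{23}$ are both below a fixed small threshold, then $\theta_{13}<\tfrac54$ as well and all three edges of $T_4$ are $<\tfrac54$, contradicting Corollary~\ref{cor2.5}. Otherwise (say $\theta_{12}$ is not small), the vanishing of the angle of $T_4$ at $v_1$ together with the relation $\Theta_{213}+\Theta_{214}+\Theta_{314}=2\pi$ (these three spherical angles tile a neighborhood of $v_1$; cf.~\eqref{one10prime} and Proposition~\ref{prop1}(e)) forces the angle $\Theta_{214}$ of $T_3$ at $v_1$ close to $\pi$, whence $\theta_{24}\approx\theta_{12}+\theta_{14}$; combined with~\eqref{one6}, which gives $\theta_{24}\approx|\theta_{12}-\theta_{23}|$, this makes $\theta_{14}\approx|\theta_{12}-\theta_{23}|-\theta_{12}$, negative up to the error terms since $\theta_{12},\theta_{23}>\tfrac1{10}$ --- a contradiction (the symmetric argument at $v_3$ covers the case $\theta_{23}$ not small). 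If instead $\Theta_{123}\le\delta$, then $\theta_{13}\approx|\theta_{12}-\theta_{23}|$, and the full-angle relation at $v_2$ forces the angles of $T_3$ and $T_1$ at $v_2$ close to $\pi$, so $\theta_{14}\approx\theta_{12}+\theta_{24}$, $\theta_{34}\approx\theta_{23}+\theta_{24}$, while~\eqref{one6} gives $\theta_{24}\approx\theta_{12}+\theta_{23}$; thus all six $\theta_{ij}$ are expressed (approximately) through $\theta_{12},\theta_{23}$. Corollary~\ref{cor2.5} forces $\max(\theta_{12},\theta_{23})>\tfrac54$, and then Lemma~\ref{lemma4} applied to $\theta_{14}$ or $\theta_{34}$ forces $\min(\theta_{12},\theta_{23})$ to be tiny; substituting into $F(P)=\sum_{i<j}\theta_{ij}^2$ (Lemma~\ref{lem:squares}) yields $F(P)<14<\tfrac{9\pi^2}{4}$, contradicting the standing hypothesis.

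\smallskip

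\noindent\emph{Main difficulty.} The constant $\tfrac1{10}$ is close to the boundary of what these crude estimates deliver; in particular the perimeter contradiction in the edge-length case holds only with a small margin, so the quantitative forms of all the ``$\approx$'' above must be tracked carefully, and, as in the proof of Lemma~\ref{lemma4}, one will likely have to bootstrap the relation~\eqref{one2} a few times to tighten the confinement of $\theta_{13},\theta_{14},\theta_{23},\theta_{24}$ before invoking~\eqref{one5}, and similarly sharpen the neighborhood of $t_0$. The various relabelings and the ``$v_2$ between vs.\ $v_2$ far'' dichotomy in the law of cosines are routine but must be spelled out.
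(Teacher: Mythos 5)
Your proposal takes a genuinely different route from the paper. The paper proves $\theta_{ij}>\tfrac1{10}$ by a direct three-way case analysis on the size of $\max\{\theta_{13},\theta_{23}\}$ (near $\tfrac\pi2$, between $\tfrac\pi2+\tfrac1{10}$ and $2.9$, or outside this range), with each case argued hands-on (a projection argument showing $\langle z_4,n\rangle\ge0$ for a suitable normal $n$, then the spherical law of cosines and~\eqref{four4.5}--\eqref{four4.6} to get $\theta_{34}>\pi-\tfrac12$; a direct $\theta_{34}>\pi$ estimate; and the open-hemisphere property from Proposition~\ref{prop5}(a)). For $\sin\Theta_{ij\ell}>\tfrac1{10}$ the paper again splits into three cases on $\max$ and $\min\{\theta_{13},\theta_{23}\}$, the hardest of which (Case~3$'$) is resolved by a topological argument: showing the three edges $E_{i'4},E_{j'4},E_{i'j'}$ trace a noncontractible loop in the annulus $\{0\le\langle x,n\rangle\le.65\}$ and comparing total edge length against the circumference $2\pi\sqrt{1-.65^2}$. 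Your approach instead exploits the critical-point structure more systematically: the balance $\sum_{j\ne\ell}\exp_\ell^{-1}(j)=0$ to squeeze the four long edges into a narrow band, the single-triangle equation~\eqref{one5} in the $\theta_{12}\to 0$ limit to pin the common value $t$ near the root $t_0$ of $\tan t=-t$ (I verified this limit is correct: squaring $\cos t=-\gamma/\sqrt\lambda$ with $\gamma\to 2\tfrac{\sin t}t+\cos t$, $\lambda-\gamma^2\to\sin^2t$ gives exactly $\tan t=-t$), the opposite-edge relation~\eqref{one2} to locate $\theta_{34}$, and then the perimeter bound $<2\pi$. For the angle bound you argue via the full-angle relation $\Theta_{ij\ell}+\Theta_{ijk}+\Theta_{\ell jk}=2\pi$ at each vertex and~\eqref{one6}, closing with either Corollary~\ref{cor2.5} or $F(P)<\tfrac{9\pi^2}{4}$. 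What your route buys is a cleaner conceptual picture (a ``fake'' near-degenerate critical configuration with $F\approx 4t_0^2+\theta_{34}^2\approx 23.3>\tfrac{9\pi^2}4$ is ruled out because its perimeter $2t_0+\theta_{34}\approx 6.67$ exceeds $2\pi$), and it avoids the paper's noncontractibility argument entirely. What it costs is tight margins: crude first-pass error bounds (e.g.\ $\theta_{13},\theta_{14}\ge t_0-0.2$, $\theta_{34}\ge 2.44$) give a perimeter lower bound of about $6.2<2\pi$, so the plan fails without the bootstrapping you mention --- one genuinely has to iterate the perimeter constraint and the sharpness of~\eqref{one5} to tighten the confinement of $t$ near $t_0$ before the contradiction closes. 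You flag this honestly, and I believe it can be made to work, but as written the first part is a plan rather than a proof. Two smaller points: the justification that ``$\theta_{13}<\pi-\tfrac12$ rules out $v_2$ being the far point'' in the $\Theta_{123}\ge\pi-\delta$ case isn't quite right --- the far case $\theta_{13}\approx 2\pi-(\theta_{12}+\theta_{23})$ is actually excluded because it would make the angle at $v_2$ close to $0$, not $\pi$; and the ``symmetric argument at $v_3$'' covering $\theta_{23}>2\theta_{12}$ needs to be written out, since the sign of $\theta_{14}\approx|\theta_{12}-\theta_{23}|-\theta_{12}$ flips there.
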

\begin{proof}
After recording a few identities, we consider a few cases in proving $\theta_{ij}>1/10$.  In all cases, we argue by contradiction and assume $0<\theta_{12}\leq1/10$.  Also, by relabeling if necessary, we may assume that $\max\{\theta_{13},\theta_{23},\theta_{14},\theta_{24}\}=\max\{\theta_{13},\theta_{23}\}$.

\begin{equation}\label{four3}
\cos\Theta_{213}=\frac{\cos\theta_{23}-\cos\theta_{12}\cos\theta_{13}}{\sin\theta_{12}\sin\theta_{13}}.
\end{equation}
\begin{equation}\label{four4}
\cos\Theta_{132}=\frac{\cos\theta_{12}-\cos\theta_{13}\cos\theta_{23}}{\sin\theta_{13}\sin\theta_{23}}.
\end{equation}
\begin{equation}\label{four4.5}
\begin{aligned}
\cos\theta_{34}&=\cos\theta_{13}\cos\theta_{14}+\sin\theta_{13}\sin\theta_{14}\cos\Theta_{314}\\
&=\cos\theta_{13}\cos\theta_{14}+\sin\theta_{13}\sin\theta_{14}\cos(2\pi-\Theta_{213}-\Theta_{214}).
\end{aligned}
\end{equation}
\begin{equation}\label{four4.6}
\absf{\arccos(x+y)-\arccos(x)}\leq\arccos(1-y),\quad x,y,x+y\in[-1,1].
\end{equation}

\noindent {\bf (Case 1)} $\pi/2\leq\max\{\theta_{13},\theta_{23}\}\leq\pi/2+1/10$.  Without loss of generality, label the edges so that $\theta_{13}\geq\theta_{23}$, so $\pi/2\leq\theta_{13}\leq\pi/2+1/10$.  Then $\theta_{14}\geq\theta_{24}$, from Proposition \ref{prop1}(b).  The Proposition shows that $\mbox{Proj}_{\mbox{span}\{v_{1},v_{2}\}}(z_{4}-z_{3})=0$.  Suppose $\theta_{14}<\theta_{24}$.  Let $e$ be the midpoint of the edge $T_{4}\cap T_{3}$ and let $n\in\mbox{span}\{v_{1},v_{2}\}$ be perpendicular to $e$ such that $\langle n,v_{2}\rangle>0$.  Since $\theta_{13}\geq\theta_{23}$, $\langle z_{4},n\rangle\geq0$.  To see this, let $\Pi\colon S^{2}\to\R^{2}$ denote projection onto the unique plane intersecting the origin and perpendicular to $n$.  Without loss of generality, the vertices are oriented so that $\det(v_{1},v_{2},v_{3})>0$.  Let $m$ be the point of intersection of this plane with $(\partial T_{4})\setminus(\partial T_{3})$.  The spherical triangle $T_{4}'$ with vertices $\{v_{1},v_{2},m\}$ then has two equal length edges $\{v_{1},m\}$ and $\{v_{2},m\}$, and $T_{4}'\subseteq T_{4}$.  Also, $\theta_{13}\geq\theta_{23}$ implies that $T_{4}\cap\{x\in S^{2}\colon\langle x,n\rangle\leq0\}=T_{4}'\cap\{x\in S^{2}\colon\langle x,n\rangle\leq0\}$.  Since $T_{4}\supseteq T_{4}'$ and $T_{4}'$ is isosceles,
\begin{flalign*}
\Pi\left(T_{4}\cap\{x\in S^{2}\colon\langle x,n\rangle\geq0\}\right)
\supseteq\Pi\left(T_{4}'\cap\{x\in S^{2}\colon\langle x,n\rangle\geq0\}\right)
=\Pi\left(T_{4}\cap\{x\in S^{2}\colon\langle x,n\rangle\leq0\}\right).
\end{flalign*}
So \eqref{one0.5} shows $\langle z_{4},n\rangle\geq0$.  Similarly, $\theta_{14}<\theta_{24}$, implies $\langle z_{3},n\rangle<0$, a contradiction.  Now
$$
\cos\Theta_{213}
=\frac{\cos\theta_{23}-\cos\theta_{13}\cos\theta_{12}}{\sin\theta_{13}\sin\theta_{12}}
\geq\frac{\cos\theta_{13}(1-\cos\theta_{12})}{\sin\theta_{13}\sin\theta_{12}}
=\frac{\tan(\theta_{12}/2)}{\tan\theta_{13}}
\geq\frac{\tan(1/20)}{\tan(\pi/2+1/10)}.
$$
So, $\Theta_{213}\leq\arccos(\frac{\tan(1/20)}{\tan(\pi/2+1/10)})<\pi/2+.01$.  By \eqref{one6}, $(\theta_{13}-\theta_{12})^{2}\leq\theta_{14}^{2}\leq(\theta_{12}+\theta_{13})^{2}$, so $\pi/2-1/10\leq\theta_{14}\leq\theta_{13}+\theta_{12}\leq\pi/2+1/10+1/10$.  So we similarly conclude that $\Theta_{214}\leq\arccos(\frac{\tan(1/20)}{\tan(\pi/2+1/5)})<\pi/2+.02$.
By \eqref{four4.5},
\begin{equation}\label{four4.65}
\cos\theta_{34}=\cos(\theta_{13}+\theta_{14})+\sin\theta_{13}\sin\theta_{14}(1+\cos(2\pi-\Theta_{213}-\Theta_{214})).
\end{equation}
Since $\Theta_{314}<\pi$ by Proposition \ref{prop1}(e), $$\abs{1+\cos\Theta_{314}}=\abs{1+\cos(2\pi-\Theta_{213}-\Theta_{214})}\leq \abs{1+\cos(2\pi-\pi-.03)}<.0005.$$  Then \eqref{four4.6} applied to \eqref{four4.65} gives a dichotomy: $(i)$ if $\theta_{13}+\theta_{14}\leq\pi$, then $\abs{\theta_{34}-(\theta_{13}+\theta_{14})}<.04$.  Or $(ii)$ if $\theta_{13}+\theta_{14}>\pi$, then $\abs{\theta_{34}-(2\pi-\theta_{13}+\theta_{14})}<.04$.  In case $(i)$, $\theta_{34}\geq\pi/2+\pi/2-1/10-.04>\pi-1/2$.  In case $(ii)$, $\theta_{34}\geq2\pi-(\pi/2+1/10+\pi/2+1/5)-.04>\pi-1/2$.  So, in either case, Lemma \ref{lemma4} is violated.  Therefore Case 1 cannot hold.

\noindent{\bf (Case 2)} $\pi/2+1/10\leq\max\{\theta_{13},\theta_{23}\}\leq2.9$.  Without loss of generality, label the edges so that $\theta_{13}\geq\theta_{23}$, so $\pi/2+1/10\leq\theta_{13}\leq2.9$.  By the triangle inequality on the sphere, $\theta_{23}\geq\theta_{13}-\theta_{12}\geq\theta_{13}-1/10\geq\pi/2$, so
$$
\theta_{13}\cos\Theta_{132}
=\theta_{13}\left(\frac{\cos\theta_{12}-\cos\theta_{13}\cos\theta_{23}}{\sin\theta_{13}\sin\theta_{23}}\right)
\geq\theta_{13}\left(\frac{\cos(1/10)-\cos\theta_{13}\cos\theta_{13}}{\sin\theta_{13}\sin(\theta_{13}-1/10)}\right)
>\frac{\pi}{2}.
$$
Since $\theta_{23},\theta_{13}\geq\pi/2$, \eqref{one6} gives
$$
\theta_{34}
=\sqrt{\theta_{13}^{2}+\theta_{23}^{2}+2\theta_{13}\theta_{23}\cos\Theta_{132}}
>\sqrt{2(\pi/2)^{2}+2(\pi/2)(\pi/2)}=\pi,
$$
so \eqref{three2} is violated.  Therefore Case 2 cannot occur.

\noindent{\bf (Case 3)} The case $\max\{\theta_{13},\theta_{23}\}>2.9$ cannot occur, by \eqref{three2}.  Also, $\max\{\theta_{13},\theta_{23},\theta_{14},\theta_{24}\}<\pi/2$ implies that $T_{4}$ and $T_{3}$ are contained in separate but adjacent quarter spheres, so $T_{4}\cup T_{3}$ is contained in a half sphere.  To see this, consider the edge $T_{4}\cap T_{3}$.  Let $\Pi_{12}$ be the plane such that $\Pi_{12}\cap S^{2}\supset T_{4}\cap T_{3}$.  Temporarily suppose $\theta_{13}=\theta_{23}=\pi/2$, and let $R\colon\R^{3}\to\R^{3}$ be the origin fixing rotation that maps the midpoint of $T_{4}\cap T_{3}$ to $v_{3}$.  Then $T_{4}$ is contained in a quarter sphere whose boundary is contained in $\Pi_{12}\cap S^{2}$ and $(R(\Pi_{12}))\cap S^{2}$.  If $\theta_{13},\theta_{23}\leq\pi/2$, then $T_{4}$ remains in this quarter sphere, since otherwise $v_{3}\notin\{x\in S^{2}\colon d_{S^{2}}(x,v_{1})\leq\pi/2,d_{S^{2}}(x,v_{2})\leq\pi/2\}$.  So, Proposition \ref{prop5}(a) is violated.  All cases therefore produce a contradiction, concluding the proof that $\theta_{12}>1/10$.

We now prove that $\sin\Theta_{ijk}>1/10$.  We argue by contradiction and assume that $\sin\Theta_{132}<1/10$.  The procedure is similar to the calculations above.  The theme is: if some $\Theta_{ijk}$ is very small or very large, this creates additive relations among the edges, up to small errors.  Before we begin, we list a few consequences of the spherical law of cosines, for $i,j,k$ distinct elements of $\{1,2,3,4\}$.

\begin{equation}\label{four4.7}
\cos\theta_{ij}=\cos(\theta_{ik}-\theta_{kj})+\sin\theta_{ik}\sin\theta_{kj}(\cos\Theta_{ikj}-1).
\end{equation}
\begin{equation}\label{four4.8}
\cos\theta_{ij}=\cos(\theta_{ik}+\theta_{kj})+\sin\theta_{ik}\sin\theta_{kj}(\cos\Theta_{ikj}+1).
\end{equation}

Now, assume $\Theta_{132}<\pi/2$.  Since $\Theta_{132}+\Theta_{234}+\Theta_{134}=2\pi$ and $-\Theta_{134}>-\pi$ from Proposition \ref{prop1}(e), $\Theta_{132}+\Theta_{234}\geq\pi$.  So $\Theta_{234}>\pi/2$ and $\sin\Theta_{234}\leq\sin(\pi-\Theta_{132})=\sin(\Theta_{132})<1/10$.  So, by relabeling edges if necessary, we may assume that $\Theta_{132}>\pi/2$.  Since $\Theta_{132}>\pi/2$, $\cos\Theta_{132}\leq-.9949$.  Without loss of generality, label the edges so that $\theta_{23}=\max\{\theta_{13},\theta_{23}\}$ and $\theta_{13}=\min\{\theta_{13},\theta_{23}\}$.

\noindent{\bf (Case 1')}  Assume $\max\{\theta_{13},\theta_{23}\}<\pi/3$.  By \eqref{one6}, $\theta_{34}\leq\theta_{13}+\theta_{23}\leq2\pi/3$.  But then Proposition \ref{prop5}(a) is violated.  (Let $v\in T_{1}\cap T_{2}$ satisfy $d_{S^{2}}(v,v_{4})=\min\{\theta_{34},\pi/2\}$, and note that $\{v_{i}\}_{i=1}^{4}\subset\{x\in S^{2}\colon d(x,v)\leq\pi/2\}$.)  So Case 1' cannot occur.

\noindent{\bf (Case 2')}  Assume $\max\{\theta_{13},\theta_{23}\}\geq\pi/3$ and $1/10<\min\{\theta_{13},\theta_{23}\}\leq1/2$.  Then $\theta_{13}+\theta_{23}<\pi$ by Lemma \ref{lemma4}, so \eqref{four4.5} and \eqref{four4.6} give $\abs{\theta_{12}-(\theta_{13}+\theta_{23})}<.11$.  By the spherical law of sines and Lemma \ref{lemma4}, $\sin\Theta_{123}=\sin\Theta_{132}\sin\theta_{13}/\sin\theta_{12}\leq\sin\Theta_{132}\sin(1/2)/\sin(1/2)<1/10$.
Since $\theta_{13}\leq\theta_{23}$ and $\theta_{13}\leq1/2$, Lemma \ref{lemma4} gives $\cos\theta_{13}\geq\abs{\cos\theta_{23}}\geq\cos\theta_{23}\cos\theta_{12}$.  So, the formula $\cos\Theta_{123}=(\cos\theta_{13}-\cos\theta_{23}\cos\theta_{12})/(\sin\theta_{12}\sin\theta_{23})$ shows $\cos\Theta_{123}\geq0$, i.e. $\cos\Theta_{123}>.9949$.

Since $\theta_{23}<\pi-1/2$ from Lemma \ref{lemma4}, $\theta_{13}+\theta_{23}<\pi$.  Combining this with $\cos\Theta_{132}<-.9949$, \eqref{four4.8} and \eqref{four4.6}, we see that $\theta_{12}\geq\theta_{13}+\theta_{23}-.11$.  Since $\cos\Theta_{123}>.9949$, \eqref{one6} and $\theta_{12}>.1$ imply $\theta_{24}\geq\sqrt{(\pi/3)^{2}+(.1)^{2}+2(.1)(\pi/3)(.9949)}>\pi/3$.  Now, since $\cos\Theta_{123}>.9949$, $\Theta_{123}+\Theta_{124}+\Theta_{324}=2\pi$, and $-\Theta_{123}>-\pi$ from Proposition \ref{prop1}(e), $\Theta_{123}+\Theta_{324}\geq\pi$, so $\cos\Theta_{324}\leq\cos(\pi-\Theta_{123})<-.9949$.  Since $\theta_{23},\theta_{24}>\pi/3$, by relabeling the edges if necessary we may assume that $\cos\Theta_{123}<-.9949$, $\max\{\theta_{13},\theta_{23}\}\geq\pi/3$ and $1/2<\pi/3<\min\{\theta_{13},\theta_{23}\}$.  Thus, we enter into the following case.

\noindent{\bf (Case 3')}  Assume $\max\{\theta_{13},\theta_{23}\}\geq\pi/3$ and $1/2<\min\{\theta_{13},\theta_{23}\}$.  Recall that we may label the edges so that $\theta_{23}=\max\{\theta_{13},\theta_{23}\}$ and $\theta_{13}=\min\{\theta_{13},\theta_{23}\}$.  We claim that $\theta_{23}>1.3$.  If not, then \eqref{one6} says $\theta_{34}^{2}=\theta_{23}^{2}+\theta_{13}^{2}+2\theta_{23}\theta_{13}\cos\Theta_{132}$, so $\theta_{34}\leq\sqrt{2(1.3)^{2}-2(.994)(\pi/3)(1/2)}<1.53<\pi/2$ and $\{v_{i}\}_{i=1}^{4}\subset\{x\in S^{2}\colon d_{S^{2}}(x,v_{3})\leq\pi/2\}$.  Thus, Proposition \ref{prop5}(a) is violated, and our claim is proven.

As in Case 1 in the proof that $\theta_{ij}>1/10$, \eqref{four4.8} gives a dichotomy
$$(i)\quad\abs{\theta_{12}-(\theta_{13}+\theta_{23})}<.11, \qquad\mbox{or}\quad(ii)\quad \abs{\theta_{12}+\theta_{13}+\theta_{23}-2\pi}<.11$$
In case $(i)$, $\theta_{12}\geq\theta_{23}+\theta_{13}-.11>1/2$, and in case $(ii)$, Lemma \ref{lemma4} says $\theta_{ij}<\pi-1/2$, so $\theta_{12}\geq.89>1/2$.  So in either case, $1/2\leq\theta_{12}\leq\pi-1/2$, using Lemma \ref{lemma4} again.  By the spherical law of sines, $\sin\Theta_{123}\leq\sin\Theta_{132}/\sin\theta_{12}\leq(1/10)(1/\sin(1/2))<21/100$.  Similarly, $\sin\Theta_{312}<21/100$.

Assume $(i)$ holds.  If $\Theta_{123}<\pi/2$, then $\cos\Theta_{123}>.977$.  By $(i)$, $\theta_{12}\geq1.3+1/2-.11=1.69$.  By \eqref{one6}, $\theta_{24}^{2}=\theta_{12}^{2}+\theta_{23}^{2}+2\theta_{12}\theta_{23}\cos\Theta_{123}$, so
$$
\theta_{24}\geq\sqrt{(1.69)^{2}+(\pi/3)^{2}+2(1.69)(\pi/3)(.977)}>2.72>\pi-\frac12.
$$
contradicting Lemma \ref{lemma4}.  If $\Theta_{123}\geq\pi/2$ then $\cos\Theta_{123}<-.977$.  Then \eqref{four4.8} gives a dichotomy: either $(i)'$ $\abs{\theta_{13}-(\theta_{12}+\theta_{23})}<.26$ or $(ii)'$ $\abs{\theta_{13}+\theta_{12}+\theta_{23}-2\pi}<.26$.  If $(i)'$ holds, then $\theta_{13}\geq\theta_{12}+\theta_{23}-.26$, but $(i)$ says $\theta_{12}\geq\theta_{13}+\theta_{23}-.11$.  So $\theta_{13}\geq\theta_{13}+2\theta_{23}-.37$, i.e. $\theta_{23}<.185$, contradicting that $\theta_{23}>\pi/3$.  So $(ii)'$ must hold.  But then $(i)$ implies $2\theta_{12}\geq\theta_{12}+\theta_{13}+\theta_{23}-.11\geq2\pi-.37$, so $\theta_{12}\geq\pi-.185$, contradicting Lemma \ref{lemma4}.  Therefore $(i)$ does not hold.

Assume $(ii)$ holds.  Suppose $\Theta_{123}<\pi/2$ so $\cos\Theta_{123}>.977$.  Then \eqref{four4.7} and \eqref{four4.6} show that $\abs{\theta_{13}-\abs{\theta_{12}-\theta_{23}}}<.26$.  If $\theta_{12}\geq\theta_{23}$, then $\abs{\theta_{12}-(\theta_{13}+\theta_{23})}<.26$.  So by $(ii)$ $2\theta_{12}\geq\theta_{12}+\theta_{13}+\theta_{23}-.26\geq2\pi-.37$, contradicting Lemma \ref{lemma4}.  If $\theta_{23}\geq\theta_{12}$, then the same argument shows $2\theta_{23}\geq2\pi-.37$, contradicting Lemma \ref{lemma4}.  We may therefore assume that $\cos\Theta_{123}<-.977$.  If $\cos\Theta_{312}>.977$, we get the same contradiction, so we may assume $\cos\Theta_{312}<-.977$.  In summary, $\cos\Theta_{123},\cos\Theta_{312},\cos\Theta_{132}<-.977$.

For $i,j$ distinct, $i,j\in\{1,2,3,4\}$ let $E_{ij}=T_{i}\cap T_{j}$.  Let $i',j'$ so that $\theta_{i'j'}\colonequals\max\{\theta_{13},\theta_{23},\theta_{12}\}$.  Let $\Pi_{i'j'}$ be the plane containing $E_{i'j'}$ and the origin, and let $n$ be the unit normal to $\Pi_{i'j'}$ such that $\langle n,z_{4}\rangle<0$.  We claim that the edges $E_{i'4},E_{j'4}$ satisfy
\begin{equation}\label{four4.93}
E_{i'4},E_{j'4}\subset\{x\in S^{2}\colon0\leq\langle x,n\rangle\leq.65\}.
\end{equation}
Also, the union of $E_{i'4},E_{j'4},E_{i'j'}$ forms a noncontractible loop in this topological annulus.

To prove $E_{i'4}\subset\{x\in S^{2}\colon\langle x,n\rangle\leq.65\}$, it suffices to show
\begin{equation}\label{four4.95}
\Theta_{j'i'4}>\pi-\sin^{-1}(.65),\quad\mbox{or}\quad\theta_{i'4}<\sin^{-1}(.65).
\end{equation}
To see this consequence, let $x\in S^{2}$ be contained in the great circle containing $E_{i'4}$ and let $\{i',j',k\}=\{1,2,3\}$.  Since $\langle n,v_{i'}\rangle=0$ and $\langle n,z_{k}\rangle>0$, $\langle x,n\rangle$ is maximized when $d_{S^{2}}(x,v_{i'})=\pi/2$ and $\langle x,n\rangle\geq0$.  For such an $x$, $\langle x,n\rangle=\sin\Theta_{j'i'4}$.  So, in the case that $\Theta_{j'i'4}>\pi-\sin^{-1}(.65)$, $E_{i'4}\subset\{x\in S^{2}\colon\langle x,n\rangle\leq.65\}$.  Let now $x\in\{y\in S^{2}\colon d_{S^{2}}(y,v_{i'})\leq\theta_{i'4}\}\supset E_{i'4}$.  For such an $x$, $\langle x,n\rangle$ is maximized for $x$ in the plane containing $\{v_{i'},n,0\}$ with $d_{S^{2}}(x,v_{i'})=\theta_{i'4}$.  So $\langle x,n\rangle=\sin\theta_{i'4}$.  So, in the case that $\theta_{i'4}<\sin^{-1}(.65)$, $E_{i'4}\subset\{x\in S^{2}\colon\langle x,n\rangle\leq.65\}$, as desired.  The containment $E_{j'4}\subset\{x\in S^{2}\colon\langle x,n\rangle\leq.65\}$ follows similarly if one shows: $\Theta_{i'j'4}>\pi-\sin^{-1}(.65)$ or $\theta_{j'4}<\sin^{-1}(.65)$.  Specifically, in the above paragraph we switch all indices of the form $i'$ to $j'$ and all $j'$ to $i'$, and the containment follows.

We now discuss the proof of \eqref{four4.95}.  Let $\{i',j',k\}=\{1,2,3\}$.  Since $\Theta_{j'i'k},\Theta_{i'j'k}<\pi$ by Proposition \ref{prop1}(e) and $E_{i'j'}\subset\{x\in S^{2}\colon\langle x,n\rangle=0\}$, we conclude that $E_{i'k},E_{j'k}\subset\{x\in S^{2}\colon\langle x,n\rangle\leq0\}$.  By Lemma \ref{lemma5}, $\sum_{j\colon j\neq i'}\exp_{i'}^{-1}(j)=0$ and $\sum_{j\colon j\neq j'}\exp_{j'}^{-1}(j)=0$, so we conclude $E_{i'4},E_{j'4}\subset\{x\in S^{2}\colon0\leq\langle x,n\rangle\}$.  By the definition of $\theta_{i'j'}$ and Lemma \ref{lemma5}, $\Theta_{j'i'4}\geq\Theta_{ki'4}$.  In particular, since $\Theta_{j'i'4}+\Theta_{ki'4}\geq\pi$, we conclude that $\Theta_{j'i'4}\geq\pi/2$.  Similarly, $\Theta_{i'j'4}\geq\pi/2$.  If $E_{i'4}\cup E_{i'j'}\cup E_{j'4}$ were contractible in $\{x\in S^{2}\colon0\leq\langle x,n\rangle\leq.65\}$, then by \eqref{four4.95}, the location of $v_{4}$ is such that $T_{3}$ avoids the region $\{x\in S^{2}\colon\langle x,n\rangle\geq.65\mbox{ or }\langle x,n\rangle\leq0\}$.  But then either $\Theta_{i'j'4}$ or $\Theta_{j'i'4}$ is less than $\pi/2$, a contradiction.  So, the noncontractibility property of $E_{i'4}\cup E_{i'j'}\cup E_{j'4}$ follows from \eqref{four4.95}, which remains to be proven.

Consider $\theta_{i'k}$ and $E_{i'j'}$ as fixed with $\Theta_{j'i'k},E_{i'4}$ variable.  From Lemma \ref{lemma5}, recall that $\exp_{i'}^{-1}(4)$ and $\exp_{i'}^{-1}(j')$ have angle $\Theta_{j'i'4}$.  Also $\exp_{i'}^{-1}(k)$ and $\exp_{i'}^{-1}(j')$ have angle $\Theta_{j'i'k}$, and $\exp_{i'}^{-1}(\cdot)$ is confined to a plane.  By Lemma \ref{lemma5}, scaling $\theta_{i'j'}$ and $\theta_{i'k}$ by the same constant (greater than one) leaves $\Theta_{j'i'4}$ unchanged and increases $\theta_{i'4}$.  So in proving \eqref{four4.95}, we may assume that $\theta_{i'j'}=\pi-1/2$, by Lemma \ref{lemma4}.

Suppose $\theta_{i'k}<.815(\pi-1/2)$.  In this case, we will show that $\sin\Theta_{j'i'4}<.65$, recalling that $\Theta_{j'i'4}>\pi/2$.  Using planar geometry and Lemma \ref{lemma5}, one can show: decreasing $\Theta_{j'i'k}$ decreases $\Theta_{j'i'4}$.
So, for the purpose of showing $\sin\Theta_{j'i'4}<.65$, we may assume that $\cos\Theta_{j'i'k}=-.977$.

Now, suppose $\theta_{i'k}\geq.815(\pi-1/2)$.  In this case  we show that $\sin\theta_{i'4}<.65$, recalling that $\theta_{i'4}<\pi/2$, since \eqref{one7.2} says
$$\theta_{i'4}^{2}
=(\theta_{i'j'}-\theta_{i'k})^{2}+2\theta_{i'j'}\theta_{i'k}(1+\cos\Theta_{j'i'k})
\leq(.185(\pi-1/2))^{2}+2(\pi-1/2)^{2}(.815)(.033)<.62.$$
Consider $\theta_{i'k},E_{i'j'}$ as fixed with $\Theta_{j'i'k},E_{i'4}$ variable.  By Lemma \ref{lemma5},
$$
\theta_{i'4}^{2}=\vnorm{\exp_{i'}^{-1}(4)}_{2}^{2}
=\vnorm{\exp_{i'}^{-1}(j')+\exp_{i'}^{-1}(k)}_{2}^{2}
=\theta_{i'j'}^{2}+\theta_{i'k}^{2}+2\theta_{i'j'}\theta_{i'k}\cos\Theta_{j'i'k}.
$$
Recall $\cos\Theta_{j'i'k}<-.977$, so decreasing $\Theta_{j'i'k}$ increases $\theta_{i'4}$.  So, for the purpose of maximizing $\sin\theta_{i'4}$, we may assume that $\cos\Theta_{j'i'k}=-.977$.  With these reductions (i.e. that $\cos\Theta_{j'i'k}=-.977$ and $\theta_{i'j'}=\pi-1/2$) for these two cases, one can then verify \eqref{four4.95} directly as a one-dimensional inequality, using Lemma \ref{lemma5} and treating $\theta_{ik}$ as a variable with all other quantities a function of $\theta_{ik}$.  With identical reductions, $\Theta_{i'j'4}>\pi-\sin^{-1}(.65)$ $\theta_{j'4}<\sin^{-1}(.65)$.  Therefore, \eqref{four4.93} holds.
%
%
%

Given \eqref{four4.93}, note that the circumference of the geodesic ball $\{x\in S^{2}\colon \langle x,n\rangle=.65\}$ is given by $2\pi\sqrt{1-(.65)^2}>4.77$.  By the noncontractible property, $\theta_{i'j'}+\theta_{i'4}+\theta_{j'4}\geq4.77$.  By \eqref{one6}, $\theta_{i'4}\geq\sqrt{(\theta_{i'j'}-\theta_{i'k})^{2}+2\theta_{i'j'}\theta_{i'k}(1+\cos\Theta_{j'i'k})}\geq\theta_{i'j'}-\theta_{i'k}$ and similarly $\theta_{j'4}\geq(\theta_{i'j'}-\theta_{j'k})$.  Substituting these inequalities into $\theta_{i'j'}+\theta_{i'4}+\theta_{j'4}\geq4.77$ gives
$$
\theta_{i'j'}+(\theta_{i'j'}-\theta_{i'k})+(\theta_{i'j'}-\theta_{j'k})\geq4.77.
$$
However, by $(ii)$, $\theta_{i'j'}+\theta_{i'k}+\theta_{j'k}\geq2\pi-.11$.  So adding the inequalities gives $4\theta_{i'j'}\geq4.77+2\pi-.11$, i.e. $\theta_{i'j'}>2.73>\pi-1/2$, violating Lemma \ref{lemma4}.  So, all cases produce a contradiction, and the result is proven.
\end{proof}

\section{Numerical Computations}
\label{secnet}

We now provide a more comprehensive analysis of the system of equations resulting from Lemma \ref{prop7}.  We begin by writing this system explicitly.  For $i\in\{1,2,3\}$ choose $j,\ell$ such that $\{i,j,\ell\}=\{1,2,3\}$.  Define

\begin{equation}\label{three0}
\gamma_{i}\colonequals-\sqrt{\lambda}\,\langle v_{i},v_{4}\rangle
=\frac{\sin\theta_{i\ell}}{\theta_{i\ell}}\cos\theta_{ij}+\frac{\sin\theta_{j\ell}}{\theta_{j\ell}}+\frac{\sin\theta_{ij}}{\theta_{ij}}\cos\theta_{i\ell}.
\end{equation}
Then $\gamma=\gamma_{2}$ from \eqref{one3.0}.  Our system follows by cyclically permuting the $\theta_{ij}$ in \eqref{one5}, i.e.,
\begin{equation}\label{three1}
\begin{cases}
\sqrt{\lambda}\cos\left(\sqrt{\theta_{12}^{2}+\theta_{13}^{2}+2\theta_{12}\theta_{13}\cos\Theta_{213}}\right)+\gamma_{1}=0,\\
\sqrt{\lambda}\cos\left(\sqrt{\theta_{12}^{2}+\theta_{23}^{2}+2\theta_{12}\theta_{23}\cos\Theta_{123}}\right)+\gamma_{2}=0,\\
\sqrt{\lambda}\cos\left(\sqrt{\theta_{13}^{2}+\theta_{23}^{2}+2\theta_{13}\theta_{23}\cos\Theta_{132}}\right)+\gamma_{3}=0,
\end{cases}
\end{equation}
where $\Theta_{ij\ell}$ and $\lambda$ are defined in~\eqref{eq:def thetas} and \eqref{one1.6} respectively.

Observe that \eqref{three1} gives three equations in the three unknowns $\theta_{12},\theta_{13},\theta_{23}$.  Define $H\colon[0,\pi]^{3}\to\R^{3}$ so that $H(\theta_{12},\theta_{13},\theta_{23})=(H_{1},H_{2},H_{3})$ is the vector corresponding to the entries of the left side of \eqref{three1}.  We now wish to find $(\theta_{12},\theta_{13},\theta_{23})$ such that $H=(0,0,0)$.

We first examine neighborhoods of the two points discussed in Section~\ref{sec:identities}.

\begin{lemma}\label{lemma7}
The global maximum of $F$ does not occur in the $\ell_{2}$ balls of radius $1/100$ around the points $\{\theta_{12},\theta_{13},\theta_{23}\}=\{\theta,\theta,\theta\}\in\R^{3}$ with $\theta=\arccos(-1/3)$ and $\theta=1.53796841207904$.
\end{lemma}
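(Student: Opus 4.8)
The plan is to reduce the lemma to a single elementary inequality for the function $F_0$ of Definition~\ref{F0def} and then to verify that inequality on the two balls via a crude Lipschitz estimate. Since the maximizing partition $P$ satisfies Proposition~\ref{prop1}(c), identity~\eqref{one8} gives $F(P)=F_0(\theta_{12},\theta_{23},\theta_{13})$, where $F_0$ is the explicit function obtained by expressing each $\cos\Theta_{ij\ell}$ through the spherical law of cosines. Consequently it suffices to prove the purely analytic statement that $F_0(\theta_{12},\theta_{23},\theta_{13})<\tfrac{9\pi^2}{4}$ for every $(\theta_{12},\theta_{13},\theta_{23})$ in the closed $\ell_2$ ball $\overline{B}(\mathbf p,\tfrac1{100})$, where $\mathbf p=(\theta,\theta,\theta)$ and $\theta\in\{\arccos(-\tfrac13),\,1.53796841207904\}$. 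Indeed, once this is known, the contrapositive hypothesis $F(P)>\tfrac{9\pi^2}{4}$ forces the edge lengths of $T_4$ — hence, by the symmetry recorded after Definition~\ref{F0def}, of every $T_i$ — to lie outside both balls, which is exactly the assertion of the lemma. (Note that inside either ball every triple is a genuine spherical triangle: all edges lie in $(0,\pi)$, the triangle inequalities hold, and the perimeter is below $2\pi$, so $F_0$ is a smooth function there.)

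The first step is to evaluate $F_0$ at the centers. At $\mathbf p$ the three spherical angles coincide, and the spherical law of cosines gives their common value $\cos\Theta=\cos\theta/(1+\cos\theta)$, whence $F_0(\mathbf p)=9\theta^2+6\theta^2\cos\Theta$. For $\theta=\arccos(-\tfrac13)$ this equals $6\theta^2=6\arccos(-\tfrac13)^2$, and a direct check gives $6\arccos(-\tfrac13)^2<\tfrac{9\pi^2}{4}$ with gap $\delta_1:=\tfrac{9\pi^2}{4}-6\arccos(-\tfrac13)^2>\tfrac{3}{10}$; this is precisely the reason the regular-simplex partition, although a genuine critical configuration, is not optimal. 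For $\theta=1.53796841207904$ the same formula gives $F_0(\mathbf p)<\tfrac{9\pi^2}{4}$ with an even larger gap $\delta_2>\tfrac{2}{5}$.

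The second step is a Lipschitz bound for $F_0$ on each ball. On $\overline{B}(\mathbf p,\tfrac1{100})$ every edge length lies in $[\theta-\tfrac1{100},\theta+\tfrac1{100}]\subset(0,\pi)$, so $\sin\theta_{ij}$ is bounded below by an explicit positive constant and, with the triangle inequalities and the perimeter bound, so is $\sin\Theta_{ij\ell}$; thus all trigonometric and square-root quantities entering $F_0$ and $\nabla F_0$ are controlled. A routine differentiation — the only mildly delicate terms being the derivatives of $\cos\Theta_{ij\ell}=(\cos\theta_{ij}-\cos\theta_{i\ell}\cos\theta_{j\ell})/(\sin\theta_{i\ell}\sin\theta_{j\ell})$, whose $1/(\sin\theta_{i\ell}\sin\theta_{j\ell})$ factors are handled by the lower bounds just mentioned — produces an explicit $L$ with $|F_0(q)-F_0(q')|\le L\vnorm{q-q'}_2$ on the ball, and a short estimate gives $L<30$ in both cases (one finds $\vnorm{\nabla F_0(\mathbf p)}_2$ well under $15$). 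Hence for $q\in\overline{B}(\mathbf p,\tfrac1{100})$ we get $F_0(q)\le F_0(\mathbf p)+\tfrac{L}{100}<F_0(\mathbf p)+\delta_i=\tfrac{9\pi^2}{4}$, since $100\delta_1>30$ and $100\delta_2>40$.

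The main obstacle is the regular-simplex center, where the gap $\delta_1\approx0.30$ is smallest, so the crux is keeping the Lipschitz constant honestly below $100\delta_1\approx30$ over the full $\ell_2$ ball of radius $\tfrac1{100}$ — in particular controlling the $1/(\sin\theta_{i\ell}\sin\theta_{j\ell})$ factors in $\partial\cos\Theta_{ij\ell}$. Should the crude bound come uncomfortably close to the threshold, the refinement is to use that $F_0$ is symmetric under permutations of $(\theta_{12},\theta_{23},\theta_{13})$, so $\nabla F_0(\mathbf p)$ is parallel to $(1,1,1)$: the variation of $F_0$ transverse to $(1,1,1)$ is then second order in the radius, hence negligible over radius $\tfrac1{100}$, while along $(1,1,1)$ one computes that the radial derivative of $F_0$ at $\mathbf p$ is negative (displacing all three edges toward the propeller configuration decreases $F_0$), so only the opposite radial direction contributes to first order, and that contribution is about $23.5/(100\sqrt3)\approx0.14$, comfortably below $\delta_1$.
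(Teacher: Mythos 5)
Your overall strategy — reduce the lemma to showing $F_0<\tfrac{9\pi^2}{4}$ on the two $\ell_2$ balls, evaluate $F_0$ at the centers, and then bound the variation of $F_0$ over each ball by a gradient/Lipschitz estimate — is exactly the paper's. Your center values and the observation that $\nabla F_0(\mathbf p)\parallel(1,1,1)$ with radial derivative $\approx -23.5$ (so $\vnorm{\nabla F_0(\mathbf p)}_2\approx 13.6$) also agree with the paper's explicit computation $\vnorm{\nabla F_0|_{\{\arccos(-1/3)\}}}_2<13.6$.

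The gap is in the Lipschitz bound. You assert $L<30$ as a ``short estimate'' without deriving it, and since $100\delta_1=100\bigl(\tfrac{9\pi^2}{4}-6\arccos(-\tfrac13)^2\bigr)\approx 30.36$, the margin you are relying on is about $0.36$ out of $30$ — essentially no margin at all, so an unverified crude bound cannot carry the argument. The paper closes this by an explicit Hessian computation: it writes out $\tilde F_{xx}$ and $\tilde F_{xy}$, uses $1/|\sin(\cdot)|<1.1$ on the balls to get $|\tilde F_{xx}|<228$, $|\tilde F_{xy}|<117$, hence $\vnormf{\mathrm{Hess}(F_0)}_2<488$, and then concludes $\vnorm{\nabla F_0}_2\le 13.6+\tfrac{488}{100}<20$ (resp.\ $9.7+4.88<15$) on the balls, giving a comfortable $20/100=0.20<0.303$. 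Your ``refinement'' paragraph — first-order variation only along $-(1,1,1)$, of size $\approx 0.14$, transverse variation second order — is the right idea and matches the paper's numbers, but to make ``second order, hence negligible'' rigorous you still need a quantitative Hessian bound on the ball; the refinement does not dispense with the computation the crude route was trying to avoid. In short, the approach is correct but the one inequality ($L<30$, or equivalently the second-order control) that actually does the work is left unproved, and at that tightness it cannot be waved through.
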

\begin{proof}
Let $\delta=1/100$.  We claim that $\abs{\langle\nabla_{\R^{3}}F_{0}(\theta_{12},\theta_{13},\theta_{23}),v\rangle}\leq20$, for $(\theta_{12},\theta_{13},\theta_{23})$ in an $\ell_{2}$ ball in $\R^{3}$ of radius $\delta$ ball around the point $(\arccos(-1/3),\arccos(-1/3),\arccos(-1/3))\in\R^{3}$, where $v\in S^{2}$. Now, $F_{0}(\arccos(-1/3),\arccos(-1/3),\arccos(-1/3))=6(\arccos(-1/3))^{2}\approx21.9031$ while $F$ at the spherical propeller partition  evaluates to $(9/4)\pi^{2}\approx 22.206$.  So, given this claim, since $\delta\abs{\langle\nabla_{\R^{3}}F_{0}\rangle}<22.206-21.903$ in $B_{2}^{3}(\{\theta\}_{i=1}^{3},\delta)$, no global maximum of $F$ may occur in $B_{2}^{3}(\{\theta\}_{i=1}^{3},\delta)$.  Similarly, we claim that $\abs{\langle\nabla_{\R^{3}}F_{0}(\theta_{12},\theta_{13},\theta_{23}),v\rangle}\leq15$ for $(\theta_{12},\theta_{13},\theta_{23})$ in an $\ell_{2}$ ball of radius $\delta$ around $\{\theta\}_{i=1}^{3}\in\R^{3}$ with $\theta\approx1.53796841207904$.  Since $F_{0}(\{\theta\}_{i=1}^{3})\approx21.7391$, once again, no global maximum of $F$ occurs in $B_{2}^{3}(\{\theta\}_{i=1}^{3},\delta)$.

Proof of claim: for $\theta=\arccos(-1/3)$ or $\theta=1.537968$, we first show that
$\vnorm{\mbox{Hessian}(F_{0})}_{2}<488$ for all points in $B_{2}^{3}(\{\theta\}_{i=1}^{3},\delta)\subset\R^{3}$.  Let $\tilde{F}(x,y,z)\colonequals F_{0}(x,y,z)$.  Then
\begin{flalign*}
\tilde{F}_{xx}
&=
6
+4\,{\frac {\cos \left( y \right) y}{\sin \left( y \right) }}
-2\,{\frac {\cos \left( y \right) xy\cos \left( x \right) }{\sin \left( x \right) \sin \left( y \right) }}
+4\,{\frac { \left( \cos \left( z \right) -\cos \left( x \right) \cos \left( y \right)  \right) xy \left( \cos \left( x \right)  \right) ^{2}}{ \left( \sin \left( x \right)  \right) ^{3}\sin \left( y \right) }}\\
&-4\,{\frac { \left( \cos \left( z \right) -\cos \left( x \right) \cos \left( y \right)  \right) y\cos \left( x \right) }{ \left( \sin \left( x \right)  \right) ^{2}\sin \left( y \right) }}
+2\,{\frac { \left( \cos \left( z \right) -\cos \left( x \right) \cos \left( y \right)  \right) xy}{\sin \left( x \right) \sin \left( y \right) }}-2\,{\frac {\cos \left( x \right) zy}{\sin \left( z \right) \sin \left( y \right) }}\\
&+4\,{\frac {\cos \left( z \right) z}{\sin \left( z \right) }}
-2\,{\frac {\cos \left( z \right) xz\cos \left( x \right) }{\sin \left( x \right) \sin \left( z \right) }}
+4\,{\frac { \left( \cos \left( y \right) -\cos \left( x \right) \cos \left( z \right)  \right) xz \left( \cos \left( x \right)  \right) ^{2}}{ \left( \sin \left( x \right)  \right) ^{3}\sin \left( z \right) }}\\
&-4\,{\frac { \left( \cos \left( y \right) -\cos \left( x \right) \cos \left( z \right)  \right) z\cos \left( x \right) }{ \left( \sin \left( x \right)  \right) ^{2}\sin \left( z \right) }}
+2\,{\frac { \left( \cos \left( y \right) -\cos \left( x \right) \cos \left( z \right)  \right) xz}{\sin \left( x \right) \sin \left( z \right) }}.
\end{flalign*}
\begin{flalign*}
\tilde{F}_{xy}
&=
-2\,xy
-2\,{\frac { \left( \cos \left( y \right)  \right) ^{2}xy}{ \left( \sin \left( y \right)  \right) ^{2}}}
+2\,{\frac {\cos \left( y \right) x}{\sin \left( y \right) }}
-2\,{\frac { \left( \cos \left( x \right)  \right) ^{2}xy}{ \left( \sin \left( x \right)  \right) ^{2}}}\\
&+2\,{\frac { \left( \cos \left( z \right) -\cos \left( x \right) \cos \left( y \right)  \right) xy\cos \left( x \right) \cos \left( y \right) }{ \left( \sin \left( x \right)  \right) ^{2} \left( \sin \left( y \right)  \right) ^{2}}}
-2\,{\frac { \left( \cos \left( z \right) -\cos \left( x \right) \cos \left( y \right)  \right) x\cos \left( x \right) }{ \left( \sin \left( x \right)  \right) ^{2}\sin \left( y \right) }}\\
&+2\,{\frac {\cos \left( x \right) y}{\sin \left( x \right) }}
-2\,{\frac { \left( \cos \left( z \right) -\cos \left( x \right) \cos \left( y \right)  \right) y\cos \left( y \right) }{\sin \left( x \right)  \left( \sin \left( y \right)  \right) ^{2}}}
+2\,{\frac {\cos \left( z \right) -\cos \left( x \right) \cos \left( y \right) }{\sin \left( x \right) \sin \left( y \right) }}\\
&+2\,{\frac {\sin \left( x \right) zy\cos \left( y \right) }{\sin \left( z \right)  \left( \sin \left( y \right)  \right) ^{2}}}
-2\,{\frac {\sin \left( x \right) z}{\sin \left( z \right) \sin \left( y \right) }}
+2\,{\frac {\sin \left( y \right) xz\cos \left( x \right) }{ \left( \sin \left( x \right)  \right) ^{2}\sin \left( z \right) }}
-2\,{\frac {\sin \left( y \right) z}{\sin \left( x \right) \sin \left( z \right) }}.
\end{flalign*}
With $\delta<1/100$, $\theta=\arccos(-1/3)$, and $(x,y,z)\in B_{2}^{3}(\{\theta\}_{i=1}^{3},\delta)$, we have $1/\abs{\sin(x)}<1.1$, and similarly for $y,z$.  Therefore, in this $\delta$-ball we have $\absf{\tilde{F}_{xx}}<228$.  Here we have bounded the cosine terms by $1$, the $x,y,z$ terms by $2$ (since $\arccos(1/3)+\delta<2$), and the inverted sine terms by $1.1$.  Using these same bounds, we find that $\absf{\tilde{F}_{xy}}<117$ in this same $\delta$-ball.  For a $3\times 3$ matrix $A=a_{ij}$, let $\vnorm{A}_{2}\colonequals(\sum_{i,j=1}^{3}a_{ij}^{2})^{1/2}$.  Since $\tilde{F}$ is symmetric in its arguments, we have the following bound in $B_{2}^{3}(\{\theta\}_{i=1}^{3},\delta)$
$$\vnormf{\mbox{Hessian}(\tilde{F})}_{2}
\leq\vnorm{\begin{pmatrix} 228 & 117 & 117\\ 117 & 228 & 117\\ 117 & 117 & 228\end{pmatrix}}_{2}
<488,$$
as desired.  And the same estimate of the Hessian applies for $\theta\approx 1.5379684120790425$.

Apply now the Hessian estimate to $\vnorm{\nabla_{\R^{3}} f(b)}_{2}\leq\vnorm{\nabla_{\R^{3}} f(a)}_{2}+\delta\sup_{\vec{x}\in B_{2}^{3}(a,\delta)}\vnormf{\mbox{Hessian}(f)(\vec{x})}_{2}$, which holds for every sufficiently smooth $f\colon\R^{3}\to\R$ if $a,b\in \R^3$ satisfy $\vnorm{b-a}_{2}\leq\delta$.  We then explicitly calculate $\vnormf{\nabla_{\R^{3}}F_{0}|_{\{\arccos(-1/3)\}_{i=1}^{3}}}_{2}<13.6$ and $\vnormf{\nabla_{\R^{3}}F_{0}|_{\{1.5379684\}_{i=1}^{3}}}_{2}<9.7$.  So, in $B_{2}^{3}(\{\arccos(-1/3)\}_{i=1}^{3},\delta)$ we have $\vnorm{\nabla_{\R^{3}}F_{0}}_{2}<13.6+4.88$ and in $B_{2}^{3}(\{1.5379684\}_{i=1}^{3},\delta)$ we have $\vnorm{\nabla_{\R^{3}}F_{0}}_{2}<9.7+4.88$, proving the claim.
\end{proof}

\begin{lemma}\label{rk5}
We may assume that $\sqrt{\lambda}>.18$.
\end{lemma}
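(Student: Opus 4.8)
The plan is to exploit that~\eqref{one1.5}, which expresses $v_{4}$ through $\{v_{1},v_{2},v_{3}\}$, has an exact analogue for every spherical triangle of $P$, and to package the four resulting linear relations among $v_{1},v_{2},v_{3},v_{4}$ into a single symmetric $4\times4$ matrix of rank at most one. For $\ell\in\{1,2,3,4\}$ let $\lambda^{(\ell)}$ be the quantity playing the role of $\lambda$ for the triangle $T_{\ell}$, so $\lambda=\lambda^{(4)}$; by Proposition~\ref{prop1}(e) each $T_{\ell}$ is nondegenerate, hence $\lambda^{(\ell)}>0$. Orienting the vertices $\{v_{i}:i\neq\ell\}$ so that their determinant is positive, the analogue of~\eqref{one1.5} --- whose sign was pinned down in the paragraph after~\eqref{one1.4} by an argument symmetric under relabelling --- reads
\[
\sqrt{\lambda^{(\ell)}}\,v_{\ell}+\sum_{i\neq\ell}\frac{\sin\theta_{mn}}{\theta_{mn}}\,v_{i}=0,\qquad\{m,n\}=\{1,2,3,4\}\setminus\{\ell,i\},
\]
because the coefficient of $v_{i}$ is $\sin\theta_{mn}/\theta_{mn}$ with $\theta_{mn}$ the length of the edge of $T_{\ell}$ opposite to $v_{i}$, which is independent of the chosen ordering and symmetric in the pair $\{\ell,i\}$. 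Consequently the matrix $N$ defined by $N_{\ell\ell}=\sqrt{\lambda^{(\ell)}}$ and $N_{\ell i}=\sin\theta_{mn}/\theta_{mn}$ for $i\neq\ell$ (with $\{m,n\}=\{1,2,3,4\}\setminus\{\ell,i\}$) is symmetric.

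Next I would observe that these four relations say precisely $NM=0$, where $M$ is the $4\times3$ matrix with rows $v_{1}^{\top},\dots,v_{4}^{\top}$. Since $v_{1},v_{2},v_{3}$ are linearly independent (Proposition~\ref{prop1}(e)), $M$ has rank $3$, so $\ker N$ contains a $3$-dimensional subspace and therefore $\mathrm{rank}\,N\le1$. Hence every $2\times2$ minor of $N$ vanishes, and the minor supported on rows and columns $\{\ell,j\}$ yields
\[
\left(\frac{\sin\theta_{mn}}{\theta_{mn}}\right)^{2}=\sqrt{\lambda^{(\ell)}}\,\sqrt{\lambda^{(j)}},\qquad\{m,n\}=\{1,2,3,4\}\setminus\{\ell,j\}.
\]
(In passing, this recovers Proposition~\ref{prop6}, as rank one also forces $N_{12}N_{34}=N_{13}N_{24}=N_{14}N_{23}$.)

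To finish, suppose for contradiction that $\sqrt{\lambda^{(\ell)}}\le\tfrac{9}{50}$ for every $\ell$. Then the identity above forces $\sin\theta_{mn}/\theta_{mn}\le\tfrac{9}{50}$ for every edge $\{m,n\}$. But $\theta_{mn}<\pi-\tfrac12<\pi$ by Lemma~\ref{lemma4}, and $x\mapsto(\sin x)/x$ is strictly decreasing on $(0,\pi)$ (since $\tfrac{d}{dx}\tfrac{\sin x}{x}=\tfrac{x\cos x-\sin x}{x^{2}}<0$ on $(0,\pi)$), so $\sin\theta_{mn}/\theta_{mn}>\tfrac{\sin(1/2)}{\pi-1/2}$, and an elementary estimate gives $\tfrac{\sin(1/2)}{\pi-1/2}>\tfrac{9}{50}$ --- a contradiction. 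Hence $\sqrt{\lambda^{(\ell)}}>\tfrac{9}{50}$ for some $\ell$; relabelling $T_{1},T_{2},T_{3},T_{4}$ so that this $\ell$ becomes $4$ (all preceding results being invariant under relabelling) gives $\sqrt{\lambda}=\sqrt{\lambda^{(4)}}>\tfrac{9}{50}$.

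The one step that needs care is the first: one must check that~\eqref{one1.5}, with its sign, applies verbatim to each $T_{\ell}$, so that the four relations share a consistent sign convention and $N$ is genuinely symmetric. This is bookkeeping with orientations, using that~\eqref{oneb}, Proposition~\ref{prop1}(c) and Corollary~\ref{cor0.6} are symmetric in the labels; the remainder is elementary linear algebra together with the monotonicity estimate.
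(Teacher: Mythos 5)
Your proof is correct, and it packages the argument in a genuinely different (and structurally illuminating) way. The paper's proof works with a single pair of adjacent triangles $T_{3},T_{4}$: it substitutes the analogue of~\eqref{one1.5} into the determinant $\det(v_{1},v_{2},v_{4})$, applies the edge bound~\eqref{three2} and the monotonicity of $x\mapsto(\sin x)/x$ to get~\eqref{three1.01}, applies the same inequality with the roles of $T_{3},T_{4}$ swapped, and derives the contradiction $|\det(v_{1},v_{2},v_{3})|>|\det(v_{1},v_{2},v_{4})|>|\det(v_{1},v_{2},v_{3})|$ if both $\sqrt{\lambda^{(3)}},\sqrt{\lambda^{(4)}}$ were below $\sin(1/2)/(\pi-1/2)$. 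Iterating~\eqref{three1.01} twice is algebraically equivalent to the exact identity $(\sin\theta_{12}/\theta_{12})^{2}=\sqrt{\lambda^{(3)}}\,\sqrt{\lambda^{(4)}}$ that you extract as a $2\times2$ minor of your rank-one matrix $N$; from that point both arguments invoke the same facts (Lemma~\ref{lemma4} and monotonicity of $(\sin x)/x$ on $(0,\pi)$, together with $\sin(1/2)/(\pi-1/2)\approx0.1815>9/50$). What your formulation buys is a single symmetric, rank-$\le1$ object that encodes all four linear relations at once: the diagonal minors give the $\lambda$-identities for every pair of triangles, and the off-diagonal minors recover Proposition~\ref{prop6} for free, so the two structural facts used later in the paper appear as two families of $2\times2$ minors of one matrix. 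The one point you rightly flag as needing care --- that the sign in~\eqref{one1.5} is the same for every $\ell$ --- does hold, since the sign was pinned down in the paper via Proposition~\ref{prop1}(c) and Corollary~\ref{cor0.6}, both of which are symmetric under relabelling of $\{1,2,3,4\}$, and the resulting relation is symmetric in the three vertices of each triangle regardless of the orientation chosen to apply~\eqref{oneb}.
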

\begin{proof}
Substituting in \eqref{one1.5}, $\det(v_{2},v_{1},v_{4})=\det(v_{1},v_{2},v_{3})\frac{\sin\theta_{12}}{\theta_{12}}\frac{1}{\sqrt{\lambda}}$.  Using \eqref{three2}, \begin{equation}\label{three1.01}
\abs{\det(v_{2},v_{1},v_{4})}>\abs{\det(v_{1},v_{2},v_{3})}\frac{\sin(1/2)}{\sqrt{\lambda}\,(\pi-1/2)}.
\end{equation}
Now, suppose $\sqrt{\lambda}<\sin(1/2)/(\pi-1/2)$, and $\sqrt{\lambda'}<\sin(1/2)/(\pi-1/2)$, with $\lambda'$ defined by
$$\lambda'
\colonequals\vnorm{\frac{\sin\theta_{12}}{\theta_{12}}v_{4}+\frac{\sin\theta_{24}}{\theta_{24}}v_{1}+\frac{\sin\theta_{14}}{\theta_{14}}v_{2}}_{2}^{2}
$$
Applying \eqref{three1.01} twice (with indices changed appropriately) shows $\abs{\det(v_{2},v_{1},v_{4})}>\abs{\det(v_{1},v_{2},v_{3})}$ and $\abs{\det(v_{2},v_{1},v_{4})}<\abs{\det(v_{1},v_{2},v_{3})}$, a contradiction.  We may therefore assume that one triangle of $P$ (which, up to relabeling, is $T_{4}$) satisfies $\sqrt{\lambda}>(\sin(1/2)/(\pi-1/2))>.18$.
\end{proof}

We now derive some modulus of continuity and derivative estimates for our system \eqref{three1}.  Recall that $P=\{T_{i}\}_{i=1}^{4}$ satisfies Proposition \ref{prop1}(e).  Since \eqref{three1} must hold for the data of each individual triangle $T_{i}$, we select in \eqref{three1} the data from the triangle $T_{4}$.  Then $\theta_{12}+\theta_{23}+\theta_{13}\leq 2\pi$, since the perimeter of $T_{4}$ is bounded by $2\pi$.  By relabeling the edges of $T_{4}$ and using the pigeonhole principle, we may assume $\theta_{12}\leq2\pi/3$. The following summary of~\eqref{three1} holds for any labeling of the form $\{i,j,\ell\}=\{1,2,3\}$.
\begin{equation}\label{three1.1}
\begin{cases}
\theta_{ij}+\theta_{j\ell}>\theta_{i\ell} &\mbox{, triangle inequality on the sphere}\\
\theta_{12}+\theta_{23}+\theta_{13}\leq 2\pi\\
\theta_{12}\leq2\pi/3\\
\theta_{13},\theta_{23}\leq\pi-1/2 & \mbox{, \eqref{three2}}\\
\max_{i,j\in\{1,2,3\},i\neq j}\{\theta_{ij}\}\geq\pi\frac{3}{2\sqrt{14}} & \mbox{, Corollary \ref{cor2.5}}\\
\theta_{ij}\geq1/10 & \mbox{, Lemma \ref{thm1}}\\
\sin\Theta_{ij\ell}\geq1/10 & \mbox{, Lemma \ref{thm1}}\\
\sqrt{\lambda}>.18 & \mbox{, Lemma \ref{rk5}}
\end{cases}
\end{equation}

For a map $f\colon (X,\vnorm{\cdot}_{X})\to (Y,\vnorm{\cdot}_{Y})$ between two normed linear spaces, a modulus of continuity for $f$, denoted by $\omega_{f}$, satisfies $\vnorm{f(x_{1})-f({x_{2})}}_{Y}\leq\omega_{f}(\vnorm{x_{1}-x_{2}}_{X},x_{1},x_{2})$.  We allow $\omega_{f}$ to depend on $x_{1}$, with $x_{2}$ in a $\delta$-ball around some given point $x_{1}=(\theta_{12},\theta_{13},\theta_{23})$.

\begin{lemma}\label{lemma6}
Suppose $\delta<1/100$, and consider $H_{1}\colon([0,\pi]^{3},\vnorm{\cdot}_{\infty})\to\R$, $H_{1}=H_{1}(\theta_{12},\theta_{13},\theta_{23})$.  Suppose also that $\lambda=\lambda(\theta_{12},\theta_{13},\theta_{23})>\eta$.  Then, in an $\ell_{\infty}$ ball of radius $\delta$ centered at $(\theta_{12},\theta_{13},\theta_{23})$, the following holds
\begin{equation}\label{three3}
\begin{aligned}
\omega_{H_{1}}(\delta)
&\leq G(\delta,\vec{\theta},\eta)\colonequals\delta\left[\frac{7}{2}+\frac{15}{2\sqrt{\eta-15\delta}}\right.
+3\bigg(2(\theta_{12}+\delta)+ 2(\theta_{13}+\delta)\\
+&\left.(\theta_{12}+\delta)(\theta_{13}+\delta)\left(\frac{2}{\sin(\theta_{12}+\delta)}+\frac{2}{\sin(\theta_{13}+\delta)}
+\frac{1}{\sin(\theta_{12}+\delta)\sin(\theta_{13}+\delta)}\right)
\right)\bigg].
\end{aligned}
\end{equation}

Or, applying Lemma \ref{rk5}, for $\delta$ sufficiently small,

\begin{equation}\label{three5}
\begin{aligned}
\omega_{H_{1}}(\delta)
&\leq\delta\left[\frac{7}{2}+\frac{15}{2\sqrt{.0329-15\delta}}\right.
+3\bigg(2(\theta_{12}+\delta)+ 2(\theta_{13}+\delta)\\
+&\left.(\theta_{12}+\delta)(\theta_{13}+\delta)\left(\frac{2}{\sin(\theta_{12}+\delta)}+\frac{2}{\sin(\theta_{13}+\delta)}
+\frac{1}{\sin(\theta_{12}+\delta)\sin(\theta_{13}+\delta)}\right)
\right)\bigg].
\end{aligned}
\end{equation}
\end{lemma}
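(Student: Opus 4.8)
The plan is to write $H_1=\sqrt{\lambda}\,\cos(\sqrt{A})+\gamma_1$, bound the oscillation of the three pieces $\gamma_1$, $\sqrt{\lambda}$ and $\cos\sqrt{A}$ separately over the $\ell_\infty$-ball $B$ of radius $\delta$ about $x_0=(\theta_{12},\theta_{13},\theta_{23})$, and then recombine. By~\eqref{three1} and~\eqref{three0},
$$
A:=\theta_{12}^2+\theta_{13}^2+2\theta_{12}\theta_{13}\cos\Theta_{213},\qquad\cos\Theta_{213}=\frac{\cos\theta_{23}-\cos\theta_{12}\cos\theta_{13}}{\sin\theta_{12}\sin\theta_{13}}
$$
by the spherical law of cosines, and $\lambda,\gamma_1$ are the trigonometric polynomials~\eqref{one1.7} and~\eqref{three0} in the quantities $g(\theta_{ij}):=\sin\theta_{ij}/\theta_{ij}$ and $\cos\theta_{ij}$. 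For $x\in B$ the triangle inequality gives
$$
|H_1(x_0)-H_1(x)|\le\big|\sqrt{\lambda(x_0)}-\sqrt{\lambda(x)}\big|+\sqrt{\lambda(x)}\,\big|\cos\sqrt{A(x_0)}-\cos\sqrt{A(x)}\big|+\big|\gamma_1(x_0)-\gamma_1(x)\big|,
$$
using $|\cos\sqrt{A(x_0)}|\le1$, so it suffices to bound the three summands by $\tfrac{15\delta}{2\sqrt{\eta-15\delta}}$, by $3\delta\,[\cdots]$, and by $\tfrac72\delta$, where $[\cdots]$ denotes the factor multiplied by $3$ inside the bracket in~\eqref{three3}.

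I would first record three elementary Lipschitz facts: on $(0,\pi]$ one has $|g|\le1$ and $|g'|\le\tfrac12$ (indeed $|g'(t)|=|t\cos t-\sin t|/t^2\le\tfrac12$); the map $t\mapsto\cos t$ is $1$-Lipschitz; and $u\mapsto\cos\sqrt{u}$ is $\tfrac12$-Lipschitz on $[0,\infty)$, since its derivative $-\sin(\sqrt u)/(2\sqrt u)$ has modulus at most $\tfrac12$. Differentiating $\gamma_1=g(\theta_{13})\cos\theta_{12}+g(\theta_{23})+g(\theta_{12})\cos\theta_{13}$ by the product rule and bounding each factor by $1$, each $g'$ by $\tfrac12$ and each $\sin$ by $1$, gives $\|\nabla\gamma_1\|_1\le\tfrac32+\tfrac32+\tfrac12=\tfrac72$, hence $|\gamma_1(x_0)-\gamma_1(x)|\le\|x_0-x\|_\infty\|\nabla\gamma_1\|_1\le\tfrac72\delta$. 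The same bookkeeping for $\lambda=g(\theta_{12})^2+g(\theta_{13})^2+g(\theta_{23})^2+2g(\theta_{23})g(\theta_{13})\cos\theta_{12}+2g(\theta_{23})g(\theta_{12})\cos\theta_{13}+2g(\theta_{12})g(\theta_{13})\cos\theta_{23}$ gives $|\partial_{\theta_{ij}}\lambda|\le5$ for each pair $ij$, so $\|\nabla\lambda\|_1\le15$ and $|\lambda(x_0)-\lambda(x)|\le15\delta$; in particular $\lambda>\eta-15\delta$ throughout $B$, so
$$
\big|\sqrt{\lambda(x_0)}-\sqrt{\lambda(x)}\big|=\frac{|\lambda(x_0)-\lambda(x)|}{\sqrt{\lambda(x_0)}+\sqrt{\lambda(x)}}\le\frac{15\delta}{2\sqrt{\eta-15\delta}},
$$
and since $g\le1$ one also has $\lambda\le9$, i.e.\ $\sqrt{\lambda(x)}\le3$ on $B$.

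It remains to estimate the modulus of continuity of $A$, which is the computational heart of the lemma. One has $\partial_{\theta_{23}}A=-2\theta_{12}\theta_{13}\sin\theta_{23}/(\sin\theta_{12}\sin\theta_{13})$, while the quotient rule applied to $\cos\Theta_{213}$ gives $\partial_{\theta_{12}}\cos\Theta_{213}=\cos\theta_{13}/\sin\theta_{13}-\cos\Theta_{213}\cos\theta_{12}/\sin\theta_{12}$, and symmetrically with $\theta_{12}\leftrightarrow\theta_{13}$; bounding $|\cos\Theta_{213}|\le1$ and $|\cos|\le1$ yields
$$
|\partial_{\theta_{12}}A|,\ |\partial_{\theta_{13}}A|\le2\theta_{12}+2\theta_{13}+\frac{2\theta_{12}\theta_{13}}{\sin\theta_{12}}+\frac{2\theta_{12}\theta_{13}}{\sin\theta_{13}},\qquad|\partial_{\theta_{23}}A|\le\frac{2\theta_{12}\theta_{13}}{\sin\theta_{12}\sin\theta_{13}},
$$
hence $\|\nabla A\|_1\le 2\big[\,2\theta_{12}+2\theta_{13}+\theta_{12}\theta_{13}\big(\tfrac2{\sin\theta_{12}}+\tfrac2{\sin\theta_{13}}+\tfrac1{\sin\theta_{12}\sin\theta_{13}}\big)\big]$. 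Replacing each $\theta_{ij}$ and each $1/\sin\theta_{ij}$ by its maximum over $B$ (this is where the arguments $\theta_{ij}+\delta$ and $\sin(\theta_{ij}+\delta)$ appear) gives $|A(x_0)-A(x)|\le\delta\sup_{x'\in B}\|\nabla A(x')\|_1\le2\delta\,[\cdots]$; therefore $\big|\cos\sqrt{A(x_0)}-\cos\sqrt{A(x)}\big|\le\tfrac12\cdot2\delta\,[\cdots]=\delta\,[\cdots]$ and the middle summand is at most $3\delta\,[\cdots]$. Summing the three bounds gives $\omega_{H_1}(\delta)\le\tfrac72\delta+\tfrac{15\delta}{2\sqrt{\eta-15\delta}}+3\delta\,[\cdots]=G(\delta,\vec{\theta},\eta)$, which is~\eqref{three3}; and~\eqref{three5} follows by inserting $\eta=0.0329$, which is legitimate because the proof of Lemma~\ref{rk5} actually yields $\lambda>(\sin(\tfrac12)/(\pi-\tfrac12))^2>0.0329$.

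The step I expect to be the main obstacle is the uniformity in the estimate for $A$: the derivative bounds must hold throughout $B$, not merely at $x_0$, which is exactly what forces the shifted quantities $\theta_{ij}+\delta$, $\sin(\theta_{ij}+\delta)$ into $[\cdots]$, and one must know the $1/\sin\theta_{ij}$ factors do not blow up there — this is where Lemma~\ref{thm1} ($\theta_{ij}>\tfrac1{10}$) is used. There is also an ordering point worth stressing: one cannot speak of the Lipschitz constant of $\sqrt{\lambda}$ on $B$ until one knows $\lambda$ stays positive there, so the estimate $\|\nabla\lambda\|_1\le15$ (equivalently $\omega_\lambda(\delta)\le15\delta$) must be in hand first, and the hypothesis $\lambda>\eta$ — ultimately provided by Lemma~\ref{rk5} — is what makes $\eta-15\delta>0$ for $\delta$ small, so that the square root in $G$ is real.
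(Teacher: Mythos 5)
Your proof takes essentially the same route as the paper's: decompose $H_1=\sqrt{\lambda}\,\cos\sqrt{A}+\gamma_1$, bound each factor's oscillation separately via the gradient estimates $\|\nabla\gamma_1\|_1\le\tfrac72$, $\|\nabla\lambda\|_1\le 15$, $\|\sqrt\lambda\|_\infty\le 3$, the $\tfrac12$-Lipschitz constants of $t\mapsto\sin t/t$ and $u\mapsto\cos\sqrt u$, and the derivative bound on $A$, then sum. The one cosmetic variation is that you estimate $|\sqrt{\lambda(x_0)}-\sqrt{\lambda(x)}|$ via the algebraic identity $|\sqrt a-\sqrt b|=|a-b|/(\sqrt a+\sqrt b)$ rather than by the paper's chain-rule bound on $\nabla\sqrt\lambda$; both yield the same quantity $\tfrac{15\delta}{2\sqrt{\eta-15\delta}}$, and your version arguably makes the logical ordering (first establish $\lambda>\eta-15\delta$ on the ball, then take the square root) slightly cleaner.

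One imprecision worth flagging: to turn the pointwise bound on $\|\nabla A\|_1$ into a uniform bound over the $\ell_\infty$-ball, you say you replace each $\theta_{ij}$ and each $1/\sin\theta_{ij}$ by its maximum over the ball. But $1/\sin\theta_{ij}$ is \emph{not} maximized at $\theta_{ij}+\delta$ when $\theta_{ij}+\delta\le\pi/2$ (there the maximum is at $\theta_{ij}-\delta$), so that reasoning, taken literally, would not produce the $\sin(\theta_{ij}+\delta)$ denominators appearing in~\eqref{three3}. The correct justification — which the paper states explicitly — is the monotonicity of $t\mapsto t/\sin t$ on $(0,\pi)$: grouping the ratios as $\tfrac{\theta_{12}}{\sin\theta_{12}}\cdot\theta_{13}$ and $\tfrac{\theta_{12}}{\sin\theta_{12}}\cdot\tfrac{\theta_{13}}{\sin\theta_{13}}$ and using that each $t/\sin t$ factor is largest at $t=\theta_{ij}+\delta$ gives exactly the bounds $\tfrac{(\theta_{12}+\delta)(\theta_{13}+\delta)}{\sin(\theta_{12}+\delta)}$, etc. With that correction your argument matches the paper's proof.
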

\begin{proof}
We want to compute the modulus for the following function from $(\R^{3},\vnorm{\cdot}_{\infty})$ to $\R$
$$
H_{1}(\theta_{12},\theta_{13},\theta_{23})
=\sqrt{\lambda}\cos\left(\sqrt{\theta_{12}^{2}+\theta_{13}^{2}+2\theta_{12}\theta_{13}\cos\Theta_{213}}\right)+\gamma_{1}
$$
where $\lambda$ and $\gamma_{1}$ are defined in \eqref{one1.7} and \eqref{three0} respectively.  In what follows, we use the product rule, chain rule, and sum rule for the modulus $\omega_{(\cdot)}$.
\begin{flalign*}
\omega_{H_{1}}(\delta)
& \leq\omega_{\sqrt{\lambda}}(\delta)
+\vnormf{\sqrt{\lambda}}_{\infty}\cdot\omega_{\cos\left(\sqrt{\theta_{12}^{2}+\theta_{13}^{2}+2\theta_{12}\theta_{13}\cos\Theta_{213}}\right)}(\delta)
+\omega_{\gamma_{1}}(\delta)\\
&\leq\omega_{\sqrt{\lambda}}(\delta)+3\cdot\frac{1}{2}\cdot\omega_{(\theta_{12}^{2}+\theta_{13}^{2}+2\theta_{12}\theta_{13}\cos\Theta_{213})}(\delta)
+\frac{7}{2}\delta.
\end{flalign*}
Here we used $\lambda\leq9$, $\sqrt{\lambda}\leq3$, $\abs{\frac{d}{dt}(\cos(\sqrt{t}))}\leq1/2$, $\abs{(\sin t)/t}\leq1$, and $\absf{\frac{d}{dt}((\sin t)/t)\leq1/2}$.  Also, the term $(7/2)\delta$ appears since $\delta^{-1}\omega_{\gamma_{1}}(\delta)\leq\vnorm{\nabla_{\R^{3}}\gamma_{1}(\theta_{12},\theta_{13},\theta_{23})}_{1}\leq 7/2$.  Now, observe that
\begin{flalign*}
\frac{\partial \lambda}{\partial\theta_{12}}
&=2\frac{\sin\theta_{12}}{\theta_{12}}\left(\left.\frac{d}{dx}\frac{\sin x}{x}\right|_{x=\theta_{12}}\right)
+2\frac{\sin\theta_{13}\sin\theta_{23}}{\theta_{13}\theta_{23}}(-\sin\theta_{12})\\
&+2\left(\left.\frac{d}{dx}\frac{\sin x}{x}\right|_{x=\theta_{12}}\right)\frac{\sin\theta_{13}}{\theta_{13}}\cos\theta_{23}
+2\left(\left.\frac{d}{dx}\frac{\sin x}{x}\right|_{x=\theta_{12}}\right)\frac{\sin\theta_{23}}{\theta_{23}}\cos\theta_{13}.
\end{flalign*}
So, taking absolute values, and using that $\abs{\sin(t)/t}\leq1$ and $\abs{\frac{d}{dt}(\sin(t)/t)}\leq1/2$, we see that $\abs{\partial \lambda/\partial\theta_{ij}}\leq5$.  So, if $\lambda(\vec{\theta})=\lambda(\theta_{12},\theta_{13},\theta_{23})>\eta>0$ for some $\eta$, and if $\vec{x}=(x,y,z)$ satisfies $\abs{x-\theta_{12}}<\delta,\abs{y-\theta_{13}}<\delta,\abs{z-\theta_{23}}<\delta$, then at $\vec{x}$,
$$\abs{\frac{\partial(\sqrt{\lambda})}{\partial\theta_{12}}}
\leq\max_{\left\{\vec{\theta}\in\R^{3}\colon\vnormf{\vec{\theta}-\vec{x}}_{\infty}<\delta\right\}}
\frac{1}{2}\frac{1}{\sqrt{\lambda(\vec{\theta})}}\cdot\frac{\partial\lambda(\vec{\theta})}{\partial\theta_{12}}
\leq\frac{1}{2}\frac{1}{\sqrt{\eta-\vnorm{\nabla_{\R^{3}}\lambda}_{1}\delta}}\cdot5
\leq\frac{1}{2}\frac{1}{\sqrt{\eta-15\delta}}\cdot5.$$
Therefore, for such $\vec{x}$,
$$\vnormf{\nabla_{\R^{3}}(\sqrt{\lambda})}_{1}\leq\frac{15}{2\sqrt{\eta-15\delta}}.$$

Combining this estimate with our estimate above, we get the following modulus estimate, that applies to $\vec{x},\vec{\theta}$ as just described
\begin{flalign*}
\omega_{H_{1}}(\delta)
&\leq\delta\left[\frac{7}{2}+\frac{15}{2\sqrt{\eta-15\delta}}\right.
+3\bigg(2(\theta_{12}+\delta)+ 2(\theta_{13}+\delta)\\
+&\left.(\theta_{12}+\delta)(\theta_{13}+\delta)\left(\frac{2}{\sin(\theta_{12}+\delta)}+\frac{2}{\sin(\theta_{13}+\delta)}
+\frac{1}{\sin(\theta_{12}+\delta)\sin(\theta_{13}+\delta)}\right)
\right)\bigg].
\end{flalign*}

We used here the monotonicity of the function $x\mapsto x/\sin x$ and the following identities for $\psi(\theta_{12},\theta_{13},\theta_{23})\colonequals (\theta_{12}^{2}+\theta_{13}^{2}+2\frac{\theta_{12}\theta_{13}}{\sin\theta_{12}\sin\theta_{13}}(\cos\theta_{23}-\cos\theta_{12}\cos\theta_{13}))$,
\begin{equation*}
\begin{aligned}
\frac{1}{2}\frac{\partial \psi}{\partial \theta_{12}}
&=\theta_{12}+\frac{\theta_{12}\theta_{13}}{\sin\theta_{12}\sin\theta_{13}}(\sin\theta_{12}\cos\theta_{13})\\
&\qquad+(\cos\theta_{23}-\cos\theta_{12}\cos\theta_{13})(\theta_{12}/\sin\theta_{12})'\cdot(\theta_{13}/\sin\theta_{13})\\
&=\theta_{12}+\frac{\theta_{12}\theta_{13}}{\sin\theta_{13}}\cos\theta_{13}
+\cos\Theta_{213}\left(1-\frac{\theta_{12}}{\tan\theta_{12}}\right)\theta_{13},\\
\frac{1}{2}\frac{\partial \psi}{\partial \theta_{13}}
& =\theta_{13}+\frac{\theta_{12}\theta_{13}}{\sin\theta_{12}\sin\theta_{13}}(\sin\theta_{13}\cos\theta_{12})\\
&\qquad+(\cos\theta_{23}-\cos\theta_{12}\cos\theta_{13})(\theta_{13}/\sin\theta_{13})'\cdot(\theta_{12}/\sin\theta_{12})\\
&=\theta_{13}+\frac{\theta_{12}\theta_{13}}{\sin\theta_{12}}\cos\theta_{12}
+\cos\Theta_{213}\left(1-\frac{\theta_{13}}{\tan\theta_{13}}\right)\theta_{12},\\
\frac{1}{2}\frac{\partial \psi}{\partial\theta_{23}}
&=\frac{\theta_{12}\theta_{13}}{\sin\theta_{12}\sin\theta_{13}}(-\sin\theta_{23}).
\end{aligned}
\end{equation*}
\end{proof}

Analogous estimates give a gradient bound for $F_{0}$, as defined in \eqref{one8}.  Since $F_{0}$ is symmetric in its arguments, it suffices to take a derivative with respect to one variable.  Fix $(\theta_{12},\theta_{13},\theta_{23})$, and let $0<\delta<1/100$.  For any $(x,y,z)$ such that $\abs{x-\theta_{12}}<\delta,\abs{y-\theta_{13}}<\delta,\abs{z-\theta_{23}}<\delta$ we have the following derivative bound

\begin{equation}\label{three7}
\begin{aligned}
\abs{\frac{\partial F_{0}}{\partial\theta_{12}}}
&\leq 6(\theta_{12}+\delta)
+\frac{2(\theta_{13}+\delta)(\theta_{23}+\delta)}{\sin(\theta_{13}+\delta)\sin(\theta_{23}+\delta)}
+2\frac{(\theta_{12}+\delta)(\theta_{13}+\delta)}{\sin(\theta_{13}+\delta)}\\
&+2\frac{(\theta_{12}+\delta)(\theta_{23}+\delta)}{\sin(\theta_{23}+\delta)}
+2(\theta_{13}+\theta_{23}+2\delta)\left(1-\frac{\theta_{12}+\delta}{\tan(\theta_{12}+\delta)}\right)
\end{aligned}
\end{equation}


The restrictions of \eqref{three1.1} and the bounds of \eqref{three3} and \eqref{three7} finally yield Theorem \ref{thm:main}.

\begin{proof}[Proof of Theorem \ref{thm:main}]
Let $P$ be a partition of the sphere $S^{2}$ that maximizes $\sum_{i=1}^{4}\vnormf{z_{i}}_{2}^{2}$.  Arguing by contradiction, we may assume that this partition exceeds the value of the spherical propeller partition, $9\pi^{2}/4$.  From Proposition \ref{prop1}  we may assume that $P$ is a collection of four nonempty spherical triangles along with four vertices such that $\det(v_{1},v_{2},v_{3})>0$.  Since $P$ is a global maximum of $F$, it satisfies three equations in the three variables $\{\theta_{12},\theta_{13},\theta_{23}\}$ defined by \eqref{three1}.  We may restrict these variables via \eqref{three1.1}.  We may also exclude neighborhoods of the two candidate maxima found in Section~\ref{sec:identities}, using Lemma \ref{lemma7}.  Combining the estimates of \eqref{three3},\eqref{three5} and \eqref{three7} gives sufficient information for an $\epsilon$-net traversal of the parameter space defined by \eqref{three1.1}.  This numerical computation finds that no zeros of the system \eqref{three1}, within the parameter space defined by \eqref{three1.1}, exceed the objective value of the propeller partition.  This computation contradicts the assumed existence of a maximal $P$ with four nonempty elements.  So, recalling \eqref{eq:def z_i}, $\sum_{i=1}^{4}\vnorm{z_{i}}_{2}^{2}\leq9\pi^{2}/4=2\pi^{3}(9/(8\pi))$.
\end{proof}

\begin{subsection}{Implementation}\label{sec:implement}

We describe the $\epsilon$-net traversal of Theorem \ref{thm:main} further below.  Before doing so, we describe some issues related to the numerical verification of inequalities.  In particular, we discuss how a computer can rigorously verify a mathematical statement.  Recall that a normal double precision floating point number in the IEEE-754-2008 standard is a number in base two of the form
$$\pm1.a_{1}a_{2}a_{3}\cdots a_{52}\times 2^{b_{1}b_{2}\cdots b_{11}-2^{10}},$$
where $a_{i}\in\{0,1\}$ and $b_{1}b_{2}\cdots b_{11}\in[2,2^{11}-1]\cap\Z$.  All of our computations are done with this discrete set of numbers, with zero included as well. Note that the distance between two consecutive numbers changes according to the values of the two numbers.  For example, the distance between $1$ and the next largest number is $2^{-52}$, whereas the distance between $2$ and the next largest number is $2^{-51}$.  (In general, if the number $2^{k}$ and the next largest number can be represented in this system, then their distance is $2^{-52+k}$.)  Due to this spacing between numbers, one must round the result of any arithmetic operation.  For example, the addition $1+2^{-54}$ evaluates to $1$, if we choose to round to the nearest double precision number.  For the sake of flexibility, our analysis below applies regardless of the rounding scheme that is chosen.


For $x,y\in[0,2^{500}]$ where $x,y$ are normal double precision numbers, let $\mbox{fl}(x+y)$ denote the normal double precision number that is evaluated in a computer with the IEEE-754-2008 standard.  It suffices for now to note that $(x+y)\leq(1+2\cdot2^{-52})\mbox{fl}(x+y)$.  In this case, we say that $(x+y)$ has multiplicative error bounded by $(1+2\cdot2^{-52})$.  When we speak of numerical error, we mean multiplicative error, unless otherwise stated.  Define
$$\varepsilon\colonequals 3\cdot2^{-52}.$$
To verify inequalities in our numerical computations, we perform operations for several floating point numbers and keep track of error bounds of the form $(x+y)\leq(1+2\cdot2^{-52})\mbox{fl}(x+y)$.  Since $(1+2\cdot2^{-52})^{N}\leq(1+N\cdot3\cdot 2^{-52})=(1+N\varepsilon)$ for $1\leq N\leq30000$, we can bound the multiplicative error of $N$ operations with a multiplicative term of the form $(1+N\varepsilon)$.  (Here and below we are overly conservative in our error estimates.)

We now present the formulas that are used in our computations, delaying further discussion of numerical errors.  Let $x,y>0$ be two normal double precision floating point numbers with $x>y$.  A loss of significance refers to a loss of binary significant digits in the computation of $\mbox{fl}(x-y)$.  One can show that $\mbox{fl}(x-y)$ loses at most $q$ and at least $p$ significant binary digits, where $2^{-q}\leq 1-(y/x)\leq 2^{-p}$. Due to potential loss of significance errors, we evaluate $\lambda$ as
\begin{equation}\label{one1.7prime}
\begin{split}
\lambda & = \left(\frac{\sin^{2}\theta_{23}}{\theta_{23}^{2}}+\frac{\sin^{2}\theta_{13}}{\theta_{13}^{2}}+\frac{\sin^{2}\theta_{12}}{\theta_{12}^{2}}\right)
+2\left(\frac{\sin\theta_{23}\sin\theta_{13}}{\theta_{23}\theta_{13}}\cos\theta_{12}\right.\\
&\qquad+\left.\frac{\sin\theta_{23}\sin\theta_{12}}{\theta_{23}\theta_{12}}\cos\theta_{13}
+\frac{\sin\theta_{12}\sin\theta_{13}}{\theta_{12}\theta_{13}}\cos\theta_{23}\right).
\end{split}
\end{equation}

Our numerical computation of inequalities consists of four steps.

\noindent{\bf (Step (i))} Suppose we want to check \eqref{three1.1} for a closed $\ell_{\infty}$ ball (i.e. a cube) of radius $\delta$ with center $(\theta_{12},\theta_{13},\theta_{23})\in[0,\pi]^{3}$.  To do so, we must verify the following perturbation of \eqref{three1.1}.
\begin{equation}\label{three1.1prime}
\begin{cases}
(\theta_{ij}+\theta_{j\ell}+3\delta)\geq\theta_{i\ell}\\
\theta_{12}+\theta_{23}+\theta_{13}\leq (2\pi+3\delta)\\
\theta_{12}\leq(2\pi/3+\delta)(1+10\varepsilon)\\
\theta_{13}+1/2,\theta_{23}+1/2\leq(\pi+\delta)(1+10\varepsilon)\\
(\delta+\max_{i,j\in\{1,2,3\},i\neq j}\{\theta_{ij}\})(1+7\varepsilon)\geq\pi\frac{3}{2\sqrt{14}}(1-5\varepsilon)\\
(\theta_{ij}+\delta)(1+7\varepsilon)\geq(1/10)(1-\varepsilon)\\
-(3\delta+\frac{\sqrt{99}}{10}(\sin\theta_{ij}\sin\theta_{j\ell}+2\delta))(1+1500\varepsilon)
\leq(\cos\theta_{i\ell}-\cos\theta_{ij}\cos\theta_{j\ell}+2000\varepsilon)\\
(\cos\theta_{i\ell}-\cos\theta_{ij}\cos\theta_{j\ell}-2000\varepsilon)
\leq(3\delta+\frac{\sqrt{99}}{10}(\sin\theta_{ij}\sin\theta_{j\ell}+2\delta))(1+1500\varepsilon)\\
(15\delta+2\cdot10^{4}\varepsilon+\lambda)\geq.03293
\end{cases}
\end{equation}
If a given inequality of \eqref{three1.1prime} does not hold, then the corresponding inequality of \eqref{three1.1} is violated for the entire $\ell_{\infty}$ ball of radius $\delta$.

\noindent{\bf (Step (ii))} If all of the inequalities of \eqref{three1.1prime} are satisfied, we then check if $(\theta_{12},\theta_{13},\theta_{23})$ lies in a neighborhood of two of our known candidate global optima.  Let $\vec{x}=(x_{1},x_{2},x_{3})\in[0,\pi]^{3}$.  Suppose $x_{1}=x_{2}=x_{3}=\arccos(-1/3)$ or $x_{1}=x_{2}=x_{3}=1.5379684120790425$.  From Lemma \ref{lemma7}, we know that if $\vec{\theta}=(\theta_{12},\theta_{13},\theta_{23})$ is close to $\vec{x}$, then $F$ is bounded by $(9/4)\pi^{2}$ in $\overline{B_{\infty}^{3}}((\theta_{12},\theta_{13},\theta_{23}),\delta)$.  That is, if the following is not satisfied, then $F$ is bounded as just stated.
\begin{equation}\label{five0}
\left[(d_{\ell_{2}}(\vec{\theta},\vec{x}))^{2}+40000\varepsilon
+2\delta\left(d_{\ell_{1}}(\vec{\theta},\vec{x})+6000\varepsilon+3\delta\right)\right](1+20\varepsilon)
>((1/100)^{2})(1-2\varepsilon).
\end{equation}
Here $d_{\ell_{2}}$ and $d_{\ell_{1}}$ denote the usual $\ell_{2}$ and $\ell_{1}$ metrics on $\R^{3}$.

\noindent{\bf (Step (iii))} If the verification of \eqref{five0} fails, we use the modulus estimates of Lemma \ref{lemma6} to see if $H$ is near zero at $\vec{\theta}$, with $H=(H_{1},H_{2},H_{3})$ defined after \eqref{three1}.  Let $\alpha,\beta,\rho,\sigma\in\{\pm1\}$.  We first check if the following expression has constant sign, over the $2^{4}$ such choices of $\alpha,\beta,\rho,\sigma$.
\begin{equation}\label{five1.0}
\begin{aligned}
&\left[
\cos\left(\sqrt{\theta_{12}^{2}+\theta_{13}^{2}+2\theta_{12}\theta_{13}
\left(\frac{\cos\theta_{23}-\cos\theta_{12}\cos\theta_{13}+2000\varepsilon\alpha}{\sin\theta_{12}\sin\theta_{13}}\right)}\,\right)+25000\varepsilon\beta
\right]
\\
&\quad\cdot\sqrt{\lambda-2\cdot10^{4}\varepsilon\sigma}+(\gamma_{1}+3000\varepsilon\rho).
\end{aligned}
\end{equation}
If \eqref{five1.0} is well-defined and its sign is constant, we then consider the following inequalities involving $H_{i}$ for $i=1$ and \eqref{three3}.
\begin{equation}\label{five1}
\begin{aligned}
\min_{\alpha,\beta,\rho,\sigma\in\{\pm1\}}
&\left|\left[
\cos\left(\sqrt{\theta_{12}^{2}+\theta_{13}^{2}+2\theta_{12}\theta_{13}
\left(\frac{\cos\theta_{23}-\cos\theta_{12}\cos\theta_{13}+2000\varepsilon\alpha}{\sin\theta_{12}\sin\theta_{13}}\right)}\,\right)+25000\varepsilon\beta
\right]\right.
\\
&\cdot\sqrt{\lambda-2\cdot10^{4}\varepsilon\sigma}+(\gamma_{1}+3000\varepsilon\rho)\Bigg|(1-3000\varepsilon)
>
(1+3000\varepsilon)G(\delta,\vec{\theta},\lambda-2\cdot10^{4}\varepsilon).
\end{aligned}
\end{equation}
\begin{equation}\label{five1.1}
\begin{aligned}
&\min_{\alpha,\beta,\rho,\sigma\in\{\pm1\}}
\left|\left[
\cos\left(\sqrt{\theta_{12}^{2}+\theta_{13}^{2}+2\theta_{12}\theta_{13}
\left(\frac{\cos\theta_{23}-\cos\theta_{12}\cos\theta_{13}+2000\varepsilon\alpha}{\sin\theta_{12}\sin\theta_{13}}\right)}\,\right)+25000\varepsilon\beta
\right]\right.
\\
&\quad\cdot\sqrt{\lambda-2\cdot10^{4}\varepsilon\sigma}+(\gamma_{1}+3000\varepsilon\rho)\Bigg|(1-3000\varepsilon)\\
&>
(1+5\varepsilon)\sqrt{15\delta}+\delta(1+3000\varepsilon)\left[\frac{7}{2}\right.
+3\bigg(2(\theta_{12}+\delta)+ 2(\theta_{13}+\delta)
\\
&\quad+\left.(\theta_{12}+\delta)(\theta_{13}+\delta)\left(\frac{2}{\sin(\theta_{12}+\delta)}+\frac{2}{\sin(\theta_{13}+\delta)}
+\frac{1}{\sin(\theta_{12}+\delta)\sin(\theta_{13}+\delta)}\right)
\right)\bigg].
\end{aligned}
\end{equation}
The inequalities for $H_{2}$ and $H_{3}$ are constructed by permuting cyclically the indices appearing above.  If at least one of \eqref{five1} and \eqref{five1.1} is satisfied (for at least one $H_{i}$), then $H_{i}$ is nonzero on $\overline{B_{\infty}^{3}}((\theta_{12},\theta_{13},\theta_{23}),\delta)$.

\noindent{\bf (Step (iv))}  If \eqref{five1} and \eqref{five1.1} are not satisfied for $\{H_{i}\}_{i=1}^{3}$, we finally check the value of $F_{0}$ directly, using a modification of \eqref{one8}.  Define
\begin{flalign*}
\cos\Theta_{ijk}^{\varepsilon}
&\colonequals\left(\frac{\cos\theta_{ik}-\cos\theta_{ij}\cos\theta_{jk}+2000\varepsilon}{\sin\theta_{ij}\sin\theta_{jk}}\right)\\
F_{0}^{\varepsilon}
&\colonequals3\left(\theta_{12}^{2}+\theta_{23}^{2}+\theta_{13}^{2}\right)
+2\cos(\Theta_{213}^{\varepsilon})\theta_{12}\theta_{13}
+2\cos(\Theta_{123}^{\varepsilon})\theta_{12}\theta_{23}
+2\cos(\Theta_{231}^{\varepsilon})\theta_{23}\theta_{13},
\\
G_{12}^{\varepsilon}(\vec{\theta},\delta)
&\colonequals6(\theta_{12}+\delta)
+\frac{4(\theta_{13}+\delta)(\theta_{23}+\delta)}{\sin(\theta_{13}+\delta)\sin(\theta_{23}+\delta)}
+2\frac{(\theta_{12}+\delta)(\theta_{13}+\delta)}{\sin(\theta_{13}+\delta)}
\\
&\quad+2\frac{(\theta_{12}+\delta)(\theta_{23}+\delta)}{\sin(\theta_{23}+\delta)}
+2(\theta_{13}+\theta_{23}+2\delta)\left[1-\left(\frac{\theta_{12}+\delta}{\tan(\theta_{12}+\delta)}(1+1200\varepsilon)+\varepsilon\right)\right].
\end{flalign*}
If the following is satisfied, then $F_{0}$ is bounded by $(9/4)\pi^{2}$ in $\overline{B_{\infty}^{3}}((\theta_{12},\theta_{13},\theta_{23}),\delta)$, and therefore $F$ satisfies the same bound for any global maximum of $F$ in this ball.
\begin{equation}\label{five2}
(9/4)\pi^{2}(1-5\varepsilon)-F_{0}^{\varepsilon}\cdot(1+5000\varepsilon)
>\delta(G_{12}^{\varepsilon}+G^{\varepsilon}_{13}+G^{\varepsilon}_{23})(1+10^{4}\varepsilon)+100\varepsilon.
\end{equation}

Given these estimates, we can now describe the $\epsilon$-net traversal of Theorem \ref{thm:main}.  We need a discrete subset $\mathcal{N}=\{x_{i}\}\subset[0,\pi]^{3}$ together with a set of radii $\delta_{i}>0$ such that
$$\bigcup_{x_{i}\in\mathcal{N}}\overline{B_{\infty}^{3}}(x_{i},\delta_{i})\supset[0,\pi]^{3},$$
and such that one of the following cases occurs, for $(x_{i},\delta_{i})$ with $x_{i}\in\mathcal{N}$.
\begin{itemize}
\item[(i)] One inequality of \eqref{three1.1prime} does not hold.
\item[(ii)] \eqref{five0} does not hold.
\item[(iii)] \eqref{five1} or \eqref{five1.1} holds, and \eqref{five1.0} has constant sign over all sign choices $\alpha,\beta,\sigma,\rho\in\{\pm1\}$.
\item[(iv)] \eqref{five2} holds.
\end{itemize}
In case (i), $\overline{B_{\infty}^{3}}(x_{i},\delta_{i})$ is either outside of the domain of \eqref{three1} or \eqref{three1} cannot be satisfied in $\overline{B_{\infty}^{3}}(x_{i},\delta_{i})$.  Case (ii) implies that $F=\sum_{i=1}^{4}\vnorm{z_{i}}_{2}^{2}$ is bounded by $(9/4)\pi^{2}$ on $\overline{B_{\infty}^{3}}(x_{i},\delta_{i})$.  Cases (iii) and (iv) imply that no global maximum of $F$ occurs in $\overline{B_{\infty}^{3}}(x_{i},\delta_{i})$.  So, the existence of $\mathcal{N}$ as defined above completes the proof of Theorem \ref{thm:main}.

To construct $\mathcal{N}$, we perform the following standard procedure.  Begin with the grid
$$\mathcal{N}_{0}\colonequals \frac{\pi}{100}([0,100]^{3}\cap\Z^{3})\subset[0,\pi]^{3}.$$
For each point $x\in\mathcal{N}_{0}$, check the inequalities discussed above with $\delta=\delta_{0}\colonequals\pi/200$.  If we discover that $x$ satisfies one of (i) through (iv), then eliminate $x$ from the set $\mathcal{N}_{0}$.  Next, include the element $(x,\delta)$ in the set $\mathcal{N}$.  Let $\mathcal{N}_{0}'$ be the set of all points $x\in\mathcal{N}_{0}$ that have not been included in $\mathcal{N}$.  We now ``refine'' the grid $\mathcal{N}_{0}'$ by a factor of $10$ and repeat the above procedure.  That is, we define
$$\mathcal{N}_{1}
\colonequals \left(\frac{\pi}{1000}([0,1000]^{3}\cap\Z^{3})\right)\cap
\left(\bigcup_{x_{i}\in\mathcal{N}_{0}'}\overline{B_{\infty}^{3}}(x_{i},\pi/200)\right).
$$
For each point $x\in\mathcal{N}_{1}$, we now check (i)-(iv) with $\delta=\delta_{1}\colonequals\pi/2000$, and so on.  For $j\geq1$, at the $j^{th}$ iteration of this algorithm, we define
$$\mathcal{N}_{j}
\colonequals \left(\frac{\pi}{10^{j+2}}([0,10^{j+2}]^{3}\cap\Z^{3})\right)\cap
\left(\bigcup_{x_{i}\in\mathcal{N}_{j-1}'}\overline{B_{\infty}^{3}}(x_{i},\pi/(2\cdot10^{j+1}))\right).
$$
We continue in this manner until the procedure terminates.  Upon termination at the $j^{th}$ step, $\mathcal{N}_{j}'=\emptyset$, so $\mathcal{N}$ satisfies our desired properties and proves Theorem \ref{thm:main}.

We now return to our discussion of numerical error.  To keep track of rounding errors resulting from subtractions, we use absolute error where appropriate and add the $\varepsilon$ terms that correspond to absolute numerical errors.  These additive terms are scaled according to the absolute value of the numbers being subtracted (due to the unequal spacing of floating point numbers described above).

In the description and analysis of (i) through (iv), $\delta$ terms come from modulus of continuity estimates, and $\varepsilon$ terms account for numerical error.  The latter is discussed further below.  The first two inequalities of \eqref{three1.1prime} contain no $\varepsilon$ terms, since they can be checked in exact arithmetic.  Let $x\in[0,\pi-1/2+2\times10^{-2}]$.  (We make this restriction due to the fourth inequality of \eqref{three1.1prime}.)  Let us discuss the numerical computation of the sine function.  If $x>\pi/2$, we replace $x$ with $\pi-x$, and we evaluate $\sin(\pi-x)$.  Thus, we only need to consider $x\in[0,\pi/2]$.  For such $x$, we first use a $19^{th}$ degree (ten term) Taylor expansion around zero, summing terms in ascending degree.  In perfect arithmetic, we get a multiplicative error of one plus
$$\abs{\frac{x^{21}}{(21!)\sin(x)}}
=\abs{\frac{x}{\sin(x)}\frac{x^{20}}{(21!)}}
\leq\frac{\pi}{2}\frac{(\pi/2)^{20}}{21!}<2.6\times 10^{-16}
<\frac{\varepsilon}{2}.$$
Since $x\in[0,\pi/2]$, there are no loss of significance errors in the summation of the series.  The addition of each term of the Taylor series involves one operation, and there are ten such terms.  Moreover, the $n^{th}$ term contains $n+1$ operations.  In total, one evaluation therefore has less than $210$ operations.  So, very conservatively, the multiplicative error of $\sin(x)$ is bounded by $(1+250\varepsilon)$, if $x$ is represented exactly.  If the argument $x$ has multiplicative error $(1+k\varepsilon)$, then this same analysis shows that the multiplicative error of $\sin(x)$ is bounded by $(1+(250+100k)\varepsilon)$, since the term $x$ appears $100$ times in the formula for $\sin$.  For example, since $\theta_{ij}$ has multiplicative error $(1+3\varepsilon)$, we conservatively bound the error of $\sin\theta_{ij}$ by $(1+550\varepsilon)$.

To compute $\cos(x)$ for $x\in[0,\pi-1/2+2\times10^{-2}]$, we instead compute $\sin(\pi/2-x)$ with the procedure outlined above.  Due to our representation of the points $\theta_{ij}$, we compute $\pi/2-x$ as a rational multiple of $\pi$, so no loss of significance errors occurs in this subtraction.  Hence, the error estimates for $\cos\theta_{ij}$ are exactly the ones already used for $\sin\theta_{ij}$.  However, when evaluating the cosine of more complicated expressions, we need to revert to absolute errors.

We will also need to use the square root function.  Here we use the usual Newton's Method iteration, which is well known to have multiplicative error bounded by $(1+\varepsilon)$.

We now describe the error terms appearing in \eqref{three1.1prime}, which are perturbations of \eqref{three1.1}.  The guiding principle is that we want to perturb each inequality so that it is more easily satisfied.  To see why this is done, suppose an inequality of \eqref{three1.1prime} is not satisfied.  Then, a modulus of continuity estimate and a numerical error estimate, used in the definition of \eqref{three1.1prime}, show that the corresponding inequality of \eqref{three1.1} is not satisfied in an $\ell_{\infty}$ ball of radius $\delta$.  For example, contrast the first inequality of \eqref{three1.1} with that of \eqref{three1.1prime}.  We need only add an additional $\delta$ term for each $\theta_{ij}$ term, since the modulus of continuity of the function $(x,y,z)\mapsto x+y+z$, viewed as a mapping from $(\R^{3},\vnorm{\cdot}_{\infty})\to\R$, is exactly $3\delta$.

So, if the inequality of \eqref{three1.1prime} is violated at a given point $\vec{\theta}=(\theta_{12},\theta_{13},\theta_{23})$, then the modulus of continuity bound shows that the corresponding inequality of \eqref{three1.1} is violated in the ball $\overline{B_{\infty}^{3}}(\vec{\theta},\delta)$.  The same considerations apply to the second inequality of \eqref{three1.1prime}.  Note that the first and second inequality are checked in exact arithmetic in our computer implementation, so we do not need to take into account numerical error.  As we have mentioned above, our $\theta_{ij}$ and $\delta$ are represented as numbers of the form $\pi(a/c)$ where $a$ and $c$ are integers, and it suffices to perform additions of the form $\pi((a+b)/c)$, and the integer addition $a+b$ can be performed with no error.

The third inequality of \eqref{three1.1prime} only requires a modulus of continuity term of $\delta$.  However, we must now take into account numerical errors.  We count ten numerical operations on positive numbers (including the error in representing $\pi$ as a floating point number three times), so in the worst case, the multiplicative error is bounded by $(1+10\varepsilon)$.  Similar considerations apply in the next three inequalities.  In the seventh and eighth inequalities, we use our analysis of the sine function, and recall that the multiplicative error for these operations is always bounded by $1+550\varepsilon$.  So, in total, the evaluation of the term $-(3\delta+\frac{\sqrt{99}}{10}(\sin\theta_{ij}\sin\theta_{j\ell}+2\delta))$
has multiplicative error bounded by $1+1500\varepsilon$.  Here we have also included $\delta$ terms corresponding to the moduli of continuity, as before.

Unfortunately, we cannot use multiplicative error to deal with the cosine terms, since loss of significance can occur in the evaluation of $(\cos\theta_{i\ell}-\cos\theta_{ij}\cos\theta_{j\ell})$.  However, since the cosine terms are bounded by $1$ in absolute value, the absolute error from each cosine evaluation is bounded by $550\varepsilon$.  Therefore the total absolute error from the right hand side of the seventh inequality is bounded by $2000\varepsilon$.  This bound applies similarly to the penultimate inequality of \eqref{three1.1prime}.  The final inequality of \eqref{three1.1prime} is similar to the previous cases.  To get the $\delta$ terms, we examine \eqref{one1.7prime} and use that $\abs{(\sin(x)/x)'}<1/2$ and $\abs{\sin(x)/x}\leq1$ for $x\in[0,\pi]$.  To get the $\epsilon$ terms, note that each term $\sin^{2}\theta_{ij}/\theta_{ij}^{2}$ has absolute error bounded by $1200\varepsilon$.  And each term $\sin\theta_{ij}\sin\theta_{jk}\cos\theta_{ik}/\theta_{ij}\theta_{jk}$ has absolute error bounded by $1700\varepsilon$.  The total absolute error of $\lambda$ is therefore bounded by $2\cdot10^{4}\varepsilon$.

We now treat \eqref{five0}.  First, recall that \eqref{one5.1} yielded our two known zeros of the system \eqref{three1}.  The first zero $\theta_{ij}=\theta=\arccos(-1/3)$ is known exactly (up to multiplicative error $(1\pm\varepsilon)$).  However, we can only find the second zero computationally.  For the purposes of our computations, we therefore show that this second zero is contained in an interval of the form $\theta_{ij}=\theta\in[x-2000\varepsilon,x+2000\varepsilon]$ where $x=1.5379684120790425$.  To see this, we take \eqref{one5.1} and write it as a quotient of the left side over the right side, and we wish to find where the quotient is $1$.  For $\theta\in[x-2000\varepsilon,x+2000\varepsilon]$, the multiplicative error involved in calculating this quotient is bounded by $1+3000\varepsilon$.  This follows by our analysis of the sine and cosine functions, and since this particular region of parameters avoids loss of significance errors.  We can then check numerically that, for $\theta=x-2000\varepsilon$, this quotient, multiplied by $(1+3000\varepsilon)$ is less than $1$, while for $\theta=x+2000\varepsilon$, the quotient, multiplied by $(1-3000\varepsilon)$, is greater than $1$.  So, by our numerical analysis and continuity of the given function, a solution of \eqref{one5.1} must lie in the interval $\theta\in[x-2000\varepsilon,x+2000\varepsilon]$.

We can now confront the errors of \eqref{five0}.  Once again, the guiding principle is to add error terms so that the inequality is more easily satisfied.  First, the term $d_{\ell_{2}}(\vec{\theta},\vec{x})^{2}$ is the sum of three squared differences.  Due to numerical error in the operations, and the uncertainty of the zero $\theta_{ij}\approx1.5379684120790425$, each squared difference has an absolute error bounded by $2\pi(2000\varepsilon)+100\varepsilon$.  (The factor of $2\pi$ comes from the squaring of the difference, and since $\abs{x}\leq\pi$.) So, in total, the absolute error from the term $d_{\ell_{2}}(\vec{\theta},\vec{x})^{2}$ is bounded by $40000\varepsilon$.  Since we want to check this inequality in an $\ell_{\infty}$ ball of radius $\delta$, we use the modulus of continuity of $d_{\ell_{2}}^{2}$, which is $2\delta d_{\ell_{1}}$.  We therefore add the term $2\delta d_{\ell_{1}}(\vec{\theta},\vec{x})$, and again add terms to this quantity to take into account its own absolute error and its own modulus of continuity.  The combined effect of summing the error terms then accumulates a multiplicative error bounded by $1+20\varepsilon$.

We now explain the error terms of \eqref{five1}.  From the analysis of the cosine, we know that $\cos\theta_{ij}$ for $\theta_{ij}\in[0,\pi-1/2+2\times10^{-2}]$ has an absolute error bounded by $550\varepsilon$.  (This follows since the multiplicative error is bounded by $(1+550\varepsilon)$, and $\abs{\cos(\cdot)}\leq1$.)  So, the expression $\cos\theta_{23}-\cos\theta_{12}\cos\theta_{13}$ has absolute error bounded by $2000\varepsilon$.  The remaining operations under the square root of \eqref{five1} all have multiplicative error bounded by $(1+1200\varepsilon)$.

When the cosine of the square root term is evaluated, we need to again take into account absolute errors (since, unlike in the $\cos\theta_{ij}$ terms, we must now account for loss of significance).  Since the argument of the cosine term is bounded by $\pi$, we can bound the absolute error of the square root term by $4000\varepsilon$.  To take the cosine of the square root term, we need to return to our Taylor series analysis. Note that our Taylor series has Lipschitz constant bounded by $6$.  (This is a conservative bound that does not account for cancelations in the series summation.)  Therefore, the total absolute error of the cosine of the square root in \eqref{five1} is bounded by $25000\varepsilon$.  From before, the absolute error in the computation of $\lambda$ in \eqref{one1.7prime} is bounded by $2\cdot10^{4}\varepsilon$.  Similarly, the computation of $\gamma_{1}$ has absolute error bounded by $3000\varepsilon$.

Temporarily ignoring any numerical errors, note that we are computing a bound for the absolute value of the derivative of a function.  So, in order to conclude anything from \eqref{five1}, the sign of \eqref{five1.0} must be constant over all possible numerical errors.  (If not, it may be that \eqref{five1.0} has an actual value of zero.)  Given that this sign is constant, we then take the minimum over all possible numerical errors.  It then suffices to take the minimum over the extreme points of a suitable rectangle, due to monotonicity of \eqref{five1} with respect to the error terms.  Note that we take the minimum since we wish to make the inequality more difficult to be satisfied.  Finally, the addition of all terms on the left side of the inequality incurs a multiplicative error bounded by $1+3000\varepsilon$, and the addition of the terms on the right side has multiplicative error bounded by $1+3000\varepsilon$.  A similar analysis applies to \eqref{five1.1}.

We now briefly mention the analysis of \eqref{five2}, which is routine at this stage.  The principle that we apply here is to make \eqref{five2} more difficult to be satisfied,  so we make its right side larger and its left side smaller.  The multiplicative error of the $F_{0}^{\varepsilon}$ term is bounded by $1+5000\varepsilon$, since each $\cos\Theta_{ijk}^{\varepsilon}$ term has multiplicative error bounded by $1+1200\varepsilon$.  Each $G_{ij}^{\varepsilon}$ term has multiplicative error bounded by $1+2500\varepsilon$, so the right side of \eqref{five2} has multiplicative error bounded by $1+10^{4}\varepsilon$.  A $100\varepsilon$ term is added to the right side of \eqref{five2} to account for the error of the subtraction on the left side.  The $1+1200\varepsilon$ term and the $\varepsilon$ term in the definition of $G_{ij}^{\varepsilon}$ account for the errors of the tangent function, and the loss of significance in the subtraction that occurs in the square brackets.


\end{subsection}

\section{Discrete Harmonic Maps into the Sphere}
\label{secconc}



We now summarize and reinterpret the results of Section \ref{sec:identities}.  Recall that we defined $G(P)=\sum_{1\leq i<j\leq4}\theta_{ij}^{2}$ as the sum of the squared lengths of the edges of our partition $P$.  We then saw from Propositions \ref{prop8} and \ref{prop1} that all maxima of $F$ are also critical points of $G$.  We also found that critical points of $G$ satisfy \eqref{one10prime}.  Given a partition $P$, think of its edges as rubber bands.  (Recall that a rubber band has energy given by its length squared, and it exerts a force proportional to its length.)  The critical points of $G$ are given by configurations of rubber bands (arranged like the edges of Figure \ref{fig1}(b)) that are at equilibrium with respect to the forces exerted by the rubber bands.  With this interpretation, \eqref{one10prime} becomes intuitively clear.  This equation says that, at a fixed vertex, the (tangential) forces of the incident rubber bands are in equilibrium.  (For the Euclidean version of this, see e.g. \cite{connelly82} and \cite{linial88}.)

These considerations lead us naturally to the following generalization.  Let $(V,E)$ be a (finite) graph.  Let $u\colon(V,E)\to S^{2}$ be an embedding of $(V,E)$ into $S^{2}$ ($u$ is sometimes called a map).  To be consistent with the above, let $u(V)=\{v_{i}\}$, and label $\mbox{Length}(u(\overline{v_{i}v_{j}}))=\theta_{ij}$ where $\overline{v_{i}v_{j}}$ denotes a spherical geodesic connecting $v_{i}$ to $v_{j}$.  Then a \textbf{harmonic map} $u$ is a critical point of the energy functional $\sum_{\{i,j\}\in E}\theta_{ij}^{2}$.

%
%
%

\bibliographystyle{abbrv}
\bibliography{propR3}

\end{document}